\theoremstyle{plain}
\newtheorem{assumption}{Assumption}
\newtheorem{theorem}{Theorem}
\newtheorem{algorithm}{Algorithm}
\newtheorem{lemma}{Lemma}
\theoremstyle{remark}
\newtheorem{remark}{Remark}
\renewcommand{\S}{\mathrm{S}}
\newcommand{\C}{\mathrm{C}}
\newcommand{\E}{\mathrm{E}}
\newcommand{\R}{\mathbb{R}}
\newcommand{\Y}{\mathrm{Y}}
\newcommand{\qedd}{\hfill{\tiny \ensuremath{\blacksquare} }}
\newcommand{\wpas}{\ensuremath{\text{w.p.a.1}\,\,}}
\newcommand{\wpa}{\ensuremath{\text{w.p.a.1}}}
\newcommand{\supp}{\operatorname{supp}}
\newcommand{\argmax}{\operatorname*{arg\,max}}
\begin{document}
\doublespacing

\title{Higher-Order Asset
Pricing Factors via Forward Selection Fama-MacBeth Regression\thanks{We are grateful for the comments by Gavin Feng, Stefano Giglio, Narayana Kocherlakota, Whitney Newey, Robert Novy-Marx, Allan Timmermann, and Dacheng Xiu.}}
\author{Nicola Borri\thanks{Department of Economics and Finance, LUISS University.} \and
Denis Chetverikov\thanks{Department of Economics, UCLA.} \and
Yukun Liu\thanks{Simon Business School, University of Rochester.} \and
Aleh Tsyvinski\thanks{Department of Economics, Yale University.}}
\date{March 4, 2026}
\maketitle

\bigskip

\begin{abstract}
We show that the higher-order terms and interactions of the common sparse linear factors are significantly priced in the cross-section of equity returns. A higher-order model with only a small number of selected higher-order terms from six widely used factors outperforms traditional benchmarks both in-sample and out-of-sample. It also substantially reduces the alphas of the extensive factor zoo, suggesting that the pricing power of many zoo factors is attributable to their exposure to higher-order terms of common linear factors. We identify and rank the most relevant higher-order terms by developing a forward selection Fama-MacBeth procedure.
\end{abstract}

\noindent\textbf{Keywords:} high-dimensional factor models, risk premium, stochastic discount factor, forward selection Fama-MacBeth regression procedure, factor zoo

\bigskip
%%%%%%%%%%%%%%%%%%%%%%%%%%%%%%%%%%%%%%%%%%%%%%%%%%%%%%%%%%%%%%%%%%%%%%%%%
%%%% Main text entry area:
%%%%%%%%%%%%%%%%%%%%%%%%%%%%%%%%%%%%%%%%%%%%%%%%%%%%%%%%%%%%%%%%%%%%%%%%%

\section{Introduction}

The asset pricing literature has proposed hundreds of candidate factors
to explain the cross-section of asset returns. \citet{cochrane2011presidential},
in his presidential address, refers to this proliferation of candidate
factors as the ``factor zoo.'' This proliferation is itself symptomatic of the
fact that sparse linear factor models, such as those introduced by \citet{fama2015five}
and \citet{Carhart1997}, struggle to fully price the cross-section and hence may not fully span the stochastic discount factor (SDF). In this paper, we show that the interactions of the higher-order terms of the common sparse linear factors are significantly priced in the cross-section of equity returns and capture a significant fraction of the factor zoo. We do so by determining the important interactions via a new factor-selection procedure, which we call the \emph{forward selection Fama-MacBeth regression procedure} (FS-FMB).

The prevalent focus in the literature on sparse linear factor models is primarily driven by considerations of parsimony and is motivated by the first-order Taylor expansion of the SDF. This observation naturally raises the question of whether higher-order terms and interactions of common sparse linear factors, motivated by a higher-order Taylor expansion, also play an important role in pricing the cross-section of expected returns. In this paper, we show that a small and interpretable subset does, by selecting and ranking the higher-order terms. 

A main challenge in studying such relevant terms is dimensionality. Consider, for example, the standard 6-factor model that includes the Fama-French 5 factors (\citeauthor{fama2015five}, \citeyear{fama2015five}) and the momentum factor (\citeauthor{Carhart1997}, \citeyear{Carhart1997}). The higher-order interactions up to degree 3 (4) already have 63 (114) factors. Therefore, we develop a forward selection procedure to choose factors that explain
the stochastic discount factor that is suitable for the case when the number of potentially important factors is large. The method adds factors one by one into the set of selected factors, each time maximizing
the $R^{2}$ of the cross-sectional regression (second step) of the
Fama-MacBeth procedure, thus naturally ranking the importance of the selected factors. After the set of factors is selected, we estimate the SDF loadings by running the cross-sectional regression of sample average returns on sample covariances between returns and selected factors. We derive the rate of convergence of the proposed estimator and also propose a method to debias it, leading to simple inference procedures. %We refer to our method as the forward selection Fama-MacBeth regression procedure.

We show that the performance of the common factors is
substantially improved with their higher-order interactions. We start with
6 of the most common factors in the asset pricing literature. These
are the Fama-French 5 factors of \citet{fama2015five}
and the momentum factor of \citet{jegadeesh1993returns}, and we refer
to this model as the FF5M. We consider higher-orders as well as interactions
of these 6 factors up to degree 3, leading to 57 candidate higher-order
factors. For the test assets, we use a large cross-section of equities
including 484 characteristic-managed portfolios from \citet{kozak2020shrinking}.
We apply our FS-FMB to select the higher-order factors that are most
important in pricing the cross-section of these 484 asset returns.
The procedure selects 7 higher-order factors, including 2 second degree
higher-orders and 5 interactions.\footnote{The FS-FMB procedure requires the choice of a stopping threshold for the second-stage $R^2$ improvement. In our baseline, we set it to be 1\%. In the Appendix, we show that we can use a cross-validation method to choose the optimal stopping threshold, which generates qualitatively and quantitatively similar results.} The inclusion of the 5 interaction terms shows the importance of co-dependence of common asset pricing factors in pricing the cross-section of equity returns. We show that the inclusion of the higher-order factors leads to substantial improvement in pricing the
cross-section of asset returns -- the adjusted cross-sectional $R^{2}$
increases from 31.2\% for the baseline FF5M model to 59\% for the
selected higher-order factor model. Moreover, the intercept alpha
of the higher-order factor model is not statistically significant at the 5\% level. 

We then examine the SDF loadings of the higher-order factors selected
by the FS-FMB using both the non-debiased estimates based on the standard Fama-MacBeth (FMB) two-pass regression and the debiased estimates. Estimates from
the standard FMB two-pass procedure are likely biased because of the
omitted variable bias (see, e.g., \citet{chernozhukov2018double}
and \citet{giglio2021asset}). The directions of the point estimates
are the same and the magnitudes are similar across the two procedures. The
estimates based on the non-debiased procedure are all significant at
the 5\% level, while the estimates based on the debiased procedure are
all significant at the 10\% level with 5 of the 7 SDF loading estimates
being significant at the 5\% level. Overall, the results show that
the higher-order factors are important for empirically capturing the
SDF and significantly price assets in the cross-section.

In our baseline analysis, we demonstrate that higher-order factors
are important components of the SDF and price cross-sectional asset
returns in-sample. We further provide evidence of the out-of-sample
performance of the selected higher-order factor model, where we conduct
two types of out-of-sample tests. Firstly, we present an out-of-sample
test in the asset space. We select higher-order factors and estimate
SDF loadings from the training assets and then evaluate the out-of-sample
performance using the remaining assets. Our results show that the
selected higher-order factors and the performance of the higher-order
factor model are stable in this out-of-sample exercise.

Secondly, we conduct an out-of-sample test in the time series, where
we split the sample into two equal subsamples based on the time series.
We use the first half of the sample as the training sample and the
second half of the sample as the test sample. Then, we estimate the
model in the training sample and evaluate the out-of-sample performance
in the test sample. Compared with the benchmark factor models including
the CAPM, the FF3 factor model, the FF5 factor model, and the FF5M
factor model, the out-of-sample $R^{2}$ of the higher-order factor
model is significantly larger.

Next, we examine whether the selected higher-order factor model can
reduce the pricing power of the factors in the factor zoo. We employ
two empirical procedures. First, we perform standard FMB regressions
to test the significance of the SDF loading of each factor in the
factor zoo, controlling for the higher-order factor model. Second,
we construct factor mimicking portfolios for the seven higher-order
factors and conduct time-series asset pricing tests on the zoo factors.

Under the first approach, we estimate each zoo factor's SDF loading
while controlling for our higher-order factor model. We find that
only 7 out of the 148 zoo factors from \citet{jensen2023there} remain
significant at the 5\% level. The second
approach is based on factor-mimicking portfolios. The benefit is
that we are able to convert the non-tradable higher-order factors
to tradable factors but the projection inevitably loses information
from the original factors. The results using the second approach confirm
that the higher-order factor model significantly reduces the pricing
power of the zoo factors. Overall, these results demonstrate that
the higher-order factors selected by the forward selection Fama-MacBeth procedure substantially reduce the zoo factors, although the linear benchmark
factors cannot.

Because our FS-FMB procedure directly selects and ranks the important higher-order factors while preserving their interpretability, one of its key advantages is that it allows us to connect these factors to underlying economic mechanisms. To this end, we first investigate the economic content of the selected higher-order factors by examining their exposures to a broad cross-section of macroeconomic variables. While unconditional regressions reveal some baseline sensitivity, especially to financial uncertainty, we find that the economic content of these factors is concentrated in extreme economic states. Using quantile regressions, we show that both the explanatory power and the sensitivity to specific macroeconomic variables are markedly amplified in the tails of the distribution. For example, the pseudo-$R^2$ for the market factor squared reaches 70\% at the 90th percentile, and variables related to intermediary capital constraints, which are often insignificant in unconditional regressions, become strongly significant in the tails. These results suggest that the higher-order factors capture non-linear exposure to macroeconomic tail risks, particularly those associated with financial intermediary distress. This provides a potential economic rationale, related to intermediary asset pricing (e.g., \citealp{he2013intermediary,brunnermeier2014macroeconomic}), for why higher-order terms of common factors are priced and why they account for a substantial fraction of the broader factor zoo.

Second, we study how these higher-order factors vary with broader macroeconomic and business-cycle conditions by examining their correlations with two key indicators, U.S. equity market returns and the CBOE VIX index, both in the full sample and during NBER recession periods. For the first two higher-order factors (i.e., SMB2 and SMB2{*}Mom), the magnitude of the correlations with U.S. equity market returns is substantially larger during recessions than in the full sample. For the last four higher-order factors (i.e., Mkt-RF2, Mkt-RF2{*}RMW, Mkt-RF{*}SMB, and HML2{*}Mkt-RF), the correlations with the VIX index are large in magnitude, especially during recessions. Overall, these results show that the selected higher-order factors are closely linked to macroeconomic risk over the business cycle, with particularly strong exposure during downturns.

Additionally, we provide robustness results for our empirical analyses. First, we show that the performance of the higher-order factor model is extremely unlikely to be driven by randomness by simulating white-noise factors. Second, we use the selected higher-order factor model to account for an alternative factor zoo of \citet{ChenZimmermann2021} and reach similar conclusions. Third, we experiment with alternative sets of candidate higher-order factors: (1)  include only higher-orders, (2) only interactions, (3) up to degree 2, and (4) up to degree 4. We find that both higher-orders and interactions are important components of the SDF, especially interactions. Moreover, we show that the selected higher-order factor model up to degree 2 significantly underperforms, while the selected higher-order factor model up to degree 4 has similar
performance to our baseline model up to degree 3. 

The paper contributes to the factor zoo literature. \citet{cochrane2011presidential} identifies the factor zoo phenomenon and calls for research to identify factors that are actually important for the SDF. \citet{harvey2016and} argue that, based on a multiple-hypothesis-testing framework, many zoo factors are not significant. \citet{harvey2021lucky} provide a bootstrap model selection framework that accounts for multiple hypothesis testing. Relatedly, \citet{bryzgalova2023bayesian} and \citet{bryzgalova2025forest} develop a Bayesian framework and a decision-tree framework for evaluating high-dimensional factor models, respectively. \citet{mclean2016does} show that many zoo factors suffer from post-publication bias. \citet{green2017characteristics} use Fama-MacBeth regressions to evaluate 94 firm characteristics and argue that fewer than ten provide significant independent information about expected returns. Our contribution to this literature is to show that a substantial fraction of the factor zoo can be explained by higher-order terms and interactions of common sparse linear factors.

Our paper is methodologically related to \citet{feng2020taming}, but it addresses a different economic question. \citet{feng2020taming} study a factor-screening problem: whether a proposed factor contributes incremental pricing information conditional on a \emph{large} set of existing factors. We study, instead, whether much of the pricing information in the factor zoo can be accounted for by a \emph{small} higher-order factor model built from a few common linear factors. Put differently, we ask whether the proliferation of factors largely reflects higher-order manifestations of a few common factors. Our forward selection Fama-MacBeth method is designed for this economic objective. Unlike the Lasso-based method in \citet{feng2020taming}, which solves a penalized least-squares problem and induces sparsity through coefficient shrinkage, our procedure sequentially selects factors that maximize incremental cross-sectional pricing fit, thereby providing a natural ordered ranking of the importance of higher-order terms. This distinction is important both economically and algorithmically. First, forward selection does not shrink the risk-price estimates of selected factors, thereby preserving their economic interpretation. Second, the sequential nature of forward selection allows us to rank higher-order terms by their incremental contribution to pricing power. This ranking itself has economic content: for example, second-order size effects enter before cubic terms, and interactions dominate pure higher powers. Lasso does not provide such an ordered decomposition because variables can both enter and exit along the regularization path. Finally, our main theoretical contribution is to establish high-dimensional asymptotic theory for the forward selection Fama-MacBeth procedure. In particular, we derive the {\em rate of convergence} of the resulting estimator of SDF loadings in a setting where the number of candidate factors can be large relative to the time dimension. To the best of our knowledge, no such results are available for forward selection in the multi-step Fama-MacBeth framework. This theoretical contribution is conceptually distinct from \citet{feng2020taming}, whose focus is on valid {\em inference} for testing the contribution of new factors using an $\ell_1$-regularized (Lasso-based) estimator.

Our paper also relates to a vast literature on factor models in asset pricing. A major part of the literature focuses on linear factor models, such as the characteristic-based models (e.g., \citeauthor{fama1993common} \citeyearpar{fama1993common,fama1996multifactor,fama2015five}; \citeauthor{Carhart1997}, \citeyear{Carhart1997};
\citeauthor{koijen2017cross}, \citeyear{koijen2017cross}) and statistical-based models (e.g.,
\citeauthor{connor1986performance}, \citeyear{connor1986performance}; \citealp{sandulescu2021model}; \citealp{korsaye2023global}). A small subset of papers emphasizes
non-linearity in explaining asset returns, such as \citet{bansal1993no},
\citet{harvey2000conditional}, and \citet{dittmar2002nonlinear}.
We contribute to this literature by showing that higher-orders and higher-order interactions of common linear factors are important components of the SDF and price the cross-sectional asset returns.

Finally, our paper relates to the machine learning literature. \citet{gu2020empirical}
show that machine learning methods are better in predicting future
stock returns out-of-sample than OLS methods. \citet{freyberger2020dissecting}
propose a group LASSO method to choose firm characteristics that are
most related to future returns. \citet{kozak2020shrinking} show that
machine learning methods applied to the factor zoo can better approximate
the SDF. \citet{lettau2020factors} propose a method for estimating
latent asset pricing factors that fit both the time-series and the
cross-section of asset returns. \citet{giglio2021asset} propose a
method to estimate risk premia of factors even if there are missing
factors. \citet{Borri2024factor} show that a new non-linear single-factor
asset pricing model motivated by the Kolmogorov-Arnold representation
theorem is able to price assets from different asset classes. In this
paper, we propose a forward selection method to estimate a high-dimensional
stochastic discount factor model, isolating the most relevant
higher-order factors and showing their importance.

The rest of the paper is organized as follows. Section \ref{sec:Methodology}
discusses the forward selection Fama-MacBeth procedure. Section
\ref{sec:Empirical-Analysis} presents the data and the main empirical results.
Section \ref{sec:Robustness} presents additional robustness
analyses. The Online Appendix contains technical derivations and further empirical results.

\section{Methodology}\label{sec:Methodology}

In this section, we set up the model. We introduce its main components -- returns, factors, and the stochastic discount factor -- and explain the relationship between them. We then present a forward selection Fama-MacBeth regression procedure as a method for selecting the factors that drive the stochastic discount factor. Using this selection procedure, we propose an estimator of the stochastic discount factor loadings. As we will see, the proposed estimator is biased. We therefore also explain how to debias it and how to perform inference on the stochastic discount factor loadings using the debiased estimator.

\textbf{Notation.} For any integer $p\geq 1$, we denote $[p]=\{1,\dots,p\}$ and $\mathbf 0_p = (0,\dots,0)^{\top}\in\R^p$. For any set $\S\subset[p]$, we define $\S^c = [p]\setminus\S$, and we denote by $|\S|$ the number of elements in $\S$. For any vector $x = (x_1,\dots,x_p)^{\top}$ and any set $\S\subset[p]$, we use $x_{\S} = (x_i)^{\top}_{i\in\S}$ to denote the sub-vector of $x$ consisting of all its components with indices in $\S$. Similarly, for any vector $x_t = (x_{t,1},\dots,x_{t,p})^{\top}$ with an extra index $t$ (or some other index), we denote by $x_{t,\S} = (x_{t,i})_{i\in\S}^{\top}$ the corresponding sub-vector. For any matrix $A = (A_{i,j})_{i\in[N],j\in[p]}$ and any set $\S\subset[p]$, we use $A_{\S} = (A_{i,j})_{i\in[N],j\in\S}$ to denote the sub-matrix of $A$ consisting of all its columns with indices in $\S$. For any vector $x = (x_1,\dots,x_p)^{\top}\in\R^p$, we let $\|x\|_0 = \sum_{j\in[p]} \mathbf I\{x_j\neq 0\}$ be its $\ell_0$-``norm'' and let $\|x\|_1 = \sum_{j\in[p]} |x_j|$, $\|x\|_2 = (\sum_{j\in[p]} |x_j|^2)^{1/2}$, and $\|x\|_{\infty} = \max_{j\in[p]}|x_j|$ be its $\ell_1$-, $\ell_2$-, and $\ell_{\infty}$-norms. For any matrix $A = (A_{i,j})_{i\in[N],j\in[p]}\in\R^{N\times p}$, we let $\|A\|_2 = \sup_{v\in\R^p\colon \|v\|_2=1} \|Av\|_2$ and $\|A\|_{\infty,1} = \max_{i\in[N]}\sum_{j\in[p]} |A_{i,j}|$ be its spectral and $\ell_{\infty,1}-$norms. For any symmetric matrix $A\in\R^{p\times p}$, we let $\lambda_{\min}(A)$ and $\lambda_{\max}(A)$ be its smallest and largest eigenvalues. We use the abbreviation ``\wpa'' to denote ``with probability approaching one.'' For non-random sequences $\{a_{T}\}_{T\geq1}$ and $\{b_{T}\}_{T\geq 1}$ in $\R$, we write $a_{T}\lesssim b_{T}$ if there exists a bounded non-random sequence $\{C_{T}\}_{T\geq 1}$ such that $a_{T}= C_{T} b_{T}$ for all $T\geq 1$. For random sequences $\{a_{T}\}_{T\geq 1}$ and $\{b_{T}\}_{T\geq 1}$ in $\R$, we write $a_{T} \lesssim_P b_{T}$ if there exists a bounded in probability random sequence $\{C_{T}\}_{T\geq1}$ such that $a_{T} = C_{T} b_{T}$ for all $T\geq 1$.

%In this section, we setup the model, introducing its main components: returns, factors, and the stochastic discount factor. We then propose a forward selection Fama-MacBeth regression as a method to choose factors governing the stochastic discount factor. Based on this selection procedure, we discuss estimation of the stochastic discount factor loadings. As we will see, the proposed estimator is biased, and so we also discuss a method for debiasing the proposed estimator and explain how the inference on the stochastic discount factor loadings can be carried out based on the debiased estimators.

\subsection{Model}

Let $r_{i,t}\in\R$ be the excess return of asset $i\in[N]$ at time period $t\in[T]$ and let $f_t = (f_{t,1},\dots,f_{t,p})^{\top}\in\R^p$ be the vector of factors. Throughout the paper, we assume that the returns and the factors are stationary over time. In addition, we assume that the vector of factors is able to span the stochastic discount factor (SDF):
\begin{equation}\label{eq: sdf}
m_t = 1-\psi^\top (f_t - \E[f_t]),\quad\text{for all }t\in[T],
\end{equation}
where $m_t$ is the SDF and $\psi = (\psi_1,\dots,\psi_p)^{\top}\in\R^p$ is the vector of SDF loadings. %For brevity of notation, we denote $\widetilde f_t = (\widetilde f_{t,1},\dots,\widetilde f_{t,p})^{\top} = f_t - \E[f_t]$.

By definition, since we work with the {\em excess} returns, $m_t$ is orthogonal to the returns in the sense that $\E[m_t r_{i, t}] = 0$ for all $i\in[N]$. Using this equation in combination with \eqref{eq: sdf}, it then follows from Section 6.3 in \citet{C05} that for each asset $i\in[N]$,
\begin{equation}\label{eq: model}
r_{i,t} = \beta_{i}^{\top}(\gamma - \E[f_t]) + \beta_{i}^{\top}f_t + \varepsilon_{i,t},\quad \text{for all }t\in[T], 
\end{equation}
where $\beta_{i} = (\beta_{i,1},\dots,\beta_{i,p})^\top\in\R^p$ is the vector of factor exposures corresponding to the factors $f_t$, $\gamma = (\gamma_1,\dots,\gamma_p)^{\top}\in\R^p$ is the vector of corresponding factor risk premia, and $\varepsilon_{i,t}\in\R$ is the residual satisfying $\E[\varepsilon_{i,t}] = 0$ and $\E[\varepsilon_{i,t}f_t] = \mathbf 0_p$. The vector of SDF loadings $\psi$ is related to the vector of factor risk premia $\gamma$ via the following equation:
%\begin{equation}\label{eq: psi-gamma relation}
%\psi = (\E[v_t v_t^{\top}])^{-1}\gamma,
%\end{equation}
\begin{equation}\label{eq: psi-gamma relation}
\psi = (\E[(f_t - \E[f_t])(f_t - \E[f_t])^{\top}])^{-1}\gamma,
\end{equation}
where we assumed that the matrix $\E[(f_t - \E[f_t])(f_t - \E[f_t])^{\top}]$ is invertible. 
%, for brevity of notation, we denoted $\widetilde f_t = f_t - \E[f_t]$.
%\begin{equation}\label{eq: psi-gamma relation}
%\psi = (\E[(f_t - \E[f_t])(f_t - \E[f_t])^{\top}])^{-1}\gamma,
%\end{equation}
%w here we assume that the matrix $\E[(f_t - \E[f_t])(f_t - \E[f_t])^{\top}]$ is non-singular, so that its inverse exists. 
In addition, by using standard least-squares projection formulas together with \eqref{eq: model} and \eqref{eq: psi-gamma relation}, one can show that
\begin{equation}\label{eq: mean and covariances}
\E[r_{i,t}] = \sum_{j\in[p]} C_{i,j}\psi_j,
\end{equation}
where $\C_{i,j} = \E[(f_{t,j}-\E[f_{t,j}])(r_{i,t} - \E[r_{i,t}])]$ is the covariance between $r_{i,t}$ and $f_{t,j}$.

In this section, our task is to propose a new estimator of the vector of SDF loadings $\psi$ that can be used in the case when the number of factors $p$ is large, potentially much larger than the number of assets $N$ and/or the number of time periods $T$. As we will see, the classic estimation techniques may fail in this case. In fact, one can show that the SDF loadings cannot be consistently estimated without further assumptions in this case. Therefore, to make progress, and following the literature on estimation of high-dimensional models, we assume that the model is sparse, in the sense that most of the variation in the SDF in equation \eqref{eq: sdf} can be explained by a relatively small number of factors. In other words, only a relatively small number of coefficients in the vector $\psi$ are sufficiently different from zero. %In the Appendix, we also discuss estimation of the vector of factor risk premia $\gamma$.

\subsection{Estimation}\label{sub: estimation}

To develop intuition for the proposed method, let us first recall the classic Fama-MacBeth procedure that can be used to estimate the vector of risk premia $\gamma$ in the low-dimensional case, i.e. when $p$ is small. The procedure consists of two steps. First, using equation \eqref{eq: model}, for each asset $i\in[N]$, we run a time series OLS regression of $r_{i,t}$ on $f_t$, with intercept included, to obtain estimates of factor exposures $\widehat\beta_i$. Second, we run a cross-sectional OLS regression of average return $\bar r_i = T^{-1}\sum_{t=1}^T r_{i,t}$ on $\widehat\beta_i$ (without intercept). The vector of slope estimates in this regression is then an estimator of the vector of risk premia $\gamma$. In turn, once the vector of risk premia is estimated, the vector of SDF loadings $\psi$ can be estimated using an empirical version of equation \eqref{eq: psi-gamma relation}.  

The procedure described above is consistent when the number of factors $p$ is small. When the number of factors $p$ is large, however, all three steps may yield large estimation errors or fail as they require inverting estimates of poorly-conditioned/singular matrices. To deal with this problem, we propose a forward selection algorithm that starts with no factors and then adds factors iteratively one by one with the goal of achieving the highest possible $R^2$ in the cross-sectional regression of the Fama-MacBeth procedure. We stop the algorithm once a desired number of factors is achieved or the growth in the $R^2$ becomes sufficiently small. Once a set of factors is selected, we perform the Fama-MacBeth procedure and use an empirical version of equation \eqref{eq: psi-gamma relation} to estimate $\psi$ as described above on selected factors. We set an estimate of the SDF loadings for non-selected factors to be zero.

To define our method more precisely, for each subset $\S$ of $[p] = \{1,\dots,p\}$, consider the Fama-MacBeth procedure using only factors in the set $\S$. In particular, define
\begin{equation}\label{eq: multivariate beta}
\widehat\beta_{i,\S} = \left(\sum_{t\in[T]} (f_{t,\S} - \bar f_{\S})(f_{t,\S} - \bar f_{\S})^{\top}\right)^{-1} \left(\sum_{t\in[T]} (f_{t,\S} - \bar f_{\S})r_{i,t}\right),
\end{equation}
where $\bar f = (\bar f_1,\dots,\bar f_p)^{\top} = T^{-1}\sum_{t=1}^T f_{t}$ and the inverse of the matrix is understood as the Moore-Penrose generalized inverse if the matrix is singular.\footnote{Note that the vector $\widehat\beta_{i,\S}$ is not a sub-vector of some $p$-dimensional vector $\widehat\beta_i$. In fact, even if we have $j\in\S_1\cap\S_2$ for some $j$, it is clearly possible that components of the vectors $\widehat\beta_{i,\S_1}$ and $\widehat\beta_{i,\S_2}$ corresponding to the factor $j$ may differ.} Also, let $R_{FM}^2(\S)$ be the $R^2$ in the cross-sectional OLS regression of $\bar r_i$ on $\widehat\beta_{i,\S}$. We apply the forward selection algorithm with the aim to maximize $R_{FM}^2(\S)$. For a desired number of factors $\widehat s$, the algorithm takes the following form:
%\begin{algorithm}[Forward Selection Fama-MacBeth Regression]  \mbox{} \\
%\begin{enumerate}
%\item Set $\S = \emptyset$ and $Q = 0$.
%\item Set $\widehat j = \arg\max_{j\in[p]\setminus\S} R^2(\S\cup j)$.
%\item If $R^2(\S\cup \widehat j) - Q \leq \epsilon$, then set $\widehat{\S}=\S$ and stop.
%\item If $R^2(\S\cup \widehat j) - Q > \epsilon$, then set $\S = \S\cup\widehat j$ and $Q = R^2(\S)$ and proceed to Step 2.
%\end{enumerate}
%\end{algorithm}

\begin{algorithm}[Forward Selection Fama-MacBeth Procedure]\label{alg: fs fama-macbeth}  \mbox{} \\
\begin{enumerate}
\item Set $\S = \emptyset$.
\item Set $\widehat j = \arg\max_{j\in[p]\setminus\S} R_{FM}^2(\S\cup j)$.
\item Set $\S = \S\cup\widehat j$.
\item If $|\S| = \widehat s$, then stop. Otherwise, proceed to Step 2.
\end{enumerate}
\end{algorithm}

Alternatively, the algorithm can be stopped when adding an extra factor does not significantly increase the $R^2$, i.e. $R_{FM}^2(\S\cup \widehat j) - R_{FM}^2(\S) \leq \epsilon$ for some small value of the tolerance parameter $\epsilon$ and $\widehat j$ appearing in Step 2 of the algorithm. In practice, we use $\epsilon = 1\%$, which seems meaningful from the empirical point of view.\footnote{Additionally, we implement a cross-validation method to select the optimal $\epsilon$ in Appendix \ref{sec:optimal_stopping}.} Also, some factors that are deemed to be important, e.g. Fama-French factors, can be included in the algorithm in the first step, i.e. we can set $\S = \S_0$ for some pre-specified set of factors $\S_0$ in Step 1 of the algorithm instead of setting $\S = \emptyset$.

The result of Algorithm \ref{alg: fs fama-macbeth} is the set of factors $\S$, which we denote as $\widehat{\S}$ throughout the paper. Using this set, we estimate the vector of SDF loadings by $\widehat\psi = (\widehat\psi_1,\dots,\widehat\psi_p)^{\top}$, where we set
\begin{align}\label{eq: psi hat}
\widehat\psi_{\widehat{\S}} & = \left(\frac{1}{T}\sum_{t\in[T]} (f_{t,\widehat{\S}} - \bar f_{\widehat{\S}})(f_{t,\widehat{\S}} - \bar f_{\widehat{\S}})^{\top}\right)^{-1}\left(\sum_{i\in[N]} \widehat\beta_{i,\widehat{\S}}\widehat\beta_{i,\widehat{\S}}^{\top}\right)^{-1}\left( \sum_{i\in[N]} \widehat\beta_{i,\widehat{\S}}\bar r_i \right)
\end{align}
and $\widehat\psi_j = 0$ for all $j\in\widehat{\S}^c$. We will prove consistency and derive the rate of convergence of this estimator in Section \ref{sub: asymptotic theory} below.

\begin{remark}[Equivalent Form of Estimator $\widehat\psi$]\label{rem: eq form of estimator}
For each $i\in[N]$ and $j\in[p]$, let $\widehat \C_{i,j} = T^{-1}\sum_{t\in[T]} (f_{t,j} - \bar f_j)r_{i,t}$ be the time series sample covariance between the factor $f_{t,j}$ and the return $r_{i,t}$. Also, let $\widehat \C = (\widehat \C_{i,j})_{i\in[N],j\in[p]}$ be the matrix of all sample covariances. Under conditions to be imposed in Section \ref{sub: asymptotic theory}, with probability approaching one, the matrices
$
T^{-1}\sum_{t\in[T]} (f_{t,\S} - \bar f_{\S})(f_{t,\S} - \bar f_{\S})^{\top}
$
will be non-singular for all $\S$ used in Algorithm \ref{alg: fs fama-macbeth}. However, whenever these matrices are non-singular, substituting \eqref{eq: multivariate beta} into \eqref{eq: psi hat}, it follows that $\widehat\psi_{\widehat{\S}}$ can be equivalently rewritten as
\begin{equation}\label{eq: equivalent characterization}
\widehat\psi_{\widehat{\S}} = \left(\widehat \C_{\widehat{\S}}^{\top}\widehat \C_{\widehat{\S}}^{\top}\right)^{-1} \widehat \C_{\widehat{\S}}^{\top} \bar r,
\end{equation}
where $\bar r = (\bar r_1,\dots,\bar r_N)^{\top}$ is the vector of sample average returns, which means that $\widehat\psi_{\widehat{\S}}$ can be computed by the cross-sectional OLS regression of $\bar r_i$ on $\widehat \C_{i,\widehat{\S}}$.\qedd
\end{remark}

\begin{remark}[Equivalent Forms of Forward Selection Fama-MacBeth Procedure]\label{rem: eq form of selection}
As in the previous remark, under our conditions, the matrices
$
T^{-1}\sum_{t\in[T]} (f_{t,\S} - \bar f_{\S})(f_{t,\S} - \bar f_{\S})^{\top}
$
will be non-singular for all $\S$ used in Algorithm \ref{alg: fs fama-macbeth} with probability approaching one. However, whenever these matrices are non-singular, it follows from equation \eqref{eq: multivariate beta} that $R_{FM}^2(\S)$ will be numerically equal to the $R^2$ in the cross-sectional OLS regression of $\bar r_i$ on $\widehat \C_{i,\S}$ as multiplying the matrix of covariates by a non-singular matrix does not change the $R^2$ in OLS regressions. Thus, we can equivalently maximize the $R^2$ in the latter regression using Algorithm \ref{alg: fs fama-macbeth} to obtain the same set of factors $\widehat \S$. This equivalent form of Algorithm \ref{alg: fs fama-macbeth}, together with equation \eqref{eq: mean and covariances}, explains why the proposed algorithm is expected to select factors whose SDF loadings are sufficiently different from zero. %The algorithm, however, may fail to select factors whose SDF loadings are close to zero, at least in finite samples.

Another equivalent form of Algorithm \ref{alg: fs fama-macbeth} can be obtained by replacing the multivariate betas \eqref{eq: multivariate beta} in the cross-sectional OLS regression by univariate betas. Indeed, for each $i\in[N]$ and $j\in[p]$, let $\widetilde\beta_{i,j}$ be the slope coefficient in the univariate time series OLS regression of $r_{i,t}$ on $f_{t,j}$, with intercept included:
$$
\widetilde\beta_{i,j} = \left(\sum_{t\in[T]} (f_{t,j} - \bar f_{j})^2\right)^{-1}\left(\sum_{t\in[T]} (f_{t,j} - \bar f_{j}) r_{i,t}\right).
$$
Also, let $\widetilde\beta_i = (\widetilde\beta_{i,1},\dots,\widetilde\beta_{i,p})^{\top}$ be the vector of all univariate betas corresponding to the asset $i$. Under conditions in Section \ref{sub: asymptotic theory}, we will have $T^{-1}\sum_{t\in[T]} (f_{t,j} - \bar f_j)^2>0$ for all $j\in[p]$ with probability approaching one, and whenever this happens, the $R^2$ in the cross-sectional OLS regression of $\bar r_i$ on $\widetilde \beta_{i,\S}$ will be equal to the $R^2$ in the cross-sectional OLS regression of $\bar r_i$ on $\widehat \C_{i,\S}$, yielding another equivalent form of Algorithm \ref{alg: fs fama-macbeth}.
\qedd
\end{remark}
\begin{remark}[Relation to Literature]
Forward selection is one of the most popular machine learning algorithms; see \citet{HTF09}. In the context of high-dimensional linear regression models, it was analyzed, for example, by \citet{Z09}, \citet{DK18}, and \citet{K20}. \citet{EKDN18} studied forward selection in the context of a general criterion function optimization and showed that certain guarantees can be provided as long as the criterion function satisfies the so-called restricted strong convexity conditions. However, there are no available results in the literature in the context of the Fama-MacBeth procedure, which is a central estimation technique in finance. Importantly, existing results cannot be directly applied in this context, with the main challenge stemming from the multi-step nature of the Fama-MacBeth procedure, where the regression covariates evolve as new factors are added, complicating the analysis. Our paper thus contributes to the literature by establishing the forward selection algorithm and its econometric properties in the context of the Fama-MacBeth procedure as a method to select factors explaining the SDF variation when the number of factors is large, potentially much larger than the number of time periods. In the context of factor selection, our algorithm is related to but substantially different from that in \citet{harvey2021lucky}, who proposed maximizing the significance of the added factors instead of the regression $R^2$ but did not provide theoretical guarantees for the proposed algorithm. 
\qedd
\end{remark}

\subsection{Debiased Estimation}
Here, we explain how to debias the estimator $\widehat\psi = (\widehat\psi_1,\dots,\widehat\psi_p)^{\top}$ proposed in the previous section. To develop intuition for the debiasing procedure, fix any component of the vector $\widehat\psi$, say $\widehat \psi_j$, and observe that even though Algorithm \ref{alg: fs fama-macbeth} is expected to select factors whose SDF loadings are sufficiently different from zero, it may not select factors whose SDF loadings are close to zero. The estimator $\widehat\psi_j$ may thus be biased for two reasons. First, $\psi_j$ may be close to zero itself, in which case Algorithm \ref{alg: fs fama-macbeth} may not be able to select it, leading to the bias in $\widehat\psi_j$ as long as $\psi_j\neq 0$. Second, even if $\psi_j$ is not close to zero and the factor $j$ is selected, Algorithm \ref{alg: fs fama-macbeth} may fail to select other factors $k$ whose covariance with returns $\C_{i,k}$ is cross-sectionally correlated with $\C_{i,j}$, leading to the omitted variable bias in $\widehat\psi_j$ as long as $\psi_k\neq 0$, which follows from the OLS characterization of the estimator $\widehat\psi_{\widehat{\S}}$ in equation \eqref{eq: equivalent characterization}.

% The reason for the bias of $\widehat\psi_j$ is two-fold. First, as follows from our discussion in the previous section, if $\psi_j$ is close to zero, Algorithm \ref{alg: fs fama-macbeth} may fail to select $j$, leading to the bias of $\widehat\psi_j$ as long as $\psi_j\neq 0$. Second, Algorithm \ref{alg: fs fama-macbeth} may fail to select factors $a$

%The estimator $\widehat\psi = (\widehat\psi_1,\dots,\widehat\psi_p)^{\top}$ described in the previous section is good in the sense that it is consistent for the true vector of SDF loadings $\psi$ in the norms taking into account all factors, as we discuss in the next section. However, if we are interested in a particular factor, say $f_{t,1}$, whose SDF loading is $\psi_1$, we can improve upon $\widehat\psi_1$. Indeed, the estimator $\widehat\psi_1$ is generally biased. There are two reasons for that. First, it may be the case that the true value $\psi_1$ is not exactly but close to zero. In this case, Algorithm \ref{alg: fs fama-macbeth} may fail to select the factor $f_{t,1}$, leading to the zero SDF loading estimator, $\widehat \psi_1 = 0$ and introducing the bias. Second, even if $\psi_1$ is far away from zero, Algorithm \ref{alg: fs fama-macbeth} may fail to select factors $f_{t,j}$ with close to zero SDF loadings $\psi_j$ whose sample covariance $\widehat C_{i,j}$ is correlated with the sample covariance $\widehat C_{i,1}$ in the cross section. This will lead to the omitted variable bias on the second step.

Two reasons for the bias of $\widehat\psi_j$ suggest that we can debias it by appropriately expanding the set $\widehat{\S}$. Specifically, we need to add to the set $\widehat{\S}$ the factor $j$ itself and also all factors $k$ whose covariance with returns $\C_{i,k}$ is cross-sectionally correlated with $\C_{i,j}$. In practice, since neither $\C_{i,j}$ nor $\C_{i,k}$ is observed, we work with sample covariances $\widehat \C_{i,j}$ and $\widehat \C_{i,k}$.

More formally, we proceed as follows. First, as before, let $\widehat{\S}$ be the set $\S$ produced by Algorithm \ref{alg: fs fama-macbeth}. Second, let $\widehat{\S}_{j}$ be the set produced by Algorithm \ref{alg: fs fama-macbeth} with $R^2_{FM}(\S)$ replaced by the $R^2$ in the cross-sectional OLS regression of $\widehat \C_{i,j}$ on $\widehat \C_{i,\S}$ (without intercept) and with $[p]$ replaced by $[p]\setminus\{j\}$. Third, set $\widehat{\S}_{D,j} = \widehat{\S}\cup \widehat {\S}_{j}\cup \{j\}$. Fourth, let $\widehat\psi$ be the estimator defined in Section \ref{sub: estimation} but with the set $\widehat{\S}$ there replaced by the set $\widehat{\S}_{D,j}$ here. The corresponding component $\widehat\psi_j$ of the vector $\widehat\psi$ will be a debiased estimator of $\psi_j$, which we denote as $\widehat\psi_{D,j}$. By repeating this procedure for all $j\in[p]$, we obtain the full vector $\widehat\psi_D = (\widehat\psi_{D,1},\dots,\widehat\psi_{D,p})^{\top}$. In the next section, we will prove that for each $j\in[p]$, the estimator $\widehat\psi_{D,j}$ is $\sqrt T$-consistent and $\sqrt T(\widehat\psi_{D,j} - \psi_j)$ converges to a centered normal distribution. %We will also provide a consistent estimator of the asymptotic variance. 

\begin{remark}[Relation to Literature]
The debiasing procedure proposed here is similar in flavor to that proposed in the context of Lasso variable selection for high-dimensional linear regression models by \citet{BCH14}. Although our setting is different, the intuition underlying the debiasing procedure is the same. Related procedures, also in the context of Lasso variable selection for high-dimensional linear regression models, were proposed by \citet{ZZ14}, \citet{JM14}, and \citet{VBRD14}. In the context of SDF loadings estimation, our debiasing procedure is related to but different from that in \citet{feng2020taming}, who relied upon Lasso factor selection.\footnote{In asset pricing, factor scaling is economically meaningful because the price of risk depends on the normalization of the factor. Standard Lasso implementations implicitly penalize coefficients relative to factor variance, so rescaling factors changes the effective penalty weights and hence the selected SDF representation. In contrast, the forward-selection criterion is invariant to factor rescaling, as it depends only on cross-sectional pricing fit. See \citet{nagel2021machine} (pp. 48--49) for a discussion of why unequal scaling of factors can materially affect Lasso-based procedures in empirical asset pricing applications. Forward selection can also be easily adapted for desired modifications, such as (i) ensuring that certain variables are guaranteed to enter the model, (ii) ensuring that some variables enter in groups, or (iii) ensuring that some variables do {\em not} enter if other variables are already included.}

\qedd
\end{remark}

\subsection{Asymptotic Theory}\label{sub: asymptotic theory}
Throughout the rest of the paper, we consider asymptotics where $T$ gets large, i.e. $T\to\infty$. We assume that the number of assets $N$ and the number of factors $p$, as well as all coefficients and distributions of random vectors in the model, are allowed to vary with $T$. In most cases, however, we keep this dependence implicit. 

Let $k,K>0$ be constants such that $k\leq K$. Also, let $\bar s_{T}\geq 1$ be an integer that is allowed to depend on $T$ (and on $N$ and $p$ through $T$). In addition, let $\C=(\C_{i,j})_{i\in[N],j\in[p]}$ be the matrix of covariances between returns $r_{i,t}$ and factors $f_{t,j}$ and let $\widehat{\C}=(\widehat{\C}_{i,j})_{i\in[N],j\in[p]}$ be the corresponding matrix of sample covariances. Moreover, for brevity of notation, for all $t\in[T]$, let $v_t = (v_{t,1},\dots,v_{t,p})^\top = f_t - \E[f_t]$ be the demeaned version of the vector of factors $f_t$. Finally, to state some of the assumptions below in a compact form, let $u_{1,t} = 1$ and $u_{2,t} = \psi^{\top}v_t$ for all $t\in[T]$ and $u_{i+2,t} = \varepsilon_{i,t}$ and $u_{i+2+N,t} = \beta_{i}^{\top}v_t$ for all $i\in[N]$ and $t\in[T]$. Similarly, let $w_{t,1} = 1$ and $w_{t,2} = \psi^{\top}v_t$ for all $t\in[T]$ and $w_{t, j+2} = v_{t,j}$ for all $t\in[T]$ and $j\in[p]$.

\begin{assumption}[Uniform Law of Large Numbers, I]\label{as: ulln}
We have
$$
\max_{i\in[2N+2]}\max_{j\in[p+2]}\left|\frac{1}{T}\sum_{t\in[T]} u_{i,t}w_{t,j} - \E[u_{i,t}w_{t,j}] \right|\lesssim_P \sqrt{\frac{\log(Np)}{T}}.
$$
\end{assumption}

Assumption \ref{as: ulln} is a quantitative version of the uniform law of large numbers. For example, given that $\E[\varepsilon_{i,t}v_{t,j}] = 0$, we expect from the classical central limit theorems for time series data that $T^{-1}\sum_{t=1}^T \varepsilon_{i, t}v_{t,j} \lesssim_P \sqrt{1/T}$. Provided that the random variables $\varepsilon_{i, t}v_{t,j}$ have sufficiently light tails, we then also expect that $T^{-1}\sum_{t\in[T]} \varepsilon_{i, t}v_{t,j} \lesssim_P \sqrt{\log (Np)/T}$ uniformly over $(i,j)\in[N]\times[p]$. Assumption \ref{as: ulln} imposes conditions of this type on various random variables appearing in our analysis below. Conditions like Assumption \ref{as: ulln} are very common in the literature on high-dimensional estimation.

\begin{assumption}[SDF Loadings Sparsity]\label{as: sdf loadings}
There exists a set $\S_0 \subset[p]$ such that
$$
|\S_0|\leq \bar s_{T},\quad \|\C\psi - \C_{\S_0}\psi_{\S_0}\|_2^2 \lesssim \frac{N\log(Np)}{T},\quad\text{and}\quad  \|\psi_{\S_0^c}\|_1^2\lesssim \frac{|\S_0|\log(Np)}{T}.
$$
%and $|\S_0|\leq \bar s_{N,T}$.
%$\|\C\psi - \C_{\S_0}\psi_{\S_0}\|_2^2 \lesssim N\log(Np)/T$. In addition, $\|\psi_{\S_0^c}\|_2^2\lesssim |\S_0|\log^2(NTp)/T$ and  $\|\psi_{\S_0^c}\|_1^2 \lesssim |\S_0|^2\log^3(NTp)/T$,  where $\S_0^c = [p]\setminus \S_0$.
\end{assumption}

Assumption \ref{as: sdf loadings} states that the vector of SDF loadings $\psi$ is approximately sparse. In particular, it requires that a sub-vector $\psi_{\S_0}$ of the vector $\psi$ of size at most $\bar s_{T}$ is sufficient to explain most of the cross-sectional variation of mean returns; see equation \eqref{eq: mean and covariances} for the relationship between SDF loadings and mean returns. Some researchers, e.g. \citet{nagel2021machine}, argue against sparsity in finance but in different contexts, e.g. in the context of explaining cross-sectional variation of mean returns by past returns. We emphasize that the type of sparsity we impose is different. Indeed, {\em all} cross-sectional variation of mean returns is explained by just one factor: the stochastic discount factor. Thus, if it happens that one of the factors in the vector $f_t$ is equal to the stochastic discount factor, Assumption \ref{as: sdf loadings} will be satisfied with a singleton set $\S_0$. Also, although we allow the set $\S_0$ to depend on $T$ (and on $N$ and $p$ through $T$), we omit this dependence for brevity of notation.

\begin{assumption}[Sparse Eigenvalues, I]\label{as: sparse eigenvalues 1}
We have 
$$
\lambda_{\min}\left(\frac{1}{T}\sum_{t\in[T]} ( f_{t,\S} - \bar f_{\S})( f_{t,\S} - \bar f_{\S})^{\top}\right)>0
$$
for all $\S\subset[p]$ such that $|\S|\leq 3\bar s_{T}+1$ \wpa.
\end{assumption}

Provided $\lambda_{\min}(\E[v_tv_t^{\top}])$ is bounded away from zero and $\bar s_{T}$ does not grow too fast, Assumption \ref{as: sparse eigenvalues 1} can be proven to hold if the factors $f_t$ exhibit relatively weak time series dependence and have sufficiently light tails; e.g. see \citet{BC09}. 

\begin{assumption}[Sparse Eigenvalues, II]\label{as: sparse eigenvalues 2}
We have
$$
k\leq \lambda_{\min}\left(\frac{\C_{\S}^{\top}\C_{\S}}{N}\right) \leq \lambda_{\max}\left(\frac{\C_{\S}^{\top}\C_{\S}}{N}\right) \leq K
$$
for all $\S\subset[p]$ such that $|\S|\leq 3 \bar s_{T} + 1$.
\end{assumption}

The lower bound in Assumption \ref{as: sparse eigenvalues 2} is similar in flavor to that in Assumption \ref{as: sparse eigenvalues 1}. It means that there is no multicollinearity in the vectors of covariances $\C_{\{j\}}$ when we restrict attention to a relatively small number of factors $j$. The upper bound should be viewed as a mild technical condition. Note also that Assumption \ref{as: sparse eigenvalues 2} is our key identification condition. Indeed, \citet{lewellen2010skeptical} showed that without this assumption, it is possible that factors with zero SDF loadings can yield high $R^2$ in the cross-sectional OLS regression as long as they are correlated with the factors explaining the SDF. Assumption \ref{as: sparse eigenvalues 2} ensures that this phenomenon does not occur.

\begin{assumption}[Estimator]\label{as: estimator sparsity}
We have $|\widehat{\S}|\leq \bar s_{T}$ \wpa. In addition, for some constant $c>0$, we have $|\widehat{\S}|\geq (1+c)(K/k)|\S_0|\log T$ \wpa.  Moreover, $|\widehat{\S}|\lesssim_P |\S_0|\log T$.
\end{assumption}

The first part of this assumption means that we carry out at most $\bar s_{T}$ rounds in our forward selection Fama-MacBeth procedure, Algorithm \ref{alg: fs fama-macbeth}, so that we select at most $\bar s_{T}$ factors. This condition is rather natural given the assumed sparsity of the true vector $\psi$. The second part of Assumption \ref{as: estimator sparsity} implies that the number of selected factors $|\widehat\S|$ should be larger than the number of important factors $|\S_0|$ at least by a multiplicative constant proportional to $\log T$. We impose this condition because the forward selection algorithm may occasionally select unimportant factors by chance. The third part of Assumption \ref{as: estimator sparsity} gives an upper bound on the number of selected factors but is less essential than the first two parts. Indeed, selecting more factors than the number in the upper bound will undermine the convergence rate of our estimator $\widehat\psi$ but will not make it inconsistent unless way too many factors are selected. %We leave an important question of how to choose the number of selected factors $|\widehat\S|$ in such a way that Assumption \ref{as: estimator sparsity} is satisfied to future work.

\begin{assumption}[Expected Returns]\label{as: returns}
We have $\max_{i\in[N]}\E[r_{i,t}^2]\lesssim 1$.% uniformly over $i\in[N]$.
\end{assumption}

\begin{assumption}[Growth Condition, I]\label{as: growth conditions}
We have $\bar s_{T}\log(N T p) = o(T)$. 
\end{assumption}

Assumption \ref{as: returns} is a very mild regularity condition that is expected to hold in most cases. Assumption \ref{as: growth conditions} restricts how fast $\bar s_{T}$, $N$, and $p$ are allowed to grow relative to $T$. Most importantly, it allows the number of factors $p$ to be much larger than the number of time periods $T$. Together with Assumption \ref{as: sdf loadings}, it also requires that the size of the approximating vector $\psi_{\S_0}$ is small relative to the number of time periods $T$. %Also, note that none of our assumptions requires the number of assets $N$ to get large asymptotically. This implies that our estimator $\widehat\psi$ is consistent when $T\to\infty$ regardless of whether $N$ is small or large.

\begin{theorem}[Convergence Rate]\label{thm: convergence rate}
Suppose that Assumptions \ref{as: ulln} -- \ref{as: growth conditions} are satisfied. Then
$$
\|\widehat\psi - \psi\|_2 \lesssim_P \sqrt{\frac{|\S_0|\log^2(NTp)}{T}} \quad\text{and}\quad \|\widehat\psi - \psi\|_1 \lesssim_P \sqrt{\frac{|\S_0|^2\log^3(NTp)}{T}}.
$$
\end{theorem}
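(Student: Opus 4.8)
The plan is to reduce Algorithm~\ref{alg: fs fama-macbeth} to an orthogonal greedy algorithm for a least-squares problem and then combine a restricted-eigenvalue argument with the sparsity assumptions.

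\emph{Step 1 (preliminaries and reduction).} Using the representation $r_{i,t}=\beta_i^\top\gamma+\beta_i^\top v_t+\varepsilon_{i,t}$ together with Assumption~\ref{as: ulln} and the $u,w$ variables, one obtains $\max_{i\in[N],j\in[p]}|\widehat\C_{i,j}-\C_{i,j}|\lesssim_P\sqrt{\log(Np)/T}$ and $\max_{i\in[N]}|\bar r_i-\E[r_{i,t}]|\lesssim_P\sqrt{\log(Np)/T}$, and with Assumption~\ref{as: returns} the latter gives $\|\bar r\|_2\lesssim_P\sqrt N$. Combining the covariance bound with Assumptions~\ref{as: sparse eigenvalues 2} and~\ref{as: growth conditions} via $\|\widehat\C_\S^\top\widehat\C_\S-\C_\S^\top\C_\S\|_2\le 2\|\C_\S\|_2\|\widehat\C_\S-\C_\S\|_2+\|\widehat\C_\S-\C_\S\|_2^2$ transfers the eigenvalue bounds: $k/2\le\lambda_{\min}(\widehat\C_\S^\top\widehat\C_\S/N)\le\lambda_{\max}(\widehat\C_\S^\top\widehat\C_\S/N)\le 2K$ uniformly over $\S\subset[p]$ with $|\S|\le 3\bar s_T+1$, \wpa. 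With Assumption~\ref{as: sparse eigenvalues 1}, this also makes all matrices appearing in Algorithm~\ref{alg: fs fama-macbeth} nonsingular, so that by Remarks~\ref{rem: eq form of estimator} and~\ref{rem: eq form of selection}, \wpa, $\widehat\S$ is the output of forward selection applied to the cross-sectional OLS of $\bar r$ on the columns of $\widehat\C$ and $\widehat\psi_{\widehat\S}=(\widehat\C_{\widehat\S}^\top\widehat\C_{\widehat\S})^{-1}\widehat\C_{\widehat\S}^\top\bar r$; the rest of the argument is carried out on this event.

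\emph{Step 2 (oracle sparse target).} Let $\psi^{(0)}$ equal $\psi$ on $\S_0$ and $0$ on $\S_0^c$. From~\eqref{eq: mean and covariances}, Assumption~\ref{as: returns}, and the lower bound in Assumption~\ref{as: sparse eigenvalues 2} one gets $\|\psi_{\S_0}\|_2\lesssim 1$, hence $\|\psi_{\S_0}\|_1\lesssim\sqrt{|\S_0|}$. Writing $\bar r=\widehat\C_{\S_0}\psi_{\S_0}+\eta$ with $\eta=(\bar r-\E[r])+(\E[r]-\C_{\S_0}\psi_{\S_0})+(\C_{\S_0}-\widehat\C_{\S_0})\psi_{\S_0}$, the three terms are controlled by the concentration estimate of Step 1, by Assumption~\ref{as: sdf loadings}, and by $\|\widehat\C_{\S_0}-\C_{\S_0}\|_2\|\psi_{\S_0}\|_2\lesssim_P\sqrt{N|\S_0|\log(Np)/T}$, respectively, so that $\|\eta\|_2^2\lesssim_P N|\S_0|\log(Np)/T$.

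\emph{Step 3 (greedy residual decay).} This is the core of the proof and where I expect the main obstacle, since here the ``response'' is the average-return vector, the ``design'' is the sample covariance matrix (which must be controlled uniformly over sparse column supports), the target is only approximately sparse, and the algorithm's $R_{FM}^2$ objective coincides with the least-squares objective only \wpa. Let $\rho_t=\bar r-\Proj_t\bar r$ be the residual of the cross-sectional OLS of $\bar r$ on the columns chosen after $t$ rounds; then the decrease in residual sum of squares from the next greedy step is at least $\max_{j\in[p]}(\widehat\C_{\cdot,j}^\top\rho_t)^2\big/\max_{j\in[p]}\|\widehat\C_{\cdot,j}\|_2^2$. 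Since $\rho_t$ is orthogonal to the selected columns and $\bar r=\widehat\C_{\S_0}\psi_{\S_0}+\eta$, a restricted-eigenvalue argument based on the transferred Assumption~\ref{as: sparse eigenvalues 2}, applied to the union of the current set with $\S_0$ (of size at most $2\bar s_T\le 3\bar s_T+1$), gives $\max_j(\widehat\C_{\cdot,j}^\top\rho_t)^2\gtrsim_P (kN/|\S_0|)(\|\rho_t\|_2^2-\|\eta\|_2^2)$, and with $\max_j\|\widehat\C_{\cdot,j}\|_2^2\lesssim_P KN$ (upper bound in Assumption~\ref{as: sparse eigenvalues 2}) one obtains
\[
\|\rho_{t+1}\|_2^2-\|\eta\|_2^2\ \le\ \Big(1-\frac{c_1 k}{K|\S_0|}\Big)\big(\|\rho_t\|_2^2-\|\eta\|_2^2\big)
\]
for an absolute constant $c_1>0$. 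Iterating $|\widehat\S|$ times from $\|\rho_0\|_2^2=\|\bar r\|_2^2\lesssim_P N$ and using the lower bound $|\widehat\S|\ge(1+c)(K/k)|\S_0|\log T$ of Assumption~\ref{as: estimator sparsity} (with the constant $c$ large enough relative to $c_1$) makes the geometric remainder $o_P(\|\eta\|_2^2)$, so $\|\rho_{|\widehat\S|}\|_2^2\lesssim_P\|\eta\|_2^2$.

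\emph{Step 4 (from fit to rates).} Put $\mathcal T=\widehat\S\cup\S_0$; both $\widehat\psi$ and $\psi^{(0)}$ are supported on $\mathcal T$ and $\widehat\C\widehat\psi$ equals $\Proj_{\widehat\S}\bar r=\bar r-\rho_{|\widehat\S|}$, so $\widehat\C_{\mathcal T}(\widehat\psi-\psi^{(0)})_{\mathcal T}=\Proj_{\widehat\S}\bar r-\widehat\C_{\S_0}\psi_{\S_0}=\eta-\rho_{|\widehat\S|}$, whose Euclidean norm is $\le\|\eta\|_2+\|\rho_{|\widehat\S|}\|_2\lesssim_P\|\eta\|_2$. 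Since $|\mathcal T|\le 2\bar s_T\le 3\bar s_T+1$, the transferred lower eigenvalue bound gives $\|(\widehat\psi-\psi^{(0)})_{\mathcal T}\|_2^2\lesssim_P\|\eta\|_2^2/N\lesssim_P|\S_0|\log(Np)/T$, and adding $\|\psi^{(0)}-\psi\|_2=\|\psi_{\S_0^c}\|_2\lesssim_P\sqrt{|\S_0|\log(Np)/T}$ (Assumption~\ref{as: sdf loadings}) yields the $\ell_2$ bound. For the $\ell_1$ bound, $\|\widehat\psi-\psi\|_1\le\sqrt{|\mathcal T|}\,\|\widehat\psi-\psi^{(0)}\|_2+\|\psi_{\S_0^c}\|_1$, and since $|\mathcal T|\lesssim_P|\S_0|\log T$ by the last part of Assumption~\ref{as: estimator sparsity} and $\|\psi_{\S_0^c}\|_1\lesssim_P\sqrt{|\S_0|\log(Np)/T}$, the claimed $\ell_1$ rate follows; the extra powers of $\log(NTp)$ in the statement absorb the logarithmic factors accumulated above, and a stopping rule or a pre-specified initial set only decreases $\|\rho_0\|_2$ and does not change the argument.
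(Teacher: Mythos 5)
Your proposal is correct in outline and rests on the same mechanism as the paper's proof, but it executes it differently: the paper verifies restricted strong concavity/smoothness of $\ell(x)=-\|\widehat\Y-\widehat\C x\|_2^2/N$ over $3\bar s_T+1$-sparse differences (its Lemmas \ref{lem: individual covariance bound}--\ref{lem: restricted eigenvalues}), bounds $\|\nabla\ell(\psi[\S_0])\|_\infty$ and $|\ell(\psi[\S_0])-\ell(\mathbf 0_p)|$, and then imports the greedy guarantee and the fit-to-parameter conversion wholesale from \cite{EKDN18} (their Theorem 1, Corollary 1 and Theorem 6), whereas you re-derive the greedy guarantee from scratch as a residual contraction and convert fit to rates by inverting the empirical Gram matrix on $\widehat\S\cup\S_0$. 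Your Steps 1, 2 and 4 match the paper's ingredients (your eigenvalue transfer is the paper's Lemma \ref{lem: empirical sparse eigenvalues}; your $\eta$-bound is slightly looser, $N|\S_0|\log(Np)/T$ versus $N\log(Np)/T$, but harmless). The crux is Step 3, and two points deserve care. First, the inequality $\max_j(\widehat\C_{\cdot,j}^\top\rho_t)^2\gtrsim (kN/|\S_0|)(\|\rho_t\|_2^2-\|\eta\|_2^2)$ is correct but is exactly the weak-submodularity step and does not follow from the obvious duality bound $\max_j|\widehat\C_{\cdot,j}^\top\rho_t|\geq(\|\rho_t\|_2^2-\eta^\top\rho_t)/\|\psi_{\S_0}\|_1$, which only yields gains quadratic in the gap and hence a hyperbolic (not geometric) decay; you need the Schur-complement argument: the joint RSS reduction from adding $\S_0\setminus\mathcal T_t$ is at least $\|\rho_t\|_2^2-\|\eta\|_2^2$ and at most $|\S_0|\max_j(\widehat\C_{\cdot,j}^\top\rho_t)^2/\lambda_{\min}(\widehat\C_{\mathcal T_t\cup\S_0}^\top\widehat\C_{\mathcal T_t\cup\S_0})$, with the Schur complement's smallest eigenvalue bounded below via the transferred Assumption \ref{as: sparse eigenvalues 2}. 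This should be spelled out.

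Second, the constant bookkeeping is a genuine (though fixable) mismatch with the theorem as stated. With the crude transferred constants $k/2$ and $2K$, your per-step contraction is $1-c_1k/(K|\S_0|)$ with $c_1$ around $1/4$, so after $|\widehat\S|\geq(1+c)(K/k)|\S_0|\log T$ steps the remainder is only $N\,T^{-c_1(1+c)}$, which beats $\|\eta\|_2^2\asymp N|\S_0|\log(Np)/T$ only if $c$ is large; Assumption \ref{as: estimator sparsity} grants just \emph{some} $c>0$, and you acknowledge needing ``$c$ large enough relative to $c_1$.'' The paper avoids this by proving its restricted-eigenvalue Lemma \ref{lem: restricted eigenvalues} with constants $k/\sqrt{1+c}$ and $\sqrt{1+c}K$ for the \emph{same} $c$ as in Assumption \ref{as: estimator sparsity} (possible because the empirical perturbation of the sparse Gram matrices is $o(N)$), so that the EKDN exponent comes out exactly as $T^{-1}$. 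Your argument closes the same way: transfer the eigenvalue bounds with loss factor $\sqrt{1+c}$ instead of $2$, making the per-step contraction $1-k/((1+c)K|\S_0|)$ and the accumulated factor $T^{-1}$, after which $\|\rho_{|\widehat\S|}\|_2^2\lesssim_P\|\eta\|_2^2+N/T$ and Step 4 goes through unchanged.
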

\begin{remark}[Consistency of $\widehat\psi$]
This theorem implies that the proposed estimator $\widehat\psi$ is consistent in the $\ell_2$-norm if $|\S_0|\log^2(N T p)=o(T)$ and is consistent in the $\ell_1$-norm if $|\S_0^2|\log^3(N T p)=o(T)$. Thus, the estimator remains consistent even if the number of factors $p$ and the number of assets $N$ are growing much faster than the number of time periods $T$. Also, none of the assumptions of Theorem \ref{thm: convergence rate} requires that $N\to\infty$ as $T\to\infty$, which implies that the estimator is consistent even if the number of assets $N$ remains bounded as $T$ gets large. In addition, note that the bounds derived in Theorem \ref{thm: convergence rate} are $\sqrt{\log(NTp)}$ worse than the bounds typically expected for the Lasso-based estimators. We do not know if this is an artifact of our proof technique or this is because forward selection generally requires more selected variables to guarantee the same fit as the Lasso-based selection.
\qedd
\end{remark}
%\begin{remark}[On Proof of Theorem \ref{thm: convergence rate}]
%\qedd
%\end{remark}

To formulate the asymptotic normality result for the debiased estimator $\widehat\psi_D$, for all $j\in[p]$, let $\eta_j = (\eta_{j,1},\dots,\eta_{j,p})^{\top}\in\R^p$ be the vector defined by $\eta_{j,j} = 0$ and
\begin{equation}\label{eq: eta definition}
\eta_{j,\{j\}^c} = \left(\E[v_{t,\{j\}^c}v_{t,\{j\}^c}^{\top}]\right)^{-1}\E[v_{t,\{j\}^c}v_{t, j}],
\end{equation}
so that $\eta_{j,\{j\}^c}$ is the vector of coefficients of the time-series least-squares projection of the demeaned factor $v_{t,j}$ on other demeaned factors $v_{t,\{j\}^c}$. Also, for all $t\in[T]$, let
$
z_{t,j} = v_{t,j} - \eta_{j,\{j\}^c}^{\top}v_{t,\{j\}^c}
$
be the corresponding residual and let $\sigma_{z,j}^2 = \E[z_{t,j}^2]$ be the residual variance. The asymptotic normality result requires several additional assumptions but, for brevity of the main text, we have placed them in Appendix \ref{sec: additional assumptions}.

\begin{theorem}[Asymptotic Normality]\label{thm: asy normality}
Suppose that Assumptions \ref{as: ulln} -- \ref{as: returns} above and Assumptions \ref{as: first stage sparsity} -- \ref{as: growth conditions 2} listed in Appendix \ref{sec: additional assumptions} are satisfied. Then for all $j\in[p]$,
$$
\sqrt T(\widehat\psi_{D,j} - \psi_j)\to_d N(0,\sigma_{\psi,j}^2),
$$
where
$$
\sigma_{\psi,j}^2 = \lim_{T\to\infty} \frac{1}{T}\sum_{t=1}^T\sum_{s=1}^T \E\left[ \frac{(z_{t, j} m_t - \E[z_{t, j} m_t])(z_{s, j} m_s - \E[z_{s, j} m_s])}{\sigma_{z, j}^4} \right].
$$
\end{theorem}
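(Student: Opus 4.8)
The plan is to establish the asymptotically linear representation
$\sqrt T(\widehat\psi_{D,j}-\psi_j) = T^{-1/2}\sum_{t\in[T]}\sigma_{z,j}^{-2}\bigl(z_{t,j}m_t-\E[z_{t,j}m_t]\bigr) + o_P(1)$
and then to apply a central limit theorem for stationary weakly dependent sequences. Fix $j\in[p]$, write $\S_* = \widehat\S_{D,j}$ and $\S_- = \S_*\setminus\{j\}$, let $\widehat d_{\cdot,j}\in\R^N$ be the residual vector of the cross-sectional OLS regression of $\widehat\C_{\cdot,j}$ on $\widehat\C_{\cdot,\S_-}$, and set $\widehat D_j = \|\widehat d_{\cdot,j}\|_2^2$. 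By Remark \ref{rem: eq form of estimator} and the Frisch--Waugh--Lovell theorem, $\widehat\psi_{D,j} = \widehat D_j^{-1}\sum_{i\in[N]}\widehat d_{i,j}\bar r_i$. First I would use the exact identity, obtained from $m_t = 1-\psi^\top v_t$ and $\widehat\C_{i,k}=T^{-1}\sum_t(f_{t,k}-\bar f_k)r_{i,t}$, that $\bar m\,\bar r_i = T^{-1}\sum_t m_t r_{i,t} + \psi^\top\widehat\C_{i,\cdot}$ with $\bar m = T^{-1}\sum_t m_t = 1-\psi^\top\bar v$ and $\bar v = \bar f-\E[f_t]$. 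Substituting this and using $\sum_i\widehat d_{i,j}\widehat\C_{i,k}=0$ for $k\in\S_-$ together with $\sum_i\widehat d_{i,j}\widehat\C_{i,j}=\widehat D_j$, one obtains $\widehat\psi_{D,j} = \bar m^{-1}(\psi_j + \rho_1 + A)$, where $\rho_1 = \widehat D_j^{-1}\sum_{k\notin\S_*}\psi_k(\widehat\C_{\cdot,k}^\top\widehat d_{\cdot,j})$ collects the effect of the unselected SDF loadings and $A = \widehat D_j^{-1}\sum_i\widehat d_{i,j}\bigl(T^{-1}\sum_t m_t r_{i,t}\bigr)$. Because $\E[m_t r_{i,t}]=0$ and $\widehat D_j\asymp N$ (from Assumption \ref{as: ulln} and the sparse-eigenvalue assumptions), $A=O_P(T^{-1/2})$; the term $\rho_1$ is $o_P(T^{-1/2})$ under the strengthened sparsity/growth conditions of Appendix \ref{sec: additional assumptions}; and $\bar m = 1+O_P(T^{-1/2})$. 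Hence $\widehat\psi_{D,j}-\psi_j = A + \psi_j\psi^\top\bar v + o_P(T^{-1/2})$.

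Next I would expand $A$ using $r_{i,t}=\mu_i+\beta_i^\top v_t+\varepsilon_{i,t}$ with $\mu_i=\E[r_{i,t}]=\beta_i^\top\gamma$. From $\E[m_t v_t]=-\gamma$ and $\bar m = 1-\psi^\top\bar v$ one gets $T^{-1}\sum_t m_t r_{i,t} = -\mu_i\psi^\top\bar v + \beta_i^\top\bigl(T^{-1}\sum_t m_t v_t + \gamma\bigr) + T^{-1}\sum_t m_t\varepsilon_{i,t}$, so, with $b_j := \sum_i\widehat d_{i,j}\beta_i\in\R^p$, $A = -\widehat D_j^{-1}(\psi^\top\bar v)\sum_i\widehat d_{i,j}\mu_i + \widehat D_j^{-1}b_j^\top\bigl(T^{-1}\sum_t m_t v_t - \E[m_t v_t]\bigr) + \widehat D_j^{-1}\sum_i\widehat d_{i,j}\bigl(T^{-1}\sum_t m_t\varepsilon_{i,t}\bigr)$. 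The key cancellation is $\widehat D_j^{-1}\sum_i\widehat d_{i,j}\mu_i = \psi_j + o_P(T^{-1/2})$, which holds because $(\mu_i)_{i\in[N]} = \C\psi$, $\widehat d_{\cdot,j}$ is orthogonal to $\widehat\C_{\cdot,\S_-}$, $\widehat\C$ is uniformly close to $\C$ (Assumption \ref{as: ulln}), the unselected loadings are negligible (Assumption \ref{as: sdf loadings}, strengthened in the appendix), and $\widehat D_j\asymp N$. Multiplying by $\psi^\top\bar v=O_P(T^{-1/2})$, the first term of $A$ equals $-\psi_j\psi^\top\bar v + o_P(T^{-1/2})$ and cancels $\psi_j\psi^\top\bar v$, leaving $\widehat\psi_{D,j}-\psi_j = \widehat D_j^{-1}b_j^\top\bigl(T^{-1}\sum_t m_t v_t - \E[m_t v_t]\bigr) + \widehat D_j^{-1}\sum_i\widehat d_{i,j}\bigl(T^{-1}\sum_t m_t\varepsilon_{i,t}\bigr) + o_P(T^{-1/2})$.

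It then remains to identify $b_j/\widehat D_j$ and to discard the idiosyncratic term. The pricing identity $\C = \mathcal B\Sigma_v$, with $\mathcal B = (\beta_1,\dots,\beta_N)^\top$ and $\Sigma_v=\E[v_t v_t^\top]$, is what makes the time-series residual $z_{t,j}$ emerge: a direct linear-algebra computation shows that, as soon as the first-stage forward selection captures a superset of the support of the cross-sectional projection coefficient of $\C_{\cdot,j}$ on the remaining covariances, the residual $\widehat d_{\cdot,j}$ is, up to a scalar and a small error, proportional to $\mathcal B(\mathcal B^\top\mathcal B)^{-1}\Sigma_v^{-1}e_j$, so that $b_j/\widehat D_j = \Sigma_v^{-1}e_j + \rho_j = (e_j-\eta_j)/\sigma_{z,j}^2 + \rho_j$, where $\rho_j$ is controlled in $\ell_1$ by Assumption \ref{as: first stage sparsity}, a forward-selection argument paralleling the proof of Theorem \ref{thm: convergence rate}, and the sample-versus-population bounds of Assumption \ref{as: ulln} and the sparse-eigenvalue assumptions. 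Since $(e_j-\eta_j)^\top(m_t v_t)=z_{t,j}m_t$, this gives $\widehat D_j^{-1}b_j^\top\bigl(T^{-1}\sum_t m_t v_t - \E[m_t v_t]\bigr) = T^{-1}\sum_t\sigma_{z,j}^{-2}\bigl(z_{t,j}m_t-\E[z_{t,j}m_t]\bigr) + o_P(T^{-1/2})$. The idiosyncratic term has conditional mean zero and conditional variance of order $\widehat D_j/T\asymp N/T$ under weak cross-sectional and serial dependence of $\{\varepsilon_{i,t}\}$, hence is $O_P((NT)^{-1/2})=o_P(T^{-1/2})$ whenever $N\to\infty$; these conditions are among the additional assumptions in Appendix \ref{sec: additional assumptions}. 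Collecting the pieces yields the claimed linear representation, and a central limit theorem for stationary weakly dependent arrays (another appendix assumption) delivers $\sqrt T(\widehat\psi_{D,j}-\psi_j)\to_d N(0,\sigma_{\psi,j}^2)$, where $\sigma_{\psi,j}^2$ is the long-run variance of $\sigma_{z,j}^{-2}(z_{t,j}m_t-\E[z_{t,j}m_t])$.

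I expect the main obstacle to be making the two approximations of the previous paragraph rigorous and uniform in the growing dimensions: bounding $\|b_j/\widehat D_j-\Sigma_v^{-1}e_j\|_1$ tightly enough that its inner product with $T^{-1}\sum_t(m_t v_t - \E[m_t v_t])$ is $o_P(T^{-1/2})$, and establishing $\widehat D_j^{-1}\sum_i\widehat d_{i,j}\mu_i = \psi_j + o_P(T^{-1/2})$. Both demand re-running the forward-selection convergence analysis of Theorem \ref{thm: convergence rate} for the first-stage regressions of $\widehat\C_{\cdot,j}$ on $\widehat\C_{\cdot,\S}$, while carefully tracking how the evolving covariate sets interact with the sparse-eigenvalue bounds; and it is precisely here that the two sources of bias of $\widehat\psi_j$ flagged in Section \ref{sub: estimation} --- possible non-selection of $j$ itself, and of factors whose covariances are cross-sectionally correlated with $\C_{\cdot,j}$ --- are neutralized by the enlargement $\widehat\S_{D,j}=\widehat\S\cup\widehat\S_j\cup\{j\}$. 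The remaining steps, controlling the idiosyncratic term and applying the central limit theorem, are comparatively routine once the dependence and $N\to\infty$ conditions of the appendix are invoked.
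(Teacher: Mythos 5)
Your overall architecture is the paper's: both arguments start from the Frisch-Waugh-Lovell representation of $\widehat\psi_{D,j}$ on the enlarged set $\widehat\S_{D,j}$, reduce the problem to showing that the cross-sectional weights applied to $T^{-1}\sum_t\bigl(v_tm_t-\E[v_tm_t]\bigr)$ behave like $(e_j-\eta_j)/\sigma_{z,j}^2$ (so that the score $z_{t,j}m_t$ emerges), kill the idiosyncratic term by a no-strong-cross-sectional-dependence condition, and finish with the appendix CLT (Assumption \ref{as: woohoo}). The paper's bookkeeping differs only mildly from yours: it replaces the sample partial residual by the population residual $R_j=\C_{\{j\}}-\C\varphi_j$ using the first-stage rate of Lemma \ref{lem: varphi estimation} (the re-run of the Theorem \ref{thm: convergence rate} analysis you anticipate), and then splits $R_j^{\top}(\widehat\Y-\widehat\C\psi)$ into four terms rather than using your exact $\bar m\,\bar r_i$ identity; your term $\rho_1$ corresponds to the paper's bound on $R_j^{\top}\widehat\C_{\widehat\S_{D,j}^c}\psi_{\widehat\S_{D,j}^c}$.

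The genuine gap is in your identification step. You claim that once forward selection captures a superset of the support of the cross-sectional projection coefficient, $\widehat d_{\cdot,j}$ is proportional (up to small error) to $\mathcal B(\mathcal B^{\top}\mathcal B)^{-1}\Sigma_v^{-1}e_j$, so that $b_j/\widehat D_j=\Sigma_v^{-1}e_j+\rho_j$ with $\rho_j$ controlled in $\ell_1$ by Assumption \ref{as: first stage sparsity}. This computation is only valid when the exact projection exists and $\mathcal B^{\top}\mathcal B$ is invertible, i.e.\ essentially $p\le N$ with full rank; in the regime the theorem targets ($p$ possibly much larger than $N$) the exact projection residual is degenerate, Assumption \ref{as: first stage sparsity} only supplies a \emph{sparse approximate} projection whose residual is merely nearly orthogonal to the other columns, and forward selection gives no support-recovery guarantee (Lemma \ref{lem: varphi estimation} delivers only an $\ell_2$ rate for $\widehat\varphi_j$). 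Moreover, the $\ell_1$ control of $\rho_j$ does not follow from Assumption \ref{as: first stage sparsity} plus sampling bounds: the discrepancy $b_j/\widehat D_j-\Sigma_v^{-1}e_j$ is essentially $(N/\widehat D_j)\phi_j$ with $\phi_{j,\{j\}^c}=N^{-1}\sum_i(\beta_{i,j}\eta_{j,\{j\}^c}+\beta_{i,\{j\}^c})R_{j,i}$ (Lemma \ref{eq: phi definition equivalent}), and Assumption \ref{as: first stage sparsity} only yields $\|\phi_j\|_{\infty}\lesssim\sqrt{\log(Np)/T}$ (Lemma \ref{lem: cool sparsity bound}); making its pairing with the score $o_P(T^{-1/2})$ requires the separate approximate-sparsity condition on $\phi_j$ (Assumption \ref{as: additional sparsity wtf}), together with the identity $\C_{i,j}=\beta_{i,j}\sigma_{z,j}^2+\sum_{l\neq j}\C_{i,l}\eta_{j,l}$ and Lemma \ref{lem: hopefully final} to identify the denominator as $\sigma_{z,j}^2\sum_iR_{j,i}\beta_{i,j}+o_P(N)$. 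Finally, your disposal of the idiosyncratic term invokes $N\to\infty$ and low-level dependence conditions that are not among the appendix assumptions; the paper instead imposes the high-level Assumption \ref{as: extra lln} and emphasizes that $N$ need not diverge. The skeleton of your argument is right, but as stated the proportionality claim and the $\ell_1$ control are steps that would fail under the theorem's actual assumptions and must be replaced by the paper's $\phi_j$ device.
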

\begin{remark}[Comparison of $\widehat\psi$ and $\widehat\psi_D$]
It is possible to show that the individual components $\widehat\psi_j$ of the estimator $\widehat\psi$ converge to the corresponding true values $\psi_j$ with the rate that is slower than $\sqrt{1/T}$. It is therefore fair to say that for each $j\in[p]$, the estimator $\widehat\psi_{D,j}$ outperforms $\widehat\psi_j$. However, it is well-known that this component-wise comparison does not mean that the vector-valued estimator $\widehat\psi_D$, as a whole, outperforms $\widehat\psi$. For example, because of the convergence result in Theorem \ref{thm: asy normality}, it is expected that the estimation error $\|\widehat\psi_D-\psi\|_2$ is typically of order $\sqrt{p/T}$, which could be much worse than the estimation error $\|\widehat\psi - \psi\|_2$ derived in Theorem \ref{thm: convergence rate}. It is therefore important to emphasize that both estimators have their advantages: the estimator $\widehat\psi$ is useful when we are interested in the values of all SDF loadings, whereas $\widehat\psi_D$ is useful when we are interested in the SDF loading of a particular factor.
\qedd
\end{remark}

\begin{remark}[Estimating $\sigma_{\psi,j}^2$]
Observe that $\sigma_{\psi,j}^2$ is the long-run variance of the normalized sample average $T^{-1/2}\sum_{t\in[T]}(z_{t,j}m_t - \E[z_{t,j}m_t])/\sigma_{z,j}^2$, which suggests the following plug-in estimator thereof. First, our assumptions in Appendix \ref{sec: additional assumptions} imply that the vector $\eta_{j,\{j\}^c}$ is bounded in the $\ell_1$-norm. We thus can estimate $z_{t,j}$ as $\widehat z_{t,j} = (f_{t,j} - \bar f_j) - \widehat\eta_{j,\{j\}^c}^\top(f_{t,\{j\}^c} - \bar f_{\{j\}^c})$, where $\widehat\eta_{j,\{j\}^c}$ is the Lasso estimator of $f_{t,j} - \bar f_j$ on $f_{t,\{j\}^c} - \bar f_{\{j\}^c}$. Second, because of the consistency of $\widehat\psi$ implied by Theorem \ref{thm: convergence rate}, we can estimate $m_t$ as $\widehat m_t = 1 - \widehat\psi^\top(f_t - \bar f)$. Third, we can estimate $\sigma_{z,j}^2$ as $\widehat\sigma_{z,j}^2 = T^{-1}\sum_{t\in[T]}\widehat z_{t,j}^2$. Finally, we can estimate $\sigma_{\psi,j}^2$ by applying the standard long-run variance estimation formulas to $T^{-1/2}\sum_{t\in[T]}(\widehat z_{t,j}\widehat m_t - \bar z_{j}\bar m_t])/\widehat \sigma_{z,j}^2$, where we set $\bar z_j = T^{-1}\sum_{t\in[T]}\widehat z_{t,j}$ and $\bar m_t = T^{-1}\sum_{t\in[T]}\widehat m_t$. It is straightforward to prove that the estimator $\widehat\sigma_{\psi,j}^2$ is consistent for $\sigma_{\psi,j}^2$ under our conditions.
\qedd
\end{remark}
\begin{remark}[Relation to Literature]
The asymptotic variance $\sigma_{\psi,j}^2$ appearing in Theorem \ref{thm: asy normality} is the same as the asymptotic variance in Theorem 1 of \citet{feng2020taming}. In fact, it follows from the proof of Theorem \ref{thm: asy normality} that our debiased forward-selection-based estimator $\widehat\psi_{D,j}$ is asymptotically equivalent to the corresponding Lasso-based estimator in \citet{feng2020taming}, which can be viewed as a consequence of the results on equivalence of semiparametric estimators in \citet{N94}. Our Theorem \ref{thm: asy normality} thus complements Theorem 1 in \citet{feng2020taming} by expanding the set of machine learning methods that are theoretically proven to work in the context of inference on SDF loadings.\qedd
%The estimators, however, are expected to be different in finite samples. \qedd
\end{remark}

\section{Empirical Analysis}\label{sec:Empirical-Analysis}

In this section, we present the data used in the empirical analysis
and the main results of the selection of the higher-order factors by
the forward selection Fama-MacBeth procedure.

\subsection{Data}

We present a description of the data used for the estimation of the
FS-FMB procedure. In order to estimate the model and perform cross-sectional
asset pricing tests, we use data from various sources. Test assets
are 484 characteristic-managed portfolios from \citet{kozak2020shrinking}, ranging from October 1973 to December 2019.\footnote{
Portfolios are from Serhiy Kozak's web page (\href{https://serhiykozak.com/\#/data-code}{https://serhiykozak.com/\#/data-code}).
} These portfolios are constructed by weighting stocks based on their
characteristic signals, which are based on cross-sectional ranks.
Small firms (below 0.01\% of market cap) are excluded. Baseline factors
are the \citet{fama2015five}'s five factors and the momentum factor
of \citet*{jegadeesh1993returns}, which we obtain from Kenneth French's
data library. We refer to these six baseline factors as the FF5M model.
Furthermore, we consider factors for 148 characteristics in 13 themes,
using data from the US equity market, constructed by \citet*{jensen2023there}.\footnote{Factors are capped value
weighted from Global Factor
Data (\href{https://jkpfactors.com/}{https://jkpfactors.com/}).} We refer to these 148 factors as the factor zoo (see, e.g., \citet*{feng2020taming}).

We construct factors based on the FF5M factors. We denote both the
higher-orders and the interactions of FF5M as higher-order factors.
Specifically, for any factor $f_{i}$ in the FF5M, we first form all
the $f_{i}^{2}$ and $f_{i}^{3}$ as candidate factors. Moreover,
we form all the pairwise interactions of degree 2 and 3. That is,
for any two factors $f_{i}$ and $f_{j}$ in the FF5M model, we form
the pairwise interactions of the type $f_{i}\times f_{j}$ and $f_{i}\times f_{j}^{2}$
as candidate factors. Therefore, the total number of candidate higher-order
factors is 57, where there are 6 $f_{i}^{2}$, 6 $f_{i}^{3}$, 15
$f_{i}\times f_{j}$, and 30 $f_{i}\times f_{j}^{2}$.

Table \ref{tab:Summary-statistics} presents summary statistics of
the test assets and factors. For the mean, the standard deviation,
and the quantiles, we report averages across assets. All returns are
monthly, in excess of the US risk-free rate, and reported in percentage.
The average mean return for the test assets is 0.65\% monthly, with
a standard deviation of 4.97\%. The average return for the FF5M factors
is lower and equal to 0.39\% monthly, with a standard deviation of
3.16\%. The average return for the factor zoo is even lower and equal
to 0.23\% monthly, with a standard deviation of 2.71\%. The higher-order
factors have a distribution characterized by more extreme realizations,
as exemplified by the 95\% quantile of the powers of degree 2 and
3 equal, respectively, to 42.75\% and 167.67\%. We note that the higher-order factors are generally non-tradable. Figure \ref{fig:Appendix_Distribution-of-Higher-Order-Factors}
in the Appendix plots the distributions of the returns of the higher-order
factors.

\begin{table}[]

\caption{Summary Statistics\label{tab:Summary-statistics}}

\medskip{}

\begin{minipage}{\textwidth}
\small\singlespacing

This table reports summary statistics of the test assets and factors.
For the mean, the standard deviation and the quantiles, we report
averages across assets. Returns are monthly and in excess of the US
risk-free rate. Test assets are 484 characteristic-managed portfolios from \citet{kozak2020shrinking}, ranging from October 1973 to December 2019.
The Fama-French five factors and the momentum factor (FF5M) are from
Ken French's data library. ``Interactions'' correspond to pairwise
interactions of the FF5M factors of degree 2 (i.e., $f_{i}\times f_{j}$)
and degree 3 (i.e., $f_{i}^{2}\times f_{j}$). ``Powers'' correspond
to the power of the individual FF5M factors of degree 2 (i.e., $f_{i}^{2}$)
and degree 3 (i.e., $f_{i}^{3}$). The factor zoo includes 148 factors
based on US equity data constructed by \citet{jensen2023there}. 
\end{minipage}

\medskip{}

\begin{centering}
\begin{tabular}{lccccccc}
\hline 
$R_{i,t}$ (\%) & Mean & Std & $q_{5\%}$ & $q_{50\%}$ & $q_{95\%}$ & N & T\tabularnewline
\hline 
%& & & & & & & \tabularnewline
Test Assets & 0.65 & 4.97 & -7.58 & 0.9 & 8.1 & 484 & 555\tabularnewline
FF5M & 0.39 & 3.16 & -4.63 & 0.4 & 5.13 & 6 & 555\tabularnewline
Interactions degree 2 & -0.48 & 16.47 & -17.82 & -0.02 & 14.12 & 15 & 555\tabularnewline
Interactions degree 3 & 0.53 & 20.14 & -6.6 & 0.02 & 8.44 & 30 & 555\tabularnewline
Powers degree 2 & 11.1 & 28.62 & 0.03 & 3.34 & 42.75 & 6 & 555\tabularnewline
Powers degree 3 & -9.83 & 581.32 & -147.67 & 0.23 & 167.67 & 6 & 555\tabularnewline
Factor zoo & 0.23 & 2.71 & -3.5 & 0.19 & 4.14 & 148 & 555\tabularnewline
%& & & & & & & \tabularnewline
\hline 
\end{tabular}
\par\end{centering}
\end{table}

\subsection{Selected Factors by Forward Selection Fama-MacBeth Regressions}

Our goal is to study whether and which of the higher-order factors
constructed from the six factors in the baseline FF5M are important
in explaining the SDF. To select the most relevant higher-order factors
to the SDF, we use the FS-FMB procedure that
we have discussed in detail in Section \ref{sec:Methodology} above.
We start with the six baseline FF5M factors and add the higher-order
interactions one at a time. 

We present the main estimation results from the second step of the
FS-FMB procedure in Table \ref{tab:Cross-sectional-performance}.
Panel A shows the performance of the baseline models using the
484 characteristic-managed portfolios from \citet{kozak2020shrinking}.
The baseline models include the standard CAPM, the \citet{fama1993common}'s
3-factor model (FF3), the \citet{fama2015five}'s 5-factor model (FF5)
and the 5-factor model augmented with the momentum factor of \citet{jegadeesh1993returns}
(FF5M). The adjusted cross-sectional R-squared for the CAPM is just
0.032, and the estimated intercept (0.009) is significantly different
from zero with a $t$-statistic of 3.6. The adjusted cross-sectional
R-squared reaches 0.312 for the FF5M model, while the estimate for
the intercept (0.004) remains significantly different from zero with
a $t$-statistic of 2.2. 

Panel B refers to the FS-FMB procedure. At
each step, we augment the FF5M by one of the additional higher-order
factors, which we select in order to maximize the adjusted cross-sectional
R-squared. The selection procedure stops when the improvement in the
R-squared in a given step is smaller than 1 pp. The forward selection FMB procedure
selects 7 higher-order factors: the SMB factor squared (SMB2), the
interaction between SMB2 and the momentum factor (SMB2{*}Mom), the
interaction between the momentum factor squared and the profitability
factor (Mom2{*}RMW), the market factor squared (Mkt-RF2), the interaction
between Mkt-RF2 and the profitability factor (Mkt-RF2{*}RMW), the
interaction between the market and size factors (Mkt-RF{*}SMB) and
the interaction between the value factor squared and the market factor
(HML2{*}Mkt-RF). Of these 7 higher-order factors that are selected,
2 are higher-orders (SMB2 and Mkt-RF2) and 5 are interactions (SMB2{*}Mom,
Mom2{*}RMW, Mkt-RF2{*}RMW, Mkt-RF{*}SMB, and HML2{*}Mkt-RF). Interestingly,
none of the higher-orders of degree 3 are selected. 

The adjusted cross-sectional R-squared increases from 31.2\% for the
baseline FF5M model to 41\% in the first step when SMB2 is included,
and gradually increases to 59\%, in the last step when all 7 factors
are included and the procedure stops. Across the seven steps, the
estimate for the intercept is around 0.003, with a $t$-statistic
in the final step of 1.819, which is below the threshold for significance
at the 5\% level.

\begin{table}[]
\caption{Cross-Sectional Performance\label{tab:Cross-sectional-performance}}

\medskip{}

\begin{minipage}{\textwidth}
\small\singlespacing

This table reports the results from the second step of the FMB method.
Panel A refers to baseline models: the standard CAPM, the Fama-French
3-factor, the Fama-French 5-factor and the Fama-French 5-factor and
momentum models. The table reports the adjusted cross-sectional R-squared,
the estimate for the intercept ($\alpha$) and associated t-statistic.
The t-statistic is based on Newey-West corrected standard errors.
Panel B refers to the FS-FMB procedure. At each step, we augment the
FF5M by one additional higher-order factor ($h_{j}$) which maximizes
the adjusted cross-sectional R-squared. The method stops when the
improvement in the R-squared in a given step is smaller than 1 pp.
\end{minipage}

\medskip{}

\centering{}%
\begin{tabular}{llccc}
\hline 
\multicolumn{5}{c}{Panel A: Baseline Models}\tabularnewline
\hline 
\# & Model & Adj. R-squared & $\alpha$ & t-stat ($\alpha$)\tabularnewline
\hline 
%& & & & \tabularnewline
1 & CAPM & 0.032 & 0.009 & 3.635\tabularnewline
2 & FF3 & 0.126 & 0.01 & 4.701\tabularnewline
3 & FF5 & 0.275 & 0.006 & 2.899\tabularnewline
4 & FF5M & 0.312 & 0.004 & 2.201\tabularnewline
%& & & & \tabularnewline
\hline 
\multicolumn{5}{c}{Panel B: FS-FMB procedure}\tabularnewline
\hline 
Step & $h_{j}$ & Adj. R-squared & $\alpha$ & t-stat ($\alpha$)\tabularnewline
\hline 
%& & & & \tabularnewline
1 & SMB2 & 0.41 & 0.003 & 1.821\tabularnewline
2 & SMB2{*}Mom & 0.467 & 0.004 & 2.02\tabularnewline
3 & Mom2{*}RMW & 0.495 & 0.003 & 1.324\tabularnewline
4 & Mkt-RF2 & 0.529 & 0.003 & 1.769\tabularnewline
5 & Mkt-RF2{*}RMW & 0.552 & 0.003 & 1.636\tabularnewline
6 & Mkt-RF{*}SMB & 0.572 & 0.004 & 2.161\tabularnewline
7 & HML2{*}Mkt-RF & 0.587 & 0.003 & 1.819\tabularnewline
%& & & & \tabularnewline
\hline 
\end{tabular}
\end{table}

We then examine the SDF loadings of the higher-order factors selected
by the FS-FMB procedure. The results are summarized in Table \ref{tab:Estimates-Loadings-Main}.
Panel A shows the non-debiased estimates based on the standard FMB
two-pass regression, while Panel B presents the debiased estimates
as outlined above. The directions
of the point estimates based on the non-debiased procedure and the debiased
procedure are the same. Moreover, the estimates from the two procedures
are similar in magnitude. The estimates based on the non-debiased
procedure are all significant at the 5\% level, while the estimates based
on the debiased procedure are all significant at the 10\% level with
5 of the 7 SDF loading estimates being significant at the 5\% level.
Overall, the results show that the higher-order factors are important
components of the SDF and significantly price assets in the cross-section. 

\begin{table}[]
\caption{SDF loadings\label{tab:Estimates-Loadings-Main}}
\medskip{}

\begin{minipage}{\textwidth}
\small\singlespacing
This table reports the estimates for the SDF loadings ($b_{j}$) associated
with the selected higher-order factors. Panel A reports non-debiased
estimates while Panel B reports debiased estimates. The t-statistic
is based on Newey-West corrected standard errors. Higher-order factors
are selected by the FS-FMB procedure (see Table \ref{tab:Cross-sectional-performance}).
\end{minipage}
\medskip{}

\centering{}%
\begin{tabular}{l>{\centering}p{2cm}>{\centering}p{2cm}>{\centering}p{2cm}>{\centering}p{2cm}}
\hline 
& \multicolumn{2}{c}{Panel A: Non-Debiased Estimates} & \multicolumn{2}{c}{Panel B: Debiased Estimates}\tabularnewline
\hline 
$h_{j}$ & $b_{j}$ & $t(b_{j})$ & $b_{j}$ & $t(b_{j})$\tabularnewline
\hline 
%& & & & \tabularnewline
SMB2 & 1.158 & 2.43 & 1.403 & 1.837\tabularnewline
SMB2{*}Mom & -1.193 & -4.074 & -1.004 & -1.812\tabularnewline
Mom2{*}RMW & -1.38 & -4.105 & -1.178 & -2.882\tabularnewline
Mkt-RF2 & 1.637 & 5.619 & 1.878 & 4.606\tabularnewline
Mkt-RF2{*}RMW & -3.53 & -4.169 & -3.87 & -4.301\tabularnewline
Mkt-RF{*}SMB & -1.404 & -3.032 & -2.012 & -3.636\tabularnewline
HML2{*}Mkt-RF & -1.385 & -2.823 & -1.606 & -2.039\tabularnewline
%& & & & \tabularnewline
\hline 
\end{tabular}
\end{table}

Figure \ref{fig:Actual-vs.-predicted} plots predicted against realized
returns for all the test assets. Panel A shows the performance of
the FF5M model and panel B presents the higher-order model. The FF5M
has a limited ability in explaining the average returns (Panel A),
in line with the relatively low R-squared of the model. On the other
hand, there is a strong positive relationship between the average
realized returns and the predicted returns for the higher-order model
selected by the FS-FMB procedure.

\begin{figure}[H]
  \centering
  \caption{Actual vs. Predicted Returns\label{fig:Actual-vs.-predicted}}

  \medskip

  \begin{minipage}{\textwidth}
    \small\singlespacing
    This figure plots actual mean returns of the test assets against the
    mean predicted returns from the FF5M model (subplot a) and the FF5M
    model augmented with the higher-order factors selected by the FS-FMB
    procedure (subplot b, see Table~\ref{tab:Cross-sectional-performance}). The
    predicted return is the OLS estimate of the betas times the estimated
    prices of risk.
  \end{minipage}

  \medskip

  \subfloat[\textbf{FF5M model}]{
    \includegraphics[width=0.45\textwidth]{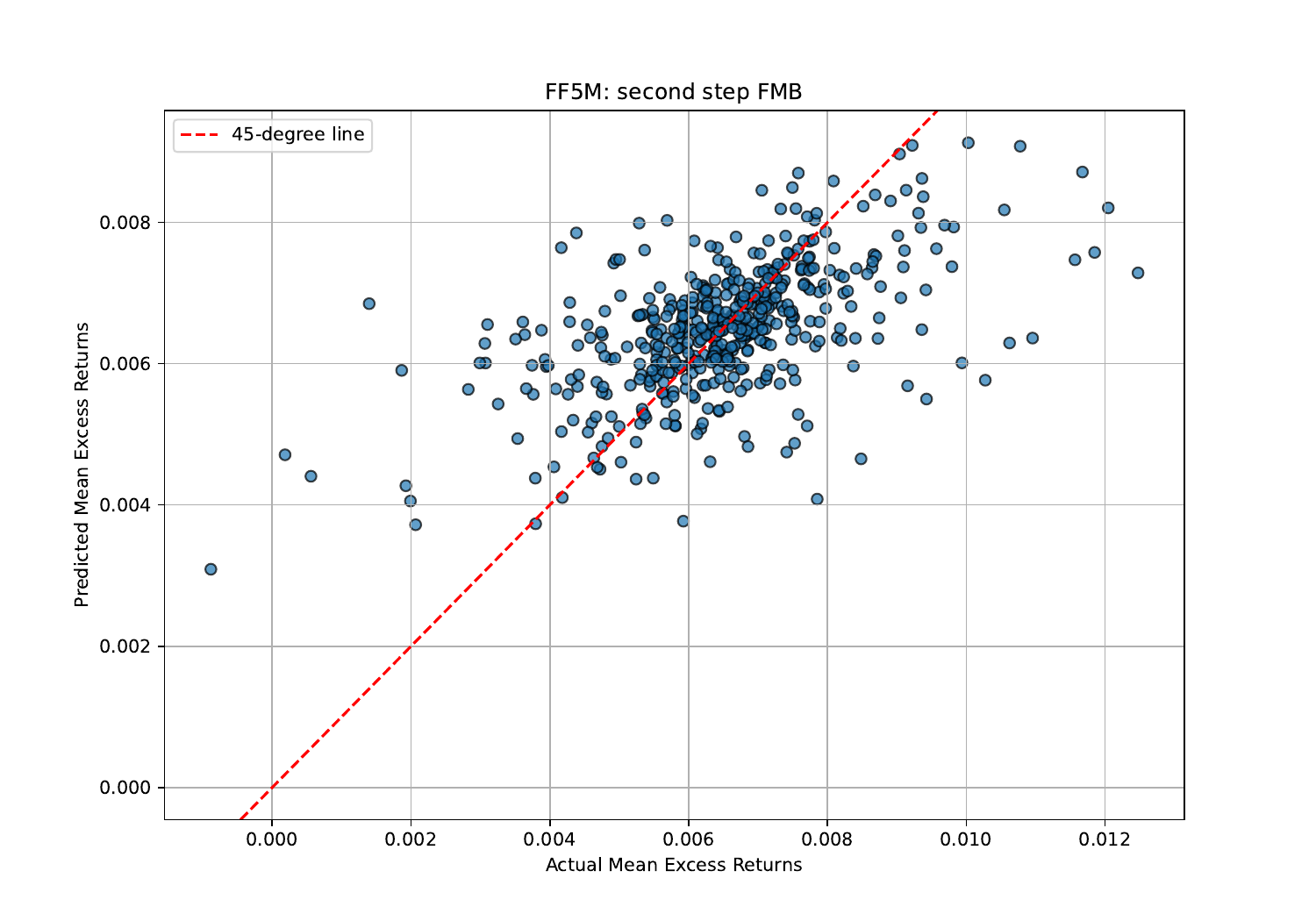}
    \label{fig:actual-vs-pred-ff5m}
  }
  \hfill
  \subfloat[\textbf{FF5M + higher-order factors}]{
    \includegraphics[width=0.45\textwidth]{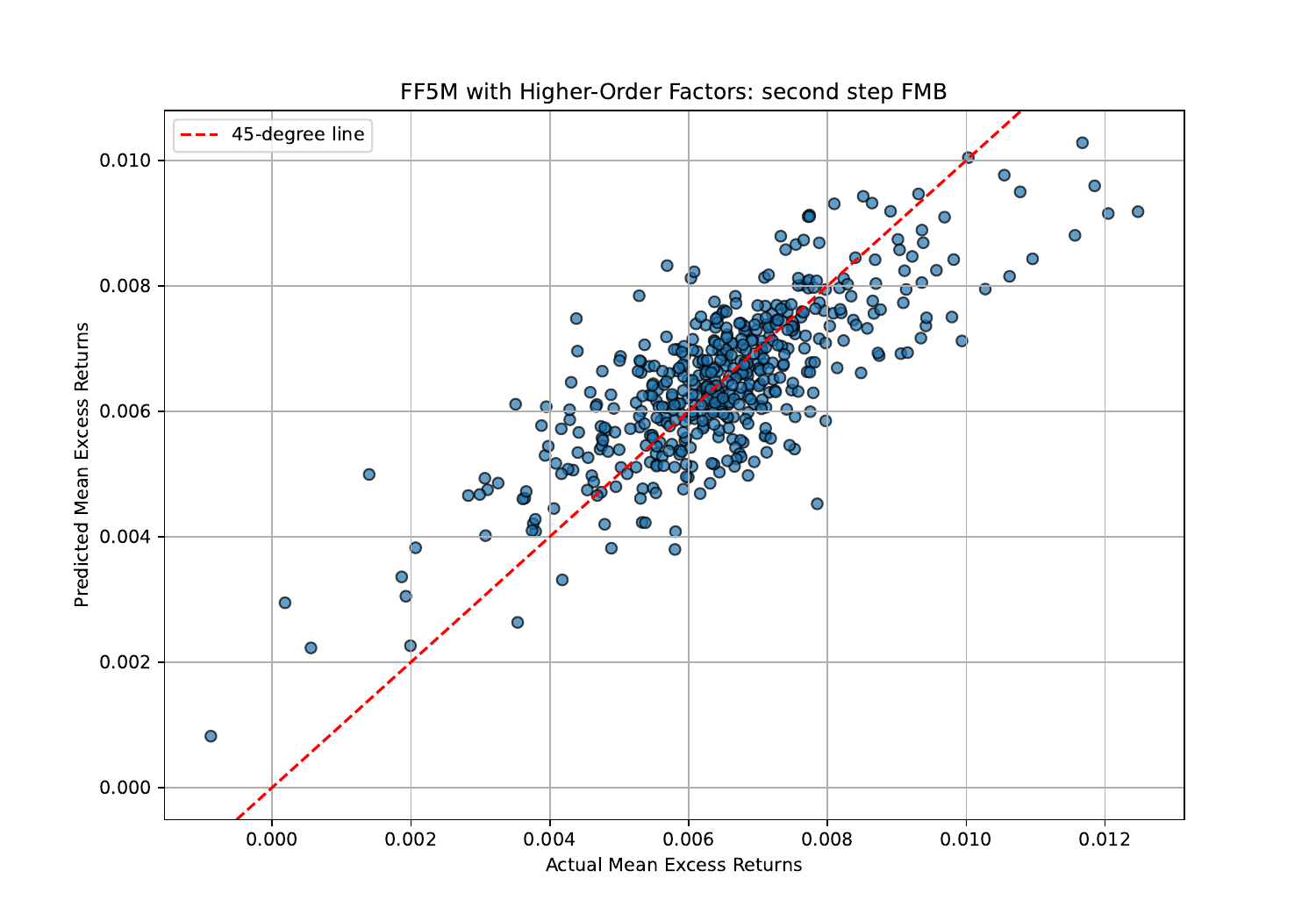}
    \label{fig:actual-vs-pred-ho}
  }
\end{figure}

We note that the predicted return used in the graph is the OLS estimate
of the betas from the first step of the two-pass FMB procedure times
the estimated prices of risk. That is, we do not impose the no-arbitrage
condition for the tradable factors that pins down the price of risk
by the sample mean of the factor. Doing so would, by construction,
result in an even worse fit for the baseline FF5M
models where all the factors are tradable. We present the results
in Table \ref{tab:Cross-sectional-performance-Restricted-Models}
in the Online Appendix where we impose the restriction for the tradable
factors. The results confirm the importance of higher-order factors
in pricing assets in the cross-section.

\subsection{Out-of-Sample}

The main results presented in Table \ref{tab:Cross-sectional-performance} are based on in-sample estimation. We further provide evidence of the out-of-sample performance of the higher-order model selected by the FS-FMB procedure. We consider two types of out-of-sample tests: (1) out-of-sample in terms of asset space and (2) out-of-sample in the time dimension.

\subsubsection{Asset space}

In Table \ref{tab:Forward-Regression-Method-CV}, we present an additional set of results based on cross-validation applied to the FS-FMB procedure. The cross-validation is applied to the second step of the FS-FMB procedure. This is an out-of-sample exercise in terms of asset space, where we select higher-order factors and estimate SDF loadings from the training assets, and then evaluate the out-of-sample performance in the out-of-sample assets.

Specifically, at each step of the FS-FMB procedure, we divide the sample into 5 equal-sized, non-overlapping folds. For each fold $k$, we use the remaining 4 folds as the training set and the left-out fold as the test set. In each training step, we estimate the cross-sectional regression of average returns on the betas and obtain estimates for the risk prices. Next, we use the estimated risk prices to compute the adjusted R-squared on the test set, where the degrees-of-freedom adjustment uses the number of test assets in each fold. This process is repeated 5 times and the adjusted cross-sectional R-squared associated with a given higher-order factor is the average, at each step, of the adjusted cross-sectional R-squareds obtained in the 5-fold cross-validation. The procedure stops when the improvement in the cross-validated adjusted R-squared falls below 1 pp. Table \ref{tab:Forward-Regression-Method-CV} reveals that the FS-FMB procedure with cross-validation (CV) selects a total of 7 higher-order factors, with an in-sample adjusted cross-sectional R-squared in the final step of 0.582. The selected factors are the same as the first 6 from the baseline method. The cross-validated adjusted cross-sectional R-squareds, which represent out-of-sample measures of fit of the cross-sectional model, reach 0.471 in the final step. Overall, these results show that the selected higher-order factor model has stable out-of-sample performance. In Appendix~\ref{sec:optimal_stopping}, we further show that the results are robust to the choice of stopping threshold by selecting it optimally and entirely out-of-sample via nested cross-validation (Table~\ref{tab:Forward-Regression-Method-CV-OptimalStop}).

\begin{table}[H]
\caption{FS-FMB Method with Cross-Validation\label{tab:Forward-Regression-Method-CV}}

\medskip{}

\begin{minipage}{\textwidth}
\small\singlespacing
This table reports the results from the second step of the FMB regression procedure for the FS-FMB procedure with 5-fold cross-validation. At each
step, we include one additional higher-order factor which maximizes
the adjusted cross-sectional R-squared computed with cross-validation
using 5-fold cross-validation. The procedure stops when the improvement in the cross-validated
adjusted R-squared in a given step is smaller than 1 pp. For each step, we report the
selected factor ($h_{j}$), the adjusted cross-sectional R-squared,
the adjusted cross-validated (CV) cross-sectional R-squared, the estimate
for the intercept ($\alpha$) and associated t-statistic. The t-statistic
is based on Newey-West corrected standard errors. The adjusted cross-validated
cross-sectional R-squared is the average at each step of the adjusted
R-squareds obtained in the 5-fold cross-validation, where the degrees-of-freedom
adjustment in each fold uses the number of test assets.
\end{minipage}

\medskip{}

\centering{}%
\begin{tabular}{lcccccc}
\hline 
Step & $h_{j}$ & Adj. R-squared & Adj. CV R-squared & & $\alpha$ & t-stat ($\alpha$)\tabularnewline
\hline 
%& & & & & & \tabularnewline
1 & SMB2 & 0.41 & 0.328 & & 0.003 & 1.821\tabularnewline
2 & SMB2{*}Mom & 0.467 & 0.382 & & 0.004 & 2.02\tabularnewline
3 & Mom2{*}RMW & 0.495 & 0.4 & & 0.003 & 1.324\tabularnewline
4 & Mkt-RF2 & 0.529 & 0.426 & & 0.003 & 1.769\tabularnewline
5 & Mkt-RF2{*}RMW & 0.552 & 0.446 & & 0.003 & 1.636\tabularnewline
6 & Mkt-RF{*}SMB & 0.572 & 0.458 & & 0.004 & 2.161\tabularnewline
7 & Mom2{*}HML & 0.582 & 0.471 & & 0.004 & 2.145\tabularnewline
%& & & & & & \tabularnewline
\hline 
\end{tabular}
\end{table}

\subsubsection{Time periods}

We next evaluate the out-of-sample performance in the time-series
of the higher-order factor model against the benchmark factor models
(i.e., CAPM, FF3, FF5, and FF5M). Specifically, we first split the
sample into two 50-50 subsamples based on the time series, where we
denote the first half of the sample as the training sample and the
second half as the out-of-sample. Then, we estimate the covariances
and the SDF loadings in the training sample based on the different
factor models. We denote the $R^{2}$ from these estimations as $R_{train}^{2}$.
Lastly, we compare the average asset returns in the out-of-sample
to the predicted asset returns using covariances and the SDF loadings
estimated from the training sample to obtain the out-of-sample $R^{2}$,
or $R_{oos}^{2}$. We note that both the $R_{train}^{2}$ and the
$R_{oos}^{2}$ are likely to be lower than the full sample $R^{2}$
for any model due to estimation errors, especially related to average
returns. It is well-known that estimations of the average equity returns
are noisy and require a long time series. Therefore, our focus is
comparing $R_{oos}^{2}$ for the higher-order factor model against
those of the benchmark factor models. 

Table \ref{tab:Comparison-of-R-squareds} presents the results. The
$R_{oos}^{2}$s of CAPM and the FF3 factor model are close to zero.
The $R_{oos}^{2}$s of the FF5 factor model and the FF5M factor model
are 7\% and 8.9\%, representing an improvement from the CAPM and the
FF3 factor model. The $R_{oos}^{2}$ of the higher-order factor model
selected from the FS-FMB procedure is significantly larger than the
benchmark factor models at 14.8\%, which is about twice as high as
the $R_{oos}^{2}$s of FF5 and FF5M factor models. The $R_{oos}^{2}$ of the higher-order factor model where higher-order factors are selected in the training sample is similar at 15.8\%.

One concern is that the 50-50 split where the first half of the sample
is the training sample and the second half of the sample is the out-of-sample
may be arbitrary. As a robustness check, we repeat the analyses by
randomly selecting, for 1,000 times, 50\% of the sample as the training
sample and the remaining as the out-of-sample. We report the results
in Table \ref{tab:Comparison-of-R-squareds-Robust} in the Online
Appendix. Table \ref{tab:Comparison-of-R-squareds-Robust} delivers
the same message -- the $R_{oos}^{2}$ of the higher-order factor
model is significantly higher than those of the benchmark factor models.

\subsection{Factor Zoo\label{sec:Culling-the-Factor}}

In this subsection, we show that the baseline FF5M model, augmented with the higher-order factors selected by the FS-FMB procedure, substantially reduces the majority of the factors in the factor zoo, or zoo factors.

We employ two empirical methods to support this argument. First, we perform FMB regressions to test the significance of the SDF loading of each factor in the factor zoo, controlling for our higher-order factors. Second, we construct factor mimicking portfolios for the seven higher-order factors to conduct time-series asset pricing tests on the zoo factors.

\begin{table}[H]
\caption{Comparison of R-squareds\label{tab:Comparison-of-R-squareds}}
\medskip{}

\begin{minipage}{\textwidth}
\small\singlespacing
This table presents the in-sample and out-of-sample $R^{2}$ under
different factor models. The factor models include the CAPM, the Fama-French
3-factor model (FF3), the Fama-French 5-factor model (FF5), the Fama-French
5-factor plus momentum factor model (FF5M), and the higher-order factor
model selected by the FS-FMB procedure in the full sample (column 5) and in the training sample (column 6). We first split the sample into
two 50-50 subsamples based on the time series, where we denote the
first half of the sample as the training sample and the second half
as the out-of-sample. Then, we estimate the covariances and the SDF
loadings in the training sample based on the different factor models.
We denote the $R^{2}$ from these estimations as $R_{train}^{2}$.
Lastly, we compare the average asset returns in the out-of-sample
to the predicted asset returns using covariances and the SDF loadings
estimated from the training sample to obtain the out-of-sample $R^{2}$,
or $R_{oos}^{2}$. The table reports the $R_{train}^{2}$ and $R_{oos}^{2}$
for each model.
\end{minipage}
\medskip{}

\centering{}%
\begin{tabular}{c>{\centering}p{1.5cm}>{\centering}p{1.5cm}>{\centering}p{1.5cm}>{\centering}p{1.5cm}>{\centering}p{2.5cm}>{\centering}p{2.5cm}}
\hline 
& (1) & (2) & (3) & (4) & (5) & (6)\tabularnewline
& CAPM & FF3 & FF5 & FF5M & Higher-Order (Full) & Higher-Order (Train)\tabularnewline
\hline 
$R_{train}^{2}$ & 0.003 & 0.204 & 0.273 & 0.312 & 0.494 & 0.449\tabularnewline
$R_{oos}^{2}$   & 0.001 & 0.012 & 0.070 & 0.089 & 0.148 & 0.158\tabularnewline
\hline 
\end{tabular}

\end{table}

\subsubsection{Cross-sectional method}

The first empirical strategy to evaluate whether the higher-order factors
can reduce the zoo factors is based on the two-pass Fama-MacBeth
procedure, motivated by the analysis in \citet{feng2020taming}. Specifically,
for each zoo factor, we estimate its SDF loading while controlling
for our higher-order factor model, which includes the FF5M and the
selected higher-order factors. 

Figure \ref{fig:Cull-Factor-Zoo-StandardFMB} summarizes the results.
Panel A plots the distribution of the absolute values of the t-statistics
for the intercept from the cross-sectional regressions. Panel B displays
the distribution of the absolute $t$-statistics associated with the
SDF loading of the factor zoo, using \citet{newey1987simple} corrected
standard errors. In both panels, the blue histogram represents the
estimates obtained by controlling for the FF5M factors, while the
green histogram represents the estimates obtained by additionally
controlling for the higher-order factors selected by the forward selection FMB
procedure. Vertical dashed lines indicate the median values of each distribution
-- a vertical blue dashed-line for estimates obtained controlling
for the FF5M factors and a vertical green dashed-line for estimates
obtained controlling for the higher-order factor model. Additionally,
we also include a vertical red dashed-line which corresponds to the median
value with no controls, or the factor model including only the factor
in the factor zoo. 

Panel A shows that the intercept estimates tend to be highly statistically
significant, with a median value of 4.00, when the factor model includes
only the factor in the factor zoo. When the factor model includes
both FF5M and the factor in the factor zoo, the median $t$-statistic
decreases substantially but remains significant at the 5\% level at
2.23. When the factor model further includes the higher-order factors,
the median $t$-statistic is no longer statistically significant at
the 5\% level at 1.84. 

Panel B shows that, when the factor model includes only the factor
in the factor zoo, the median SDF loading estimate for the factor
zoo is not statistically different from zero, with a $t$-statistic
of 0.86. Note that these estimates are likely biased because of omitted
factors (see, e.g., \citet{feng2020taming}). Controlling for the
FF5M, the median $t$-statistic increases substantially to 1.38. 34\%
of zoo factors become statistically significant at the 5\% level with
a $t$-statistic greater than 2 suggesting that FF5M helps to remove
some bias in the factor estimates. Importantly, when we additionally
control for the higher-order factors, we find that for 95\% of the zoo
factors the risk price estimate is not statistically different from
zero, with the median $t$-statistic dropping from 1.38 to only 0.71.
Specifically, we find that only 7 out of 148 zoo factors remain significant
at the 5\% level. Of these seven factors, three belong to the ``low
leverage'' category of \citet{jensen2023there} (\textit{cash\_at},
\textit{rd5\_at}, and \textit{rd\_sale}). The remaining significant
factors are in the categories ``size'' (\textit{rd\_me}), ``seasonality''
(\textit{seas\_6\_10an}), ``accruals'' (\textit{cowc\_grla}) and
``quality'' (\textit{qmj}). Overall, these results reveal that the
higher-order factors selected by the forward selection FMB procedure substantially reduce the majority of the zoo factors.

\subsubsection{Time-series method}

The second empirical strategy we employ to evaluate whether the higher-order factors can capture the factor zoo is based on time-series regressions,
where we study whether the zoo factors can be spanned by the FF5M
and the selected higher-order factors.

\begin{figure}[H]
  \centering
  \caption{Reducing the Factor Zoo in the Cross-Section\label{fig:Cull-Factor-Zoo-StandardFMB}}

  \medskip

  \begin{minipage}{\textwidth}
    \small\singlespacing
    This figure plots the distribution of absolute t-statistics associated with the risk prices ($t(\lambda)$) and the intercepts ($t(\alpha)$) of the factor zoo from \citet{jensen2023there} based on US equity data. For each zoo factor, we use the FMB procedure to estimate its risk price controlling for the FF5M (blue histogram) and the FF5M with the higher-order factors selected by the FS-FMB procedure (green histogram). The dashed lines in blue and green correspond to the median absolute t-statistic for the FF5M and for the FF5M with higher-order factor models, respectively. The dashed red line corresponds to the median absolute t-statistic with no controls. The t-statistics are based on Newey-West corrected standard errors.
  \end{minipage}

  \medskip

  \subfloat[\textbf{Distribution of $|t(\alpha)|$}]{
    \includegraphics[width=0.45\textwidth]{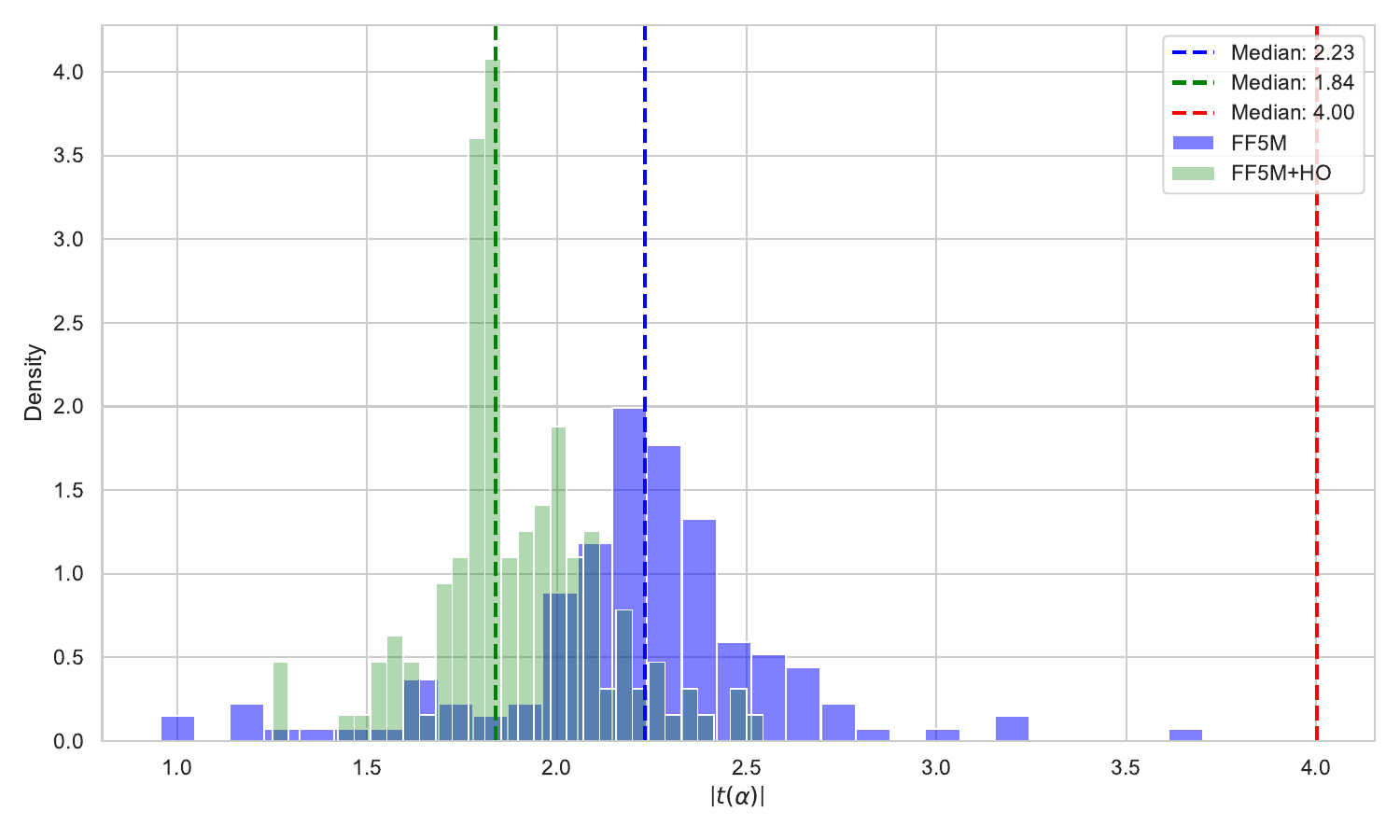}
  }
  \hfill
  \subfloat[\textbf{Distribution of $|t(\lambda)|$}]{
    \includegraphics[width=0.45\textwidth]{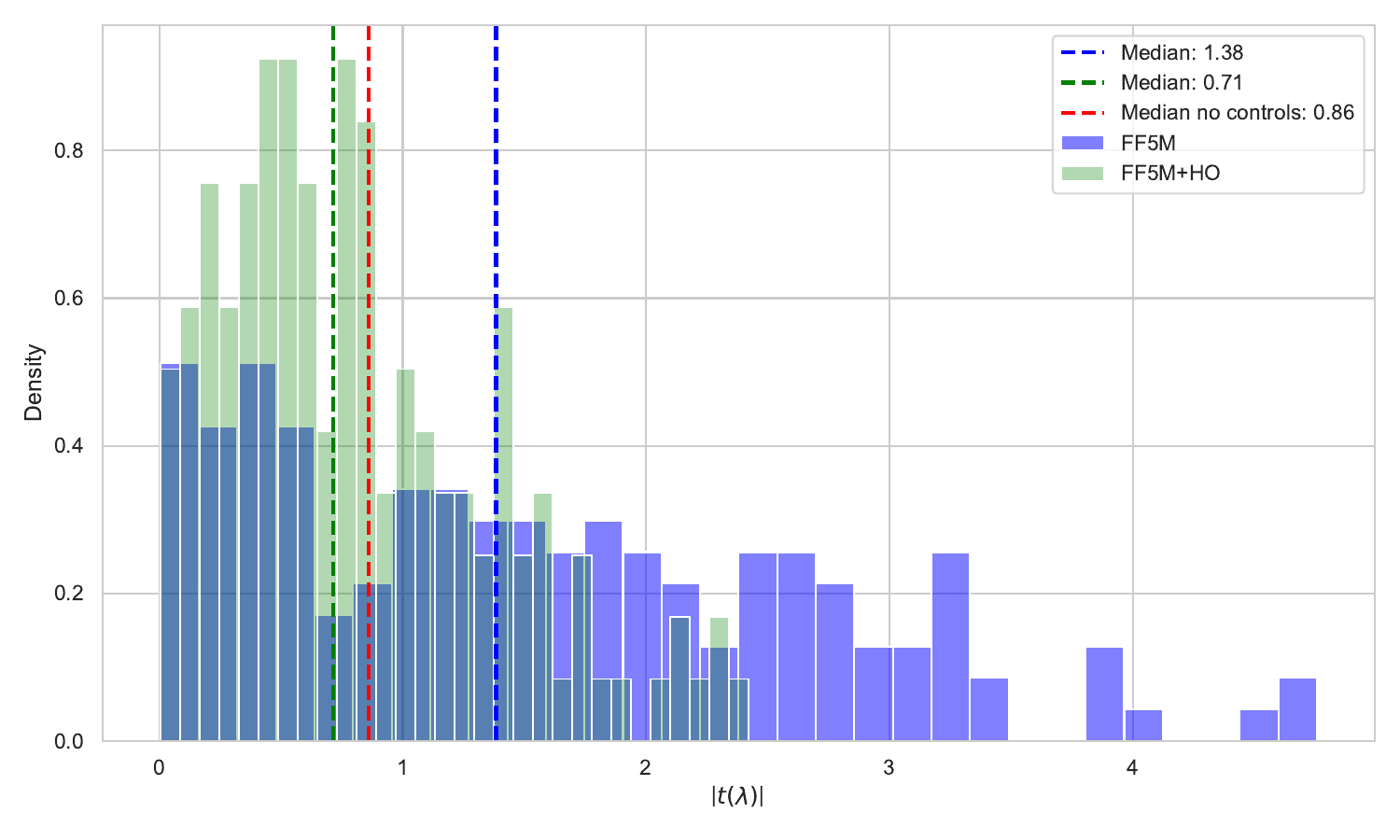}
  }
\end{figure}

Note that although the FF5M factors are tradable, the higher-order factors
are generally not directly tradable. Therefore, in order to estimate
time-series spanning regressions, we must first construct factor mimicking
portfolios of the higher-order factors based on tradable assets. In
order to do so, we project each higher-order factor onto the entire
factor zoo and obtain its factor-mimicking portfolio as the regression's
predicted values. These factor-mimicking portfolios only capture the
information in the factor zoo that is related to the higher-order
factors. The benefit of the factor-mimicking portfolio approach is
that we are able to convert the non-tradable higher-order factors
to tradable factors, but we note that the factor-mimicking portfolios
of the higher-order factors inevitably lose information from the
original factors. Figure \ref{fig:FactorMimicking_R2} shows the adjusted
$R^{2}$ of projecting each selected higher-order factor onto the
factor zoo. Mkt-RF{*}SMB has the lowest adjusted $R^{2}$ which is
less than 40\% and SMB2{*}Mom has the highest adjusted $R^{2}$ at
more than 80\%, suggesting the projections capture important fractions,
though to various degrees, of variations in the higher-order factors.
Nevertheless, we use them to check if they capture pricing information
of each factor. 

To do so, we estimate a time-series regression for each factor in
the factor zoo on the FF5M and the factor-mimicking portfolios of
the higher-order factors. The object of interest is the intercept. If
the higher-order factors indeed span a given factor in the factor
zoo, we expect to find a small and statistically insignificant intercept. 

Figure \ref{fig:Factor-zoo-TS-test-tradableHO} summarizes the results
of the time-series spanning regressions. Panel A plots the distribution
of the absolute $t$-statistics associated with the intercepts, and
Panel B shows the distribution of the absolute magnitudes for the
intercepts. The latter are annualized and expressed in percentage
points. 

Panel A shows that the median of the absolute $t$-statistic of the
average returns for the zoo factors is statistically significant at
the 5\% level at 2.06. Controlling for the FF5M brings the median
absolute $t$-statistic from 2.06 to 1.87, a decline of about 9\%.
Controlling for the higher-order factors brings the median absolute
$t$-statistic further down to 1.49, a decline of about 21\% from
1.87. Panel B reveals a similar pattern. The median of the absolute
intercept estimate of the zoo factors drops from 2.92 pp to 1.69 pp
controlling for the FF5M model. It further decreases to 1.34 pp when
the higher-order factors are included.

\subsubsection{Relationship between factor zoo and higher-order factors}

In the previous subsections, we show that the higher-order factor
model selected by the FS-FMB procedure is successful in accounting for a substantial amount of the zoo factors. Mechanically, the ability of the selected
higher-order factor model to capture the factor zoo has to come from
correlation between the zoo factors and some or all of the factors
in the higher-order factor model. In this subsection, we provide information
on the loadings of the zoo factors on the factors of the selected
higher-order factor model, which sheds light on the relative importance
of the higher-order factors in accounting for the zoo factors.

\begin{figure}[H]
  \centering
  \caption{Reducing the Factor Zoo in the Time-Series\label{fig:Factor-zoo-TS-test-tradableHO}}

  \medskip

  \begin{minipage}{\textwidth}
    \small\singlespacing
    This figure presents the results of factor spanning time-series regressions for each zoo factor controlling for the FF5M factors (blue histogram) and the FF5M factors with the mimicking portfolios corresponding to the higher-order factors selected by the forward selection FMB procedure (green histogram). Subplot (a) plots the distribution of the absolute t-statistics for the intercept ($\alpha$). Subplot (b) plots the distribution of the absolute $\alpha$s. We denote with dashed vertical lines the median value when controlling for the FF5M factors (blue dashed line), for the FF5M with the mimicking portfolios corresponding to the higher-order factors selected by the FS-FMB procedure (green dashed line), and when we include no controls (red dashed line). The factors are 148 factors from the factor zoo constructed in \citet{jensen2023there}. The t-statistic is based on Newey-West corrected standard errors.
  \end{minipage}

  \medskip

  \subfloat[\textbf{Distribution of $|t(\alpha)|$}]{
    \includegraphics[width=0.45\textwidth]{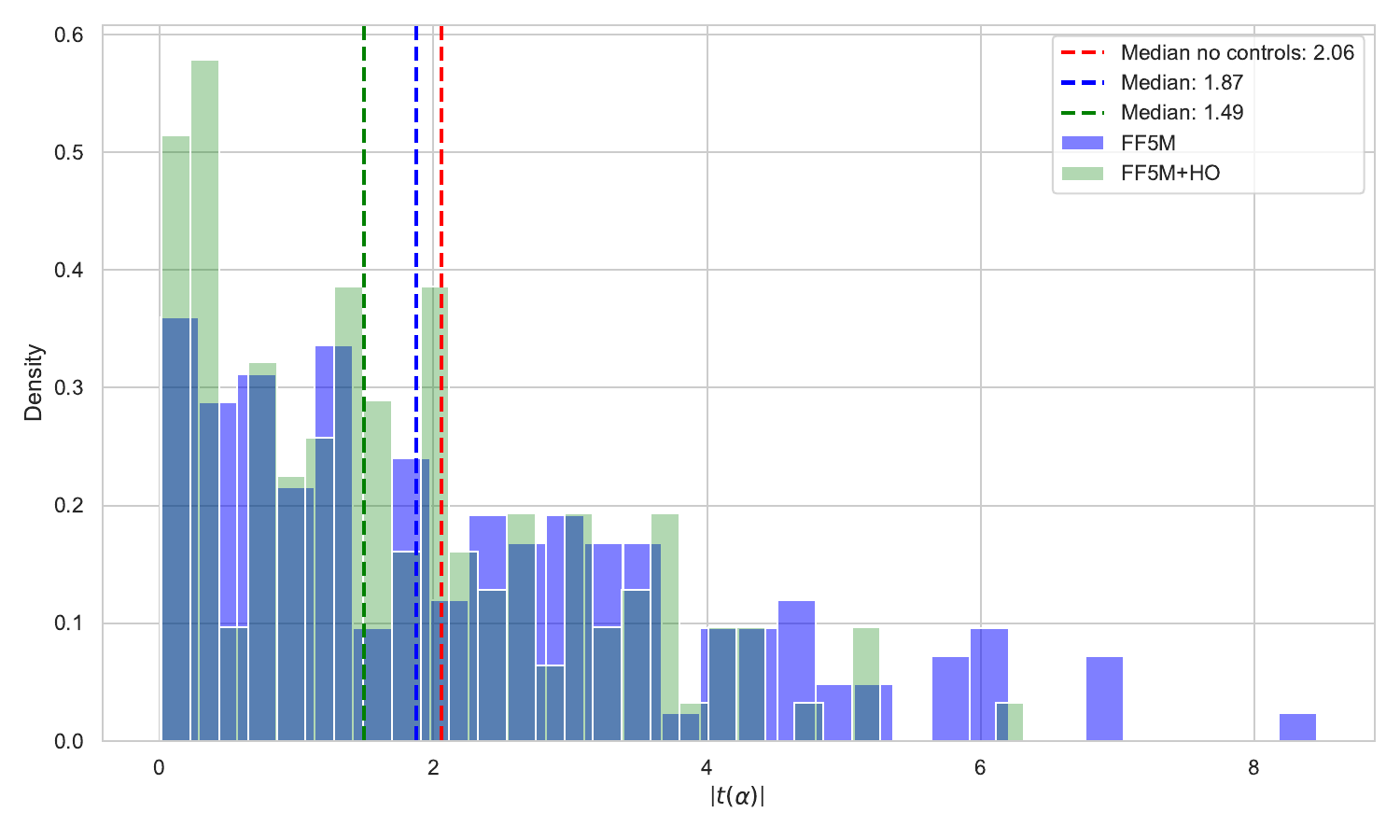}
  }
  \hfill
  \subfloat[\textbf{Distribution of $|\alpha|$}]{
    \includegraphics[width=0.45\textwidth]{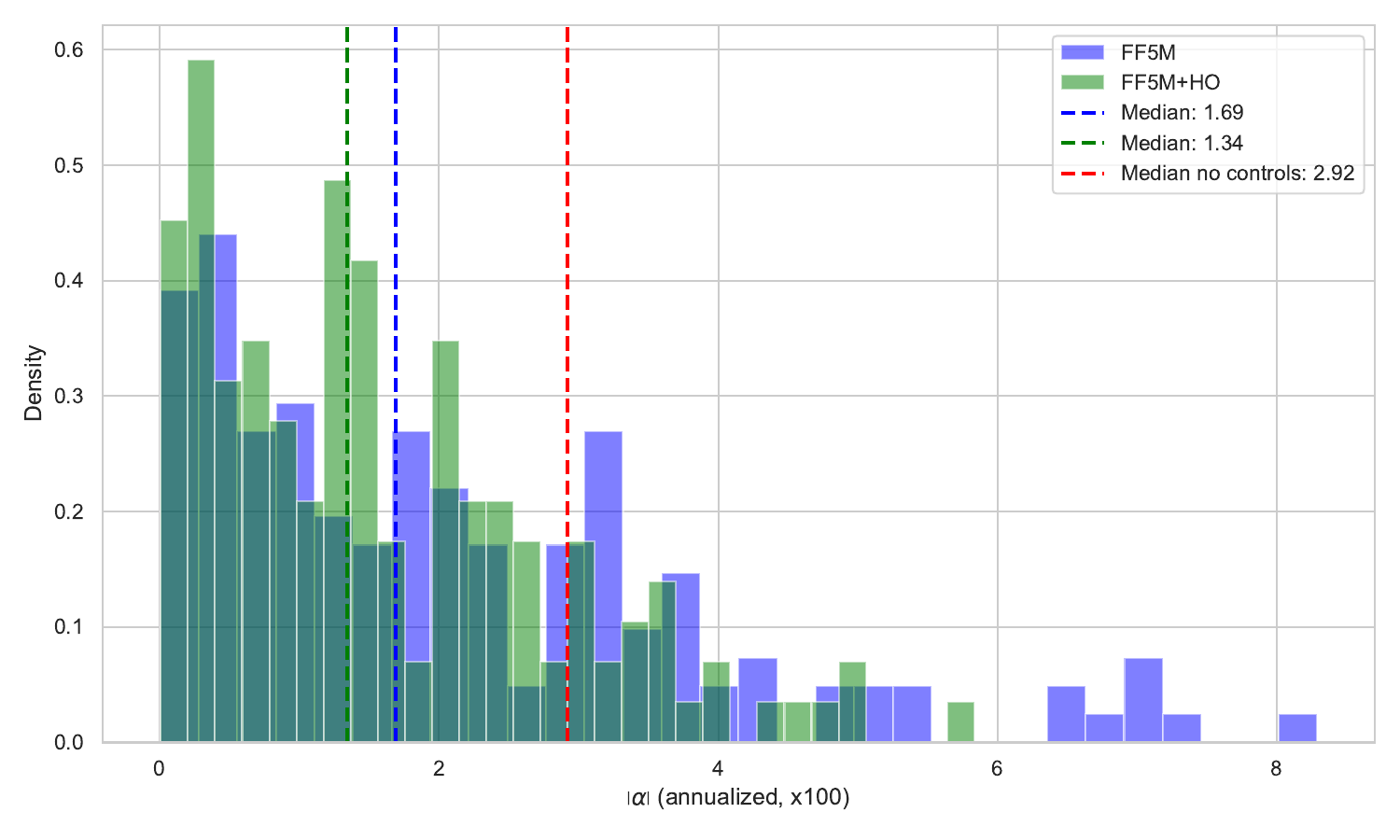}
  }
\end{figure}

Table \ref{tab:Significant-Loadings} documents the fraction of zoo factors with loading significantly different from zero at the 5\% confidence level on each of the FF5M factors and the higher-order factors selected by the forward selection FMB procedure. Column (1) shows the fraction of significance for the factors in FF5M model only and Column (2) reports the fraction of significance for the factors in the full higher-order factor model. Column (1) shows that the fractions of significance range from 45.9\% for Mkt-RF to 75\% for RMW for the 6 factors in FF5M. Column (2) shows that the fractions of significance remain stable for the 6 factors in FF5M, when we further include the selected higher-order factors. The fractions of significance for the higher-order factors range from 27\% for Mkt-RF2 to 55.4\% for HML2{*}Mkt-RF, indicating that it is not the case that a specific higher-order factor accounts for all the pricing results.
\begin{table}[H]
\caption{Loadings of the Zoo Factors on Higher-order Factor Model\label{tab:Significant-Loadings}}
\medskip{}

\begin{minipage}{\textwidth}
\small\singlespacing
This table reports the fraction of zoo factors with loading significantly different from zero at the 5\% confidence level on each of the FF5M factors and higher-order factors selected by the FS-FMB procedure. Significance is evaluated by t-statistics based on Newey-West corrected standard errors. The zoo factors are 148 factors constructed in \citet{jensen2023there}.
\end{minipage}
\medskip{}

\centering{}%
\begin{tabular}{lcc}
\hline 
& (1) & (2)\tabularnewline
\hline 
Factor & Frac Sig 5\% & Frac Sig 5\%\tabularnewline
\hline 
%& & \tabularnewline
Mkt-RF & 0.459 & 0.459\tabularnewline
SMB & 0.568 & 0.628\tabularnewline
HML & 0.642 & 0.709\tabularnewline
RMW & 0.75 & 0.669\tabularnewline
CMA & 0.601 & 0.595\tabularnewline
Mom & 0.486 & 0.405\tabularnewline
SMB2 & & 0.446\tabularnewline
SMB2{*}Mom & & 0.331\tabularnewline
Mom2{*}RMW & & 0.338\tabularnewline
Mkt-RF2 & & 0.27\tabularnewline
Mkt-RF2{*}RMW & & 0.358\tabularnewline
Mkt-RF{*}SMB & & 0.304\tabularnewline
HML2{*}Mkt-RF & & 0.554\tabularnewline
\hline 
%& & \tabularnewline
\# zoo factors & 148 & 148\tabularnewline
%& & \tabularnewline
\hline 
\end{tabular}
\end{table}

\subsection{Higher-Order Factors and Macro Risk\label{sec:Higher-Order-Factors-and}}

In this subsection, we conduct analyses to better understand the economic
forces the higher-order factors capture. In the first exercise, we
examine the correlations between the selected higher-order factors
and two important macroeconomic risks as proxied by the U.S. equity
market returns and the CBOE VIX index. 

In Table \ref{tab:correlation-macro}, we present the results. Panel
A documents the correlations with U.S. equity market returns and Panel
B shows the correlations with CBOE VIX index. Beyond correlations
over the full sample, we also calculate the correlations during the
NBER recession periods and the correlations corresponding to the bottom
10\%, the middle 10\%-90\%, and the top 10\% of higher-order factors'
return distribution.

\begin{table}[]
\caption{Correlations with Macroeconomic Risk\label{tab:correlation-macro}}

\medskip{}

\begin{minipage}{\textwidth}
\small\singlespacing
This table reports the correlation coefficients for each higher-order
factor selected by the forward selection FMB procedure with the U.S. equity
market return (Panel A) and with the CBOE VIX index (Panel B). We present both
unconditional (Full) correlation coefficients and conditional correlation
coefficients for specific subsamples: one containing U.S. NBER recession
periods, and others corresponding to the bottom and top 10\% of each
factor\textquoteright s return distribution, as well as the remaining
10\%--90\% quantile. Correlation coefficients involving the VIX index
are estimated starting in January 1990, reflecting the index\textquoteright s
availability from that date onward.
\end{minipage}

\medskip{}

\centering{}%
\begin{tabular}{lccccc}
\hline 
Factor $(h_{j})$/Sample & Full & NBER Recessions & $q_{\rightarrow0.1}(h_{j})$ & $q_{0.1\rightarrow0.9}(h_{j})$ & $q_{0.9\rightarrow}(h_{j})$\\
\hline 
\multicolumn{6}{c}{\textbf{Panel A: Correlation with US Equity Market}}\\
\hline 
%& & & & & \\
SMB2 & 0.016 & 0.205 & -0.062 & -0.01 & 0.1\\
SMB2{*}Mom & -0.032 & -0.345 & -0.119 & -0.018 & 0.085\\
Mom2{*}RMW & -0.113 & -0.109 & -0.085 & -0.077 & -0.001\\
Mkt-RF2 & -0.164 & -0.108 & 0.388 & 0.199 & -0.267\\
Mkt-RF2{*}RMW & -0.372 & -0.287 & 0.118 & 0.027 & -0.479\\
Mkt-RF{*}SMB & -0.206 & -0.136 & -0.293 & -0.009 & -0.299\\
HML2{*}Mkt-RF & 0.524 & 0.636 & 0.357 & 0.64 & 0.607\\
%& & & & & \\
\hline 
\multicolumn{6}{c}{\textbf{Panel B: Correlation with VIX Index}}\\
\hline 
%& & & & & \\
SMB2 & 0.134 & 0.03 & 0.076 & 0.189 & 0.005\\
SMB2{*}Mom & 0.001 & -0.07 & -0.233 & 0.043 & -0.053\\
Mom2{*}RMW & 0.015 & 0.116 & -0.07 & 0.195 & 0.236\\
Mkt-RF2 & 0.603 & 0.697 & 0.141 & 0.284 & 0.515\\
Mkt-RF2{*}RMW & 0.32 & 0.515 & -0.255 & 0.138 & 0.376\\
Mkt-RF{*}SMB & 0.257 & 0.486 & -0.298 & 0.197 & 0.438\\
HML2{*}Mkt-RF & -0.239 & -0.437 & -0.266 & -0.334 & 0.104\\
%& & & & & \\
\hline 
\end{tabular}
\end{table}

For the first two higher-order factors (i.e., SMB2 and SMB2{*}Mom),  the magnitude of the correlations with U.S. equity market returns
is markedly higher during the NBER recession periods relative to the full sample. For example, the correlation between SMB2 and U.S.  equity market returns is only 0.016 for the full sample, but it jumps to 0.205 during the NBER recession periods. For the last four higher-order
factors (i.e., Mkt-RF2, Mkt-RF2{*}RMW, Mkt-RF{*}SMB, and HML2{*}Mkt-RF), they tend to be highly correlated with the VIX index,
especially during the NBER recession periods. For example, the correlation between Mkt-RF2{*}RMW and VIX is 0.32 for the full sample and increases
to 0.515 during the NBER recession periods. These results show that the selected higher-order factors capture different macroeconomic risks, especially during economic downturns. 

The results in Table \ref{tab:correlation-macro} indicate that the higher-order terms---the powers and interactions---may systematically capture the state-dependent macroeconomic tail risks. To formalize this idea, in Panel A of  Table \ref{tab:Exposure-to-Macroeconomic-Combined}, we first evaluate the unconditional exposures of the selected higher-order factors to a broad set of common macroeconomic variables using OLS regressions. The results show that these macroeconomic variables are able to account for a meaningful fraction of the variation in the selected higher-order factors. The adjusted $R^{2}$ values range from 3\% to 45\%, exceeding 20\% for five of the seven higher-order factors. Among these macroeconomic variables, measures related to aggregate uncertainty, in particular financial uncertainty, are highly associated with these higher-order factors.

While the baseline unconditional regressions establish a link to macroeconomic risk, if the higher-order terms capture tail dependence, we would expect the relationship to be more pronounced in the tails of the distribution. To better capture the potential state dependence, we estimate quantile regressions of the higher-order factors on the same set of macroeconomic variables. In Panel B of Table \ref{tab:Exposure-to-Macroeconomic-Combined}, we report the conditional loadings for the bottom decile ($\tau=0.10$) and the top decile ($\tau=0.90$) of the higher-order factors on the macroeconomic variables. The extreme state dependence of these higher-order factors is most evident from the pseudo-$R^2$ of these quantile regressions. The explanatory power based on the pseudo-$R^2$ values for the tail regressions in Panel B of Table \ref{tab:Exposure-to-Macroeconomic-Combined} frequently exceeds 50\%. Moreover, the coefficient estimates on financial uncertainty and measures of intermediary capital constraints become significantly more pronounced during these tail realizations.

These results suggest that the higher-order factors may capture non-linear exposure to macroeconomic tail risks, particularly those associated with financial intermediary uncertainty and distress. This interpretation is consistent with intermediary asset pricing models (e.g., \citealp{he2013intermediary,brunnermeier2014macroeconomic}), in which asset prices are disproportionately sensitive to intermediary stress during periods of extreme macroeconomic environments. The fact that the relationship between higher-order factors and macroeconomic variables, in particular financial intermediary constraints, strengthens precisely in extreme macroeconomic environments provides an economic rationale for why these non-linear terms of common factors are priced and why they may account for the broader factor zoo.

% Two-page combined table: Panel A (OLS) and Panel B (Quantile Regressions)
% Requires \usepackage{caption} in the main document for \ContinuedFloat.

\begin{sidewaystable}
\caption{\scriptsize Exposure to Macroeconomic Factors\label{tab:Exposure-to-Macroeconomic-Combined}}
\medskip{}
\begin{minipage}{\textwidth}
\scriptsize\singlespacing
This table reports results of regressing each selected higher-order factor on a list of
common macroeconomic factors. All regressions also control for the market excess return
(Mkt-RF). Panel~A (this page) reports OLS estimates; significance is based on
Newey-West corrected standard errors. Panel~B (next page) reports quantile regressions
at $\tau=0.10$ (bottom decile) and $\tau=0.90$ (top decile) on the overlapping sample;
standard errors use the Powell~(1991) sandwich estimator with Epanechnikov kernel and
Hall-Sheather bandwidth.
{*}{*}{*}, {*}{*}, and {*} denote significance at the 1\%, 5\%, and 10\% levels,
respectively.
\end{minipage}
\medskip{}
\setlength{\tabcolsep}{3pt} % Reduce column padding
\adjustbox{max width=\textwidth}{
\centering{}{\scriptsize{}%
\begin{tabular}{lllccccccc}
\hline 
& & & {\scriptsize (1)} & {\scriptsize (2)} & {\scriptsize (3)} & {\scriptsize (4)} & {\scriptsize (5)} & {\scriptsize (6)} & {\scriptsize (7)} \tabularnewline
{\scriptsize Factor} & {\scriptsize Reference} & {\scriptsize Category} & {\scriptsize SMB2} & {\scriptsize SMB2{*}Mom} & {\scriptsize Mom2{*}RMW} & {\scriptsize Mkt-RF2} & {\scriptsize Mkt-RF2{*}RMW} & {\scriptsize Mkt-RF{*}SMB} & {\scriptsize HML2{*}Mkt-RF} \tabularnewline
\hline 
\multicolumn{9}{c}{Panel A: OLS Regressions} & \tabularnewline
\hline 
{\scriptsize Financial uncertainty} & {\scriptsize \citet{bali2017economic}} & {\scriptsize Vol-Uncert} & {\scriptsize 1.37{*}{*}} & {\scriptsize 0.41} & {\scriptsize -0.95} & {\scriptsize 0.75{*}{*}{*}} & {\scriptsize 0.18{*}{*}} & {\scriptsize 0.28{*}{*}} & {\scriptsize 0.12} \tabularnewline
{\scriptsize Policy uncertainty} & {\scriptsize \citet{baker2016measuring}} & {\scriptsize Vol-Uncert} & {\scriptsize 0.11} & {\scriptsize -0.12} & {\scriptsize 0.09} & {\scriptsize -0.07} & {\scriptsize 0.07{*}{*}} & {\scriptsize 0.07} & {\scriptsize -0.00} \tabularnewline
{\scriptsize Geopolitical uncertainty} & {\scriptsize \citet{caldara2022measuring}} & {\scriptsize Vol-Uncert} & {\scriptsize -0.16} & {\scriptsize -0.33} & {\scriptsize 0.41} & {\scriptsize 0.09} & {\scriptsize 0.05{*}} & {\scriptsize 0.01} & {\scriptsize -0.05} \tabularnewline
{\scriptsize News vol} & {\scriptsize \citet{manela2017news}} & {\scriptsize Vol-Uncert} & {\scriptsize -0.02{*}{*}} & {\scriptsize -0.01} & {\scriptsize 0.01} & {\scriptsize 0.01{*}{*}} & {\scriptsize -0.00} & {\scriptsize -0.00} & {\scriptsize -0.01{*}{*}} \tabularnewline
{\scriptsize Common idio-vol shock} & {\scriptsize \citet{herskovic2016common}} & {\scriptsize Vol-Uncert} & {\scriptsize 0.28} & {\scriptsize 0.82{*}{*}{*}} & {\scriptsize 0.27} & {\scriptsize 0.42} & {\scriptsize -0.18} & {\scriptsize 0.17} & {\scriptsize 0.41{*}{*}} \tabularnewline
{\scriptsize Real uncertainty} & {\scriptsize \citet{bali2017economic}} & {\scriptsize Vol-Uncert} & {\scriptsize 0.32} & {\scriptsize -1.51} & {\scriptsize 0.34} & {\scriptsize 1.66{*}{*}{*}} & {\scriptsize 0.14} & {\scriptsize 1.02} & {\scriptsize 0.13} \tabularnewline
{\scriptsize Macro uncertainty} & {\scriptsize \citet{bali2017economic}} & {\scriptsize Vol-Uncert} & {\scriptsize -1.02{*}} & {\scriptsize -0.20} & {\scriptsize 0.85} & {\scriptsize -0.37} & {\scriptsize -0.23} & {\scriptsize -0.27} & {\scriptsize -0.31} \tabularnewline
{\scriptsize Market vol} & {\scriptsize \citet{martin2019expected}} & {\scriptsize Vol-Uncert} & {\scriptsize -18.82{*}} & {\scriptsize -21.97{*}} & {\scriptsize 13.34} & {\scriptsize 2.30} & {\scriptsize -2.60{*}} & {\scriptsize 3.50} & {\scriptsize -1.45} \tabularnewline
{\scriptsize Labor income growth} & {\scriptsize \citet{dittmar2002nonlinear}} & {\scriptsize Macro} & {\scriptsize -72.50} & {\scriptsize 415.60{*}} & {\scriptsize -246.40} & {\scriptsize 212.87} & {\scriptsize -122.95{*}{*}} & {\scriptsize -64.08} & {\scriptsize -92.05} \tabularnewline
{\scriptsize Consum growth squared} & {\scriptsize \citet{chapman1997cyclical}} & {\scriptsize Macro} & {\scriptsize 3.75{*}} & {\scriptsize -3.29} & {\scriptsize 1.53} & {\scriptsize -1.16} & {\scriptsize -1.44{*}{*}{*}} & {\scriptsize -2.73{*}} & {\scriptsize 1.18{*}} \tabularnewline
{\scriptsize Consum growth} & {\scriptsize \citet{chapman1997cyclical}} & {\scriptsize Macro} & {\scriptsize 1.12} & {\scriptsize 0.37} & {\scriptsize -0.34} & {\scriptsize -0.27} & {\scriptsize 0.02} & {\scriptsize -0.04} & {\scriptsize 0.03} \tabularnewline
{\scriptsize Labor income growth} & {\scriptsize \citet{campbell1996understanding}} & {\scriptsize Macro} & {\scriptsize 149.00} & {\scriptsize -829.09{*}} & {\scriptsize 489.16} & {\scriptsize -423.63} & {\scriptsize 246.12{*}{*}} & {\scriptsize 128.21} & {\scriptsize 186.49} \tabularnewline
{\scriptsize Change long-term govt yield} & {\scriptsize \citet{sweeney1986pricing}} & {\scriptsize Macro} & {\scriptsize -6.03} & {\scriptsize -25.61{*}{*}} & {\scriptsize 15.40} & {\scriptsize 8.59} & {\scriptsize -2.87} & {\scriptsize 9.66{*}{*}} & {\scriptsize 3.58} \tabularnewline
{\scriptsize Intermediary capital risk factor} & {\scriptsize \citet{he2017intermediary}} & {\scriptsize Inter} & {\scriptsize -0.85{*}} & {\scriptsize 0.99} & {\scriptsize 1.62} & {\scriptsize -0.16} & {\scriptsize -0.06} & {\scriptsize -0.57} & {\scriptsize -1.58{*}{*}} \tabularnewline
{\scriptsize Intermediary capital return factor} & {\scriptsize \citet{he2017intermediary}} & {\scriptsize Inter} & {\scriptsize 1.40} & {\scriptsize -5.41{*}{*}} & {\scriptsize 1.99} & {\scriptsize 2.02{*}} & {\scriptsize 0.34} & {\scriptsize 1.44} & {\scriptsize 1.98{*}{*}{*}} \tabularnewline
{\scriptsize Amihud liquidity} & {\scriptsize \citet{chen2018micro}} & {\scriptsize Liquidity} & {\scriptsize 0.09{*}} & {\scriptsize -0.09} & {\scriptsize -0.05} & {\scriptsize -0.01} & {\scriptsize -0.03} & {\scriptsize 0.03} & {\scriptsize 0.06{*}} \tabularnewline
{\scriptsize Liquidity} & {\scriptsize \citet{pastor2003liquidity}} & {\scriptsize Liquidity} & {\scriptsize -0.04} & {\scriptsize -0.39} & {\scriptsize 0.18} & {\scriptsize -0.57} & {\scriptsize 0.08} & {\scriptsize -0.84{*}{*}} & {\scriptsize 0.03} \tabularnewline
{\scriptsize Roll liquidity} & {\scriptsize \citet{chen2018micro}} & {\scriptsize Liquidity} & {\scriptsize -0.05} & {\scriptsize 0.41} & {\scriptsize -0.34} & {\scriptsize 0.08} & {\scriptsize 0.03} & {\scriptsize -0.07} & {\scriptsize -0.11{*}} \tabularnewline
{\scriptsize Investor sentiment (orth macro)} & {\scriptsize \citet{baker2006investor}} & {\scriptsize Sentiment} & {\scriptsize 0.02} & {\scriptsize -0.03} & {\scriptsize 0.00} & {\scriptsize 0.10{*}{*}} & {\scriptsize 0.03{*}{*}} & {\scriptsize 0.03} & {\scriptsize -0.04} \tabularnewline
{\scriptsize Investor sentiment (orth macro)} & {\scriptsize \citet{huang2015investor}} & {\scriptsize Sentiment} & {\scriptsize -0.01} & {\scriptsize 0.06} & {\scriptsize -0.06} & {\scriptsize -0.08{*}{*}} & {\scriptsize -0.00} & {\scriptsize -0.05{*}} & {\scriptsize -0.01} \tabularnewline
& & {\scriptsize Adj. $R^{2}$} & {\scriptsize 0.22} & {\scriptsize 0.10} & {\scriptsize 0.03} & {\scriptsize 0.39} & {\scriptsize 0.23} & {\scriptsize 0.26} & {\scriptsize 0.45} \tabularnewline
& & &  &  &  &  &  &  &  \tabularnewline
\hline 
\end{tabular}}}{\scriptsize\par}
\end{sidewaystable}

% --- Page 2 ---

\begin{sidewaystable}
\ContinuedFloat
\caption[]{\scriptsize (Continued)}
\medskip{}
\setlength{\tabcolsep}{3pt} % Reduce column padding
\adjustbox{max width=\textwidth}{
\centering{}{\scriptsize{}%
\begin{tabular}{lllccccccc}
\hline 
& & & {\scriptsize (1)} & {\scriptsize (2)} & {\scriptsize (3)} & {\scriptsize (4)} & {\scriptsize (5)} & {\scriptsize (6)} & {\scriptsize (7)} \tabularnewline
{\scriptsize Factor} & {\scriptsize Reference} & {\scriptsize Category} & {\scriptsize SMB2} & {\scriptsize SMB2{*}Mom} & {\scriptsize Mom2{*}RMW} & {\scriptsize Mkt-RF2} & {\scriptsize Mkt-RF2{*}RMW} & {\scriptsize Mkt-RF{*}SMB} & {\scriptsize HML2{*}Mkt-RF} \tabularnewline
\hline 
\multicolumn{9}{c}{Panel B: Quantile Regressions} & \tabularnewline
\hline 
\multicolumn{10}{c}{Bottom Decile ($\tau=0.10$)} \tabularnewline
\hline 
{\scriptsize Financial uncertainty} & {\scriptsize \citet{bali2017economic}} & {\scriptsize Vol-Uncert} & {\scriptsize 0.05} & {\scriptsize -2.01{*}{*}} & {\scriptsize -8.02{*}{*}{*}} & {\scriptsize 0.15} & {\scriptsize 0.24} & {\scriptsize -1.40{*}{*}{*}} & {\scriptsize 0.07} \tabularnewline
{\scriptsize Policy uncertainty} & {\scriptsize \citet{baker2016measuring}} & {\scriptsize Vol-Uncert} & {\scriptsize 0.04} & {\scriptsize -0.40} & {\scriptsize 0.73} & {\scriptsize 0.06} & {\scriptsize 0.09} & {\scriptsize 0.03} & {\scriptsize 0.05} \tabularnewline
{\scriptsize Geopolitical uncertainty} & {\scriptsize \citet{caldara2022measuring}} & {\scriptsize Vol-Uncert} & {\scriptsize -0.02} & {\scriptsize 0.14} & {\scriptsize 0.13} & {\scriptsize -0.18{*}} & {\scriptsize -0.01} & {\scriptsize 0.19} & {\scriptsize 0.10} \tabularnewline
{\scriptsize News vol} & {\scriptsize \citet{manela2017news}} & {\scriptsize Vol-Uncert} & {\scriptsize -0.00} & {\scriptsize 0.00} & {\scriptsize 0.05} & {\scriptsize -0.01} & {\scriptsize 0.01{*}} & {\scriptsize 0.00} & {\scriptsize -0.00} \tabularnewline
{\scriptsize Common idio-vol shock} & {\scriptsize \citet{herskovic2016common}} & {\scriptsize Vol-Uncert} & {\scriptsize -0.05} & {\scriptsize 0.62} & {\scriptsize 2.65} & {\scriptsize 0.21} & {\scriptsize -0.20{*}{*}} & {\scriptsize 0.14} & {\scriptsize 1.13{*}{*}{*}} \tabularnewline
{\scriptsize Real uncertainty} & {\scriptsize \citet{bali2017economic}} & {\scriptsize Vol-Uncert} & {\scriptsize 0.30} & {\scriptsize 0.61} & {\scriptsize -4.57} & {\scriptsize 0.76} & {\scriptsize 0.26} & {\scriptsize 0.08} & {\scriptsize 0.85} \tabularnewline
{\scriptsize Macro uncertainty} & {\scriptsize \citet{bali2017economic}} & {\scriptsize Vol-Uncert} & {\scriptsize -0.18} & {\scriptsize -0.42} & {\scriptsize 4.31} & {\scriptsize -0.19} & {\scriptsize -0.07} & {\scriptsize 0.67} & {\scriptsize -1.34} \tabularnewline
{\scriptsize Market vol} & {\scriptsize \citet{martin2019expected}} & {\scriptsize Vol-Uncert} & {\scriptsize 0.09} & {\scriptsize 12.52} & {\scriptsize 47.42} & {\scriptsize 4.57} & {\scriptsize -4.31{*}{*}} & {\scriptsize 6.90} & {\scriptsize -0.23} \tabularnewline
{\scriptsize Labor income growth} & {\scriptsize \citet{dittmar2002nonlinear}} & {\scriptsize Macro} & {\scriptsize 0.12} & {\scriptsize 2.22} & {\scriptsize -17.27} & {\scriptsize -1.24} & {\scriptsize -2.39} & {\scriptsize 1.75} & {\scriptsize -4.96} \tabularnewline
{\scriptsize Consum growth squared} & {\scriptsize \citet{chapman1997cyclical}} & {\scriptsize Macro} & {\scriptsize 1.46{*}{*}} & {\scriptsize -13.64{*}{*}{*}} & {\scriptsize 15.74} & {\scriptsize 0.34} & {\scriptsize -0.03} & {\scriptsize -5.25{*}{*}{*}} & {\scriptsize 2.62} \tabularnewline
{\scriptsize Consum growth} & {\scriptsize \citet{chapman1997cyclical}} & {\scriptsize Macro} & {\scriptsize 0.01} & {\scriptsize -0.60} & {\scriptsize -2.36} & {\scriptsize -0.31} & {\scriptsize 0.46{*}{*}} & {\scriptsize -0.23} & {\scriptsize -0.55} \tabularnewline
{\scriptsize Labor income growth} & {\scriptsize \citet{campbell1996understanding}} & {\scriptsize Macro} & {\scriptsize -0.03} & {\scriptsize 0.20} & {\scriptsize -0.66} & {\scriptsize -0.16} & {\scriptsize -0.10} & {\scriptsize 0.19} & {\scriptsize 0.17} \tabularnewline
{\scriptsize Change long-term govt yield} & {\scriptsize \citet{sweeney1986pricing}} & {\scriptsize Macro} & {\scriptsize -1.35} & {\scriptsize -10.70} & {\scriptsize 69.61} & {\scriptsize -3.41} & {\scriptsize -1.31} & {\scriptsize 13.19} & {\scriptsize 24.87{*}{*}} \tabularnewline
{\scriptsize Intermediary capital risk factor} & {\scriptsize \citet{he2017intermediary}} & {\scriptsize Inter} & {\scriptsize 0.09} & {\scriptsize 0.82} & {\scriptsize -9.53{*}{*}} & {\scriptsize -0.17} & {\scriptsize -0.29} & {\scriptsize -0.46} & {\scriptsize -1.33} \tabularnewline
{\scriptsize Intermediary capital return factor} & {\scriptsize \citet{he2017intermediary}} & {\scriptsize Inter} & {\scriptsize 0.12} & {\scriptsize -3.82} & {\scriptsize 24.22{*}{*}{*}} & {\scriptsize 1.70{*}{*}{*}} & {\scriptsize 1.12{*}} & {\scriptsize -0.65} & {\scriptsize 2.73{*}} \tabularnewline
{\scriptsize Amihud liquidity} & {\scriptsize \citet{chen2018micro}} & {\scriptsize Liquidity} & {\scriptsize -0.00} & {\scriptsize 0.08} & {\scriptsize 0.61} & {\scriptsize 0.02} & {\scriptsize -0.02} & {\scriptsize -0.02} & {\scriptsize -0.12} \tabularnewline
{\scriptsize Liquidity} & {\scriptsize \citet{pastor2003liquidity}} & {\scriptsize Liquidity} & {\scriptsize -0.06} & {\scriptsize -0.31} & {\scriptsize 0.91} & {\scriptsize 0.10} & {\scriptsize -0.06} & {\scriptsize -0.94{*}{*}{*}} & {\scriptsize -0.14} \tabularnewline
{\scriptsize Roll liquidity} & {\scriptsize \citet{chen2018micro}} & {\scriptsize Liquidity} & {\scriptsize -0.02} & {\scriptsize 0.29} & {\scriptsize -1.68} & {\scriptsize -0.02} & {\scriptsize -0.10} & {\scriptsize -0.02} & {\scriptsize -0.09} \tabularnewline
{\scriptsize Investor sentiment (orth macro)} & {\scriptsize \citet{baker2006investor}} & {\scriptsize Sentiment} & {\scriptsize 0.01} & {\scriptsize -0.05} & {\scriptsize -0.08} & {\scriptsize -0.01} & {\scriptsize 0.02} & {\scriptsize -0.13{*}{*}} & {\scriptsize -0.07} \tabularnewline
{\scriptsize Investor sentiment (orth macro)} & {\scriptsize \citet{huang2015investor}} & {\scriptsize Sentiment} & {\scriptsize 0.00} & {\scriptsize 0.15} & {\scriptsize 0.07} & {\scriptsize 0.02} & {\scriptsize -0.04} & {\scriptsize 0.16{*}{*}} & {\scriptsize -0.14} \tabularnewline
& & {\scriptsize Pseudo $R^{2}$} & {\scriptsize 0.07} & {\scriptsize 0.45} & {\scriptsize 0.30} & {\scriptsize 0.27} & {\scriptsize 0.50} & {\scriptsize 0.56} & {\scriptsize 0.64} \tabularnewline
\multicolumn{10}{c}{Top Decile ($\tau=0.90$)} \tabularnewline
\hline 
{\scriptsize Financial uncertainty} & {\scriptsize \citet{bali2017economic}} & {\scriptsize Vol-Uncert} & {\scriptsize 6.30{*}{*}{*}} & {\scriptsize 2.80{*}{*}} & {\scriptsize -1.41} & {\scriptsize 0.35} & {\scriptsize 0.04} & {\scriptsize 1.06{*}{*}} & {\scriptsize 0.45{*}{*}} \tabularnewline
{\scriptsize Policy uncertainty} & {\scriptsize \citet{baker2016measuring}} & {\scriptsize Vol-Uncert} & {\scriptsize 0.19} & {\scriptsize -0.21} & {\scriptsize 0.56{*}{*}} & {\scriptsize -0.13} & {\scriptsize 0.08} & {\scriptsize 0.08} & {\scriptsize 0.02} \tabularnewline
{\scriptsize Geopolitical uncertainty} & {\scriptsize \citet{caldara2022measuring}} & {\scriptsize Vol-Uncert} & {\scriptsize -0.59} & {\scriptsize -0.19} & {\scriptsize -0.05} & {\scriptsize -0.39{*}} & {\scriptsize -0.01} & {\scriptsize 0.01} & {\scriptsize -0.08} \tabularnewline
{\scriptsize News vol} & {\scriptsize \citet{manela2017news}} & {\scriptsize Vol-Uncert} & {\scriptsize -0.03} & {\scriptsize -0.03} & {\scriptsize 0.02} & {\scriptsize 0.01} & {\scriptsize 0.00} & {\scriptsize -0.01} & {\scriptsize -0.00} \tabularnewline
{\scriptsize Common idio-vol shock} & {\scriptsize \citet{herskovic2016common}} & {\scriptsize Vol-Uncert} & {\scriptsize -0.22} & {\scriptsize 1.14{*}{*}} & {\scriptsize -0.66} & {\scriptsize 0.52} & {\scriptsize -0.41{*}{*}} & {\scriptsize 0.08} & {\scriptsize -0.13} \tabularnewline
{\scriptsize Real uncertainty} & {\scriptsize \citet{bali2017economic}} & {\scriptsize Vol-Uncert} & {\scriptsize 1.85} & {\scriptsize 4.06} & {\scriptsize -2.74} & {\scriptsize -1.18} & {\scriptsize -0.88} & {\scriptsize 0.35} & {\scriptsize -0.11} \tabularnewline
{\scriptsize Macro uncertainty} & {\scriptsize \citet{bali2017economic}} & {\scriptsize Vol-Uncert} & {\scriptsize -4.59{*}} & {\scriptsize -5.22{*}{*}} & {\scriptsize 3.78{*}{*}} & {\scriptsize 3.01{*}{*}} & {\scriptsize 0.40} & {\scriptsize -0.70} & {\scriptsize -0.06} \tabularnewline
{\scriptsize Market vol} & {\scriptsize \citet{martin2019expected}} & {\scriptsize Vol-Uncert} & {\scriptsize -29.50{*}{*}{*}} & {\scriptsize -6.93} & {\scriptsize 3.49} & {\scriptsize -6.77} & {\scriptsize -0.11} & {\scriptsize -1.76} & {\scriptsize -0.96} \tabularnewline
{\scriptsize Labor income growth} & {\scriptsize \citet{dittmar2002nonlinear}} & {\scriptsize Macro} & {\scriptsize -7.56} & {\scriptsize 5.36} & {\scriptsize -0.94} & {\scriptsize -7.10} & {\scriptsize -0.29} & {\scriptsize -1.55} & {\scriptsize -0.00} \tabularnewline
{\scriptsize Consum growth squared} & {\scriptsize \citet{chapman1997cyclical}} & {\scriptsize Macro} & {\scriptsize 8.78{*}} & {\scriptsize -3.33} & {\scriptsize 4.63} & {\scriptsize -0.84} & {\scriptsize 3.46{*}{*}{*}} & {\scriptsize -3.20} & {\scriptsize 0.89} \tabularnewline
{\scriptsize Consum growth} & {\scriptsize \citet{chapman1997cyclical}} & {\scriptsize Macro} & {\scriptsize 0.51} & {\scriptsize 1.26} & {\scriptsize 0.74} & {\scriptsize 0.26} & {\scriptsize 0.01} & {\scriptsize 0.00} & {\scriptsize 0.32{*}} \tabularnewline
{\scriptsize Labor income growth} & {\scriptsize \citet{campbell1996understanding}} & {\scriptsize Macro} & {\scriptsize -0.38} & {\scriptsize 0.71} & {\scriptsize -0.16} & {\scriptsize -1.42} & {\scriptsize 0.07} & {\scriptsize 0.35} & {\scriptsize -0.11} \tabularnewline
{\scriptsize Change long-term govt yield} & {\scriptsize \citet{sweeney1986pricing}} & {\scriptsize Macro} & {\scriptsize -3.25} & {\scriptsize -9.91} & {\scriptsize 23.17} & {\scriptsize 11.73} & {\scriptsize 2.65} & {\scriptsize 16.16} & {\scriptsize -0.13} \tabularnewline
{\scriptsize Intermediary capital risk factor} & {\scriptsize \citet{he2017intermediary}} & {\scriptsize Inter} & {\scriptsize -0.59} & {\scriptsize 5.58{*}{*}} & {\scriptsize -2.09} & {\scriptsize 0.75} & {\scriptsize -0.73} & {\scriptsize -0.31} & {\scriptsize 0.03} \tabularnewline
{\scriptsize Intermediary capital return factor} & {\scriptsize \citet{he2017intermediary}} & {\scriptsize Inter} & {\scriptsize -0.03} & {\scriptsize -12.63{*}{*}{*}} & {\scriptsize 4.95{*}{*}} & {\scriptsize 2.67{*}{*}} & {\scriptsize 1.04} & {\scriptsize -0.00} & {\scriptsize 0.18} \tabularnewline
{\scriptsize Amihud liquidity} & {\scriptsize \citet{chen2018micro}} & {\scriptsize Liquidity} & {\scriptsize -0.05} & {\scriptsize -0.30} & {\scriptsize 0.05} & {\scriptsize 0.40{*}} & {\scriptsize 0.01} & {\scriptsize -0.18} & {\scriptsize 0.02} \tabularnewline
{\scriptsize Liquidity} & {\scriptsize \citet{pastor2003liquidity}} & {\scriptsize Liquidity} & {\scriptsize 0.91} & {\scriptsize -0.40} & {\scriptsize -1.02} & {\scriptsize -0.94{*}} & {\scriptsize -0.09} & {\scriptsize -0.21} & {\scriptsize 0.45{*}{*}} \tabularnewline
{\scriptsize Roll liquidity} & {\scriptsize \citet{chen2018micro}} & {\scriptsize Liquidity} & {\scriptsize 0.46} & {\scriptsize 0.63} & {\scriptsize -0.58} & {\scriptsize -0.45} & {\scriptsize -0.05} & {\scriptsize 0.29} & {\scriptsize 0.03} \tabularnewline
{\scriptsize Investor sentiment (orth macro)} & {\scriptsize \citet{baker2006investor}} & {\scriptsize Sentiment} & {\scriptsize 0.12} & {\scriptsize 0.09} & {\scriptsize 0.18} & {\scriptsize 0.20{*}{*}} & {\scriptsize 0.07{*}{*}} & {\scriptsize -0.03} & {\scriptsize 0.01} \tabularnewline
{\scriptsize Investor sentiment (orth macro)} & {\scriptsize \citet{huang2015investor}} & {\scriptsize Sentiment} & {\scriptsize -0.38{*}{*}} & {\scriptsize -0.20} & {\scriptsize 0.24{*}} & {\scriptsize 0.01} & {\scriptsize 0.12{*}{*}{*}} & {\scriptsize -0.03} & {\scriptsize -0.05} \tabularnewline
& & {\scriptsize Pseudo $R^{2}$} & {\scriptsize 0.49} & {\scriptsize 0.11} & {\scriptsize 0.52} & {\scriptsize 0.70} & {\scriptsize 0.62} & {\scriptsize 0.54} & {\scriptsize 0.58} \tabularnewline
\hline 
\end{tabular}}}{\scriptsize\par}
\end{sidewaystable}

\section{Additional Results\label{sec:Robustness}}

In this section, we provide additional results for our empirical analyses,
and we discuss extensions of our baseline setup. 

\subsection{Simulation Results}

The selected higher-order model incorporates 7 higher-order factors beyond the traditional FF5M model. Our results indicate that the model performance significantly improves with the 7 higher-order factors where the R-squared of the second-pass regression increases from 31.2\% to 58.7\%. However, as noted by \citet{lewellen2010skeptical}, spurious factors can artificially inflate model fit in the second-pass of the Fama-MacBeth procedure, particularly when the test assets exhibit a strong factor structure. This concern is closely related to the weak-factor identification problem in linear asset-pricing models \citep{bryzgalova2015spurious}. \citet{lewellen2010skeptical} illustrate this concern using the Fama-French 25 Size-BM portfolios which have a particularly strong factor structure. While our analysis mitigates this issue by employing a broad set of 484 test assets and demonstrating the superior out-of-sample performance, we further conduct simulation exercises to confirm that our results are not driven by spurious factors.

\begin{figure}[H]
\caption{Sampling Distribution of R-squareds\label{fig:Sampling-distribution-R2}}

\medskip{}

\begin{minipage}{\textwidth}
\small\singlespacing
This figure plots the sampling distribution (green histogram) of adjusted
cross-sectional R-squareds obtained by applying the FS-FMB procedure to
the FF5M factors and randomly generated higher-order factors. First,
we randomly generate 57 factors $i.i.d.\sim N\left(0,\sigma^2(\text{Mkt-RF})\right)$.
Next, at each step of the FS-FMB procedure, we select one factor to maximize
the adjusted cross-sectional R-squared until $\Delta\left(\text{adj }R^{2}\right)\leq1\%$.
The procedure is repeated 1,000 times to obtain the sampling distribution
of R-squareds. In subplot (a), we include the distribution obtained when the FF5M model is
augmented with 7 randomly generated factors, repeating the procedure 1,000 times.
The pink-dashed line corresponds to the adjusted R-squared of the FF5M model. In subplot (b), we include the distribution obtained when the number of additionally selected factors is constrained to be $\leq 7$ (blue histogram).
The red-dashed line corresponds to the R-squared of the FF5M model with higher-order
factors selected by the forward selection FMB procedure (see Table \ref{tab:Cross-sectional-performance}).
\end{minipage}

\medskip{}

\centering
\subfloat[]{
  \includegraphics[width=0.45\textwidth]{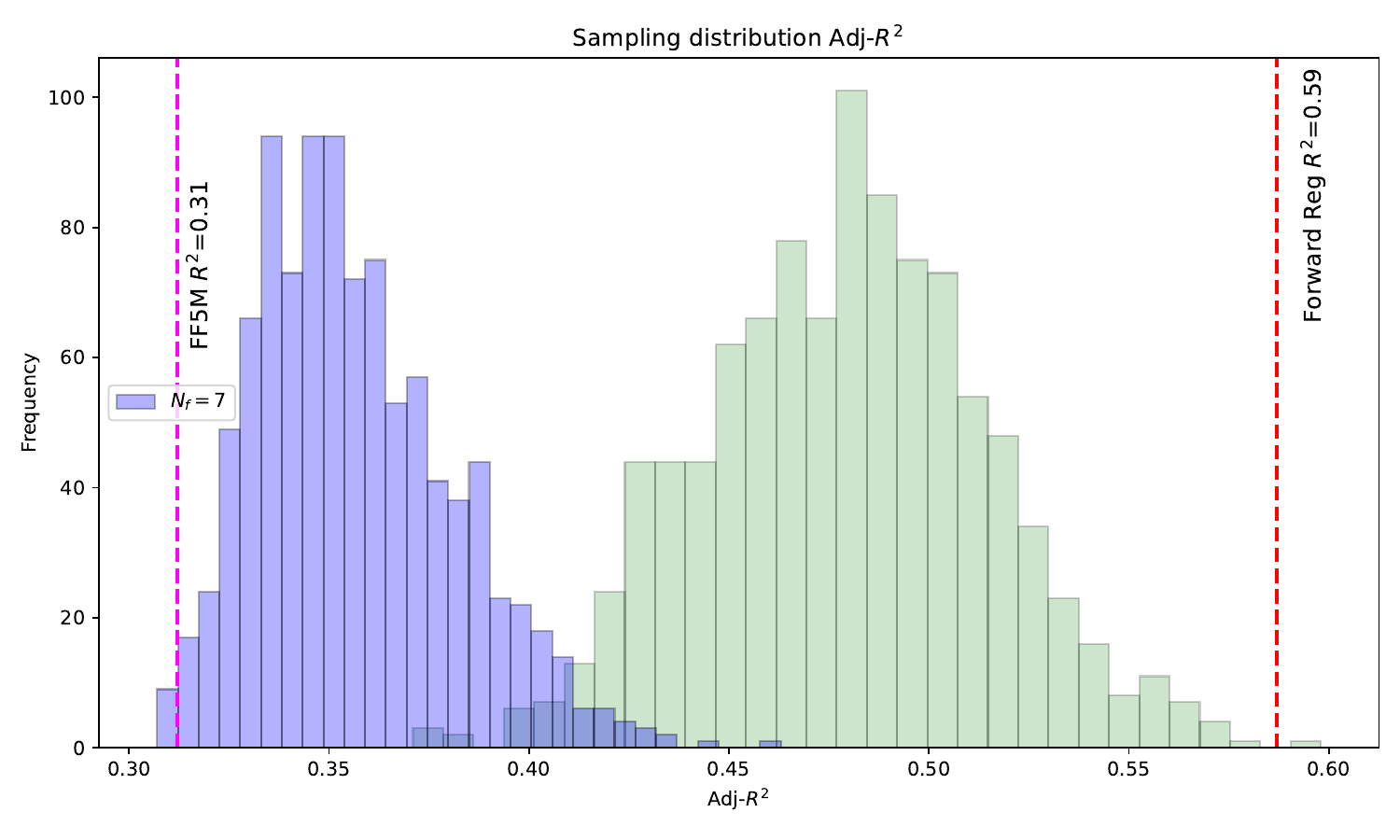}
}
\hfill
\subfloat[]{
  \includegraphics[width=0.45\textwidth]{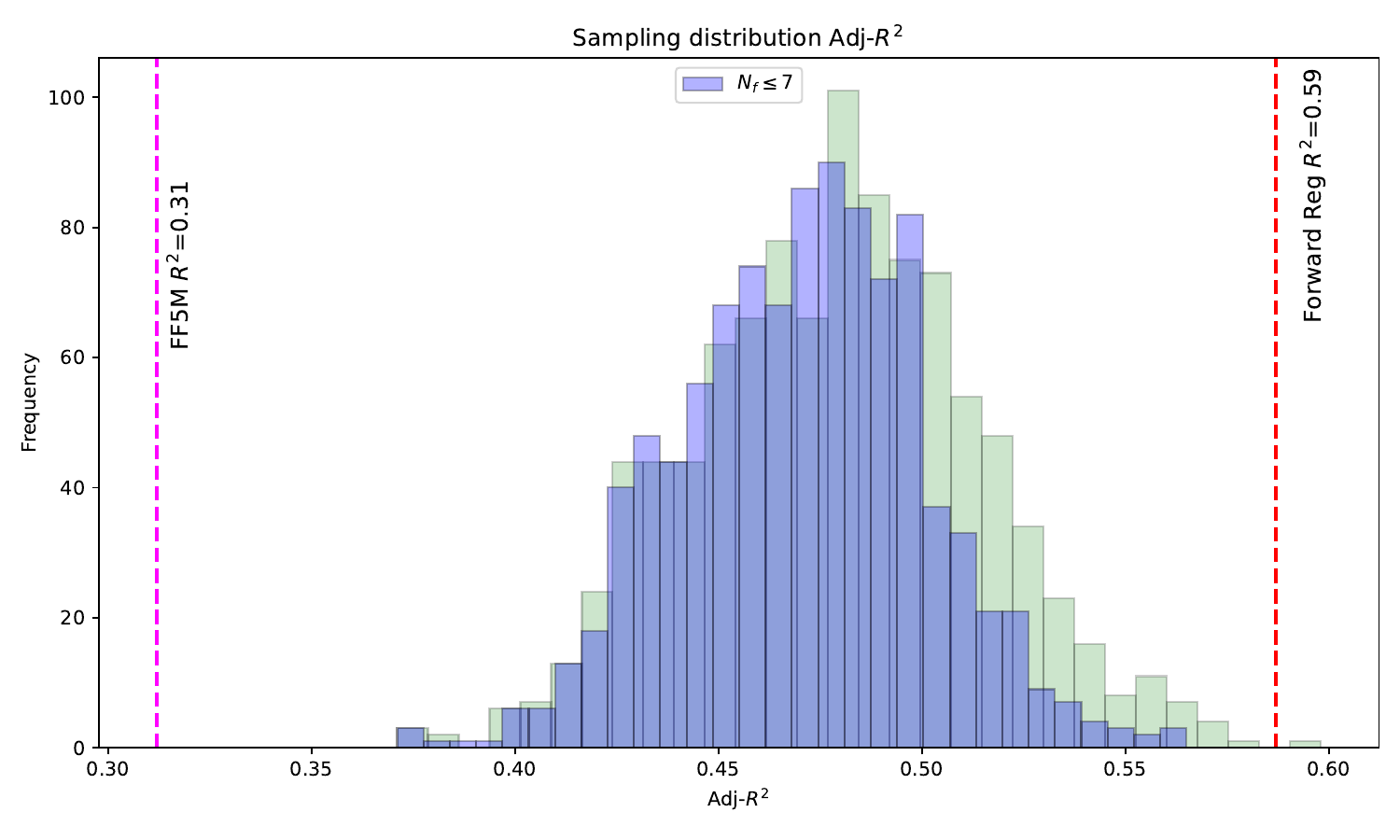}
}
\end{figure}

Therefore, we estimate the sampling distribution of cross-sectional
R-squareds obtained using randomly generated factors, following the
approach advocated by \citet{lewellen2010skeptical}. Specifically,
in each simulation, we generate a set of 57 random factors, matching
the number of candidate higher-order factors. These random factors
are assumed to be $i.i.d.\sim N(0,\sigma^{2})$, and we set $\sigma^{2}$
equal to the sample variance of the market factor (Mkt-RF). We then
apply a forward selection FMB procedure, sequentially selecting factors
that maximize the adjusted cross-sectional R-squared when added to
the FF5M factors. We repeat the procedure 1,000 times by generating
1,000 independent sets of factors.

The green histogram in Figure \ref{fig:Sampling-distribution-R2}
summarizes the sampling distribution of the resulting R-squareds from
the 1,000 simulations. In the simulation exercise, we obtain a cross-sectional
R-squared larger than 0.59 only in 0.1\% of the simulations, meaning
that the adjusted R-squared from the higher-order model is significant
at the 0.1\% level. This simulation exercise highlights the fact that
it is extremely unlikely that our results are obtained by chance.

Note that the procedure can select a number of factors different than
7, which is the number of higher-order factors presented in Table
\ref{tab:Cross-sectional-performance}. In fact, in the simulation
exercise we conduct, the maximum number of random factors selected
is 12. In Panel A, we further compare the sampling distribution
of R-squareds with that obtained by directly adding 7 randomly generated
factors, $i.i.d.\sim N(0,\sigma^{2})$, to the FF5M factors, repeating
the procedure 1,000 times (blue histogram). In this case, the maximum
adjusted cross-sectional R-squared is only 0.463.

Finally, we compare the sampling distribution of R-squareds
with that obtained with the additional constraint that the number
of selected factors cannot be larger than 7 (blue histogram) in Panel B. In this case, we never observe R-squareds larger than 0.59 (maximum adjusted
cross-sectional R-squared of 0.565).

\subsection{An alternative factor zoo}

In order to investigate the robustness of our results in terms of accounting for the factor zoo, we replicate the analysis with the two-pass Fama-MacBeth procedure using an alternative set of zoo factors, based on the factors constructed by \citet{ChenZimmermann2021}. Specifically, for each factor in \citet{ChenZimmermann2021}, we estimate its SDF loading while controlling for our higher-order factor model, which includes the FF5M and the selected higher-order factors. Figure \ref{fig:Cull-Factor-Zoo-StandardFMB-ChenZimmermann} summarizes the results. Panel A shows that the intercept estimates tend to be highly statistically significant, with a median value of 3.97, when the factor model includes only the factor in the factor zoo. When the factor model includes both FF5M and the factor in the factor zoo, the median $t$-statistic decreases substantially but remains significant at the 5\% level at 2.23. When the factor model further includes the higher-order factors, the median $t$-statistic is no longer statistically significant at the 5\% level at 1.87. Panel B reveals that, when the factor model includes only the factor in the factor zoo, the median SDF loading estimate for the factor zoo is not statistically different from zero, with a $t$-statistic of 0.80. Note that these estimates are likely biased because of omitted factors (see, e.g., \citet{feng2020taming}). Controlling for the FF5M, the median $t$-statistic increases substantially to 1.91. 46\% of zoo factors become statistically significant at the 5\% level with a $t$-statistic greater than 2 suggesting that FF5M helps to remove some bias in the factor estimates. Importantly, when we additionally control for the higher-order factors, we find that for 93\% of the zoo factors the risk price estimate is not statistically different from zero, with the median $t$-statistic dropping from 1.91 to only 0.72. Specifically, we find that only 12 out of 159 zoo factors remain significant at the 5\% level. These results confirm our findings based on the factors in \citet{jensen2023there}.\footnote{The factors in \citet{ChenZimmermann2021} that remain significant at the 5\% confidence level are: MomSeason06YrPlus, CBOperProf, ChInv, STreversal, DelFINL, DivYieldST, InvGrowth, GrSaleToGrInv, MomOffSeason16YrPlus, DebtIssuance, CompositeDebtIssuance, OrderBacklog.}

Table \ref{tab:Significant-Loadings-Chen-Zimmermann} in the Online
Appendix presents the fraction of zoo factors in \citet{ChenZimmermann2021}
with loading significantly different from zero at the 5\% confidence
level on each of the FF5M factors and the higher-order factors
selected by the forward selection FMB procedure. Column (1) shows
the fraction of significance for the factors in FF5M model only and
Column (2) reports the fraction of significance for the factors in
the full higher-order factor model. Column (1) shows that the fractions
of significance range from 42.8\% for CMA to 57.2\% for SMB and RMW
for the 6 factors in FF5M. Column (2) shows that the fractions of
significance remain stable for the 6 factors in FF5M, when we further
include the selected higher-order factors. The fractions of significance
for the higher-order factors range from 20.8\% for Mkt-RF2{*}RMW to
47.8\% for SMB2{*}Mom. These results are consistent with the findings
we obtain using the zoo factors in \citet{jensen2023there}, further
highlighting that it is not the case that a specific higher-order factor
accounts for all the results.

\begin{figure}[H]
\caption{Reducing Alternative Zoo Factors in the Cross-Section\label{fig:Cull-Factor-Zoo-StandardFMB-ChenZimmermann}}

\medskip{}

\begin{minipage}{\textwidth}
\small\singlespacing
This figure plots the distribution of absolute t-statistics associated with the risk prices ($t(\lambda)$) and the intercepts ($t(\alpha)$) of the alternative zoo factors from \citet{ChenZimmermann2021} based on US equity data. For each zoo factor, we use the FMB procedure to estimate its risk price controlling for the FF5M (blue histogram) and the FF5M with the higher-order factors selected by the FS-FMB procedure (green histogram). The dashed lines in blue and green correspond to the median absolute t-statistic for the FF5M and for the FF5M with higher-order factor models, respectively. The dashed red line corresponds to the median absolute t-statistic with no controls. The t-statistics are based on Newey-West corrected standard errors.
\end{minipage}

\medskip{}

\centering
\subfloat[\textbf{Distribution of $|t(\alpha)|$}]{
  \includegraphics[width=0.45\textwidth]{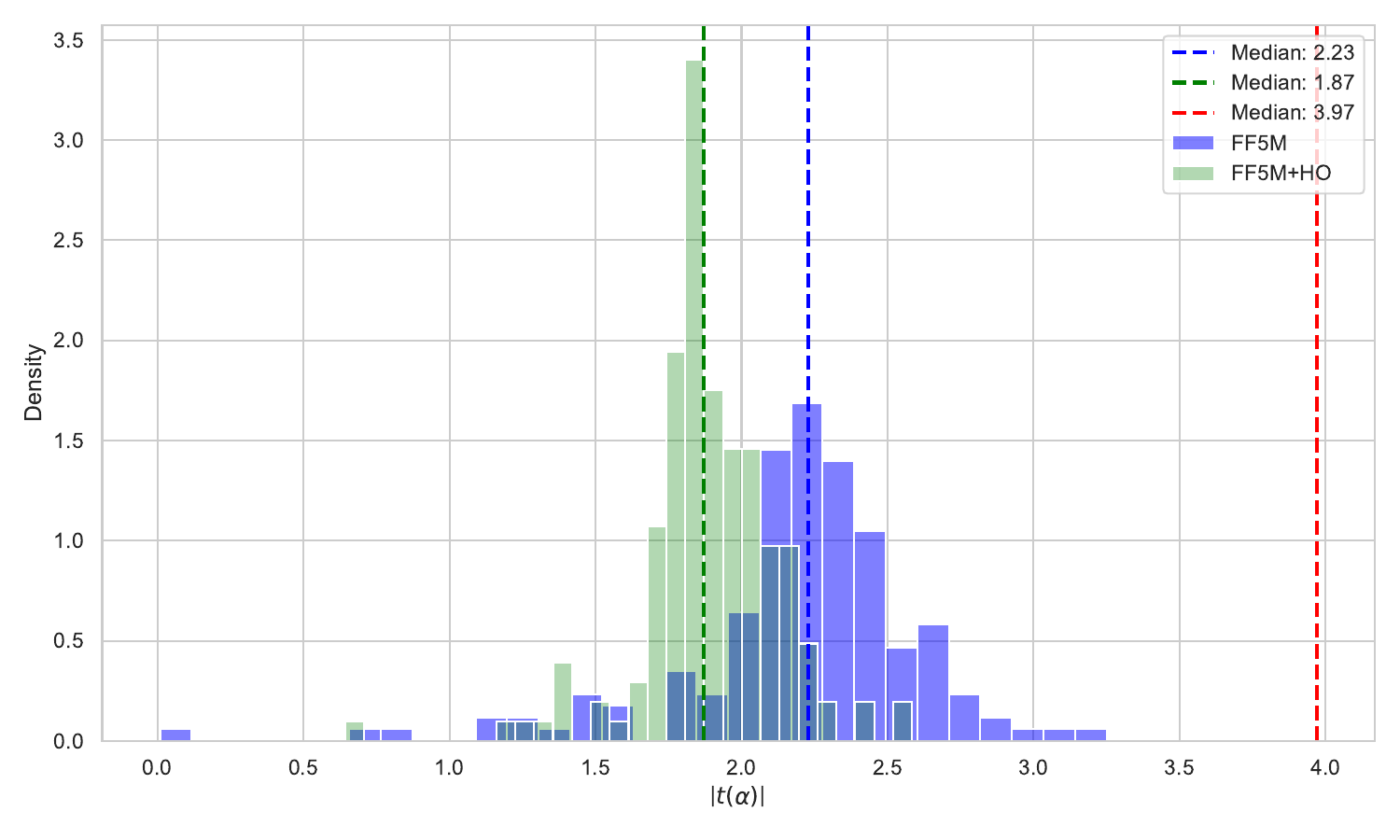}
}
\hfill
\subfloat[\textbf{Distribution of $|t(\lambda)|$}]{
  \includegraphics[width=0.45\textwidth]{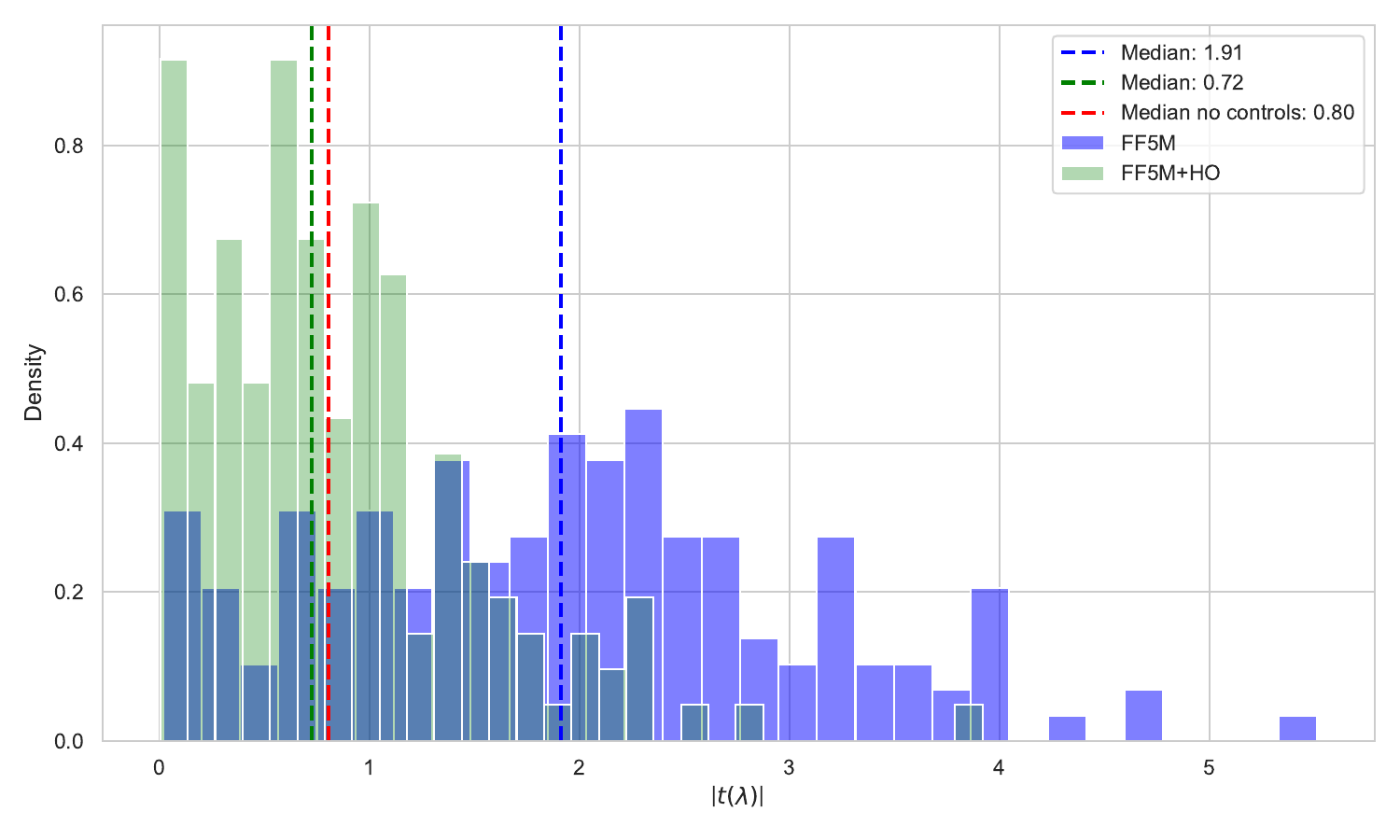}
}
\end{figure}

\subsection{Alternative sets of candidate factors}

We start by considering alternative sets of higher-order factors formed
using the FF5M factors. Table \ref{tab:Forward-Regression-Method_Robust}
presents the estimation results of the forward selection FMB procedure with
four alternative sets of higher-order factors. Specifically, Panel
A presents estimation results using a set of higher-order factors of
degree 2. The latter is a lower degree than the one from the baseline
analysis (i.e., degree 3). These higher-order factors include all pairwise
interactions of the FF5M, and the FF5M factors squared. In this case,
the FS-FMB procedure selects four higher-order factors, reaching
an adjusted cross-sectional R-squared of 0.517 in the final step,
approximately 7 pp smaller than in the baseline setup (see Table \ref{tab:Cross-sectional-performance}).
Panel B presents estimation results using a set of higher-order factors
of higher degree, and specifically of degree up to 4. These factors,
augment the higher-order factors of degree 3 of the baseline setup with
the FF5M to the fourth power (i.e., $f_{i}^{4}$), and pairwise interactions
of the FF5M factors of degree 4, i.e., $f_{i}^{2}\times f_{j}^{2},f_{i}^{3}\times f_{j},f_{i}\times f_{j}^{3}$.
In this case, the FS-FMB procedure selects 9 higher-order factors,
reaching an adjusted cross-sectional R-squared of 0.6 in the final
step. The latter is a R-squared value very similar to that obtained
in the baseline setup, with higher-order factors up to degree 3. These
results suggest that higher-order factors up to degree 3 are most
relevant in explaining the cross-section of asset returns, corroborating
our choice of using higher-order factors of degree 3 in the baseline
analysis.

Furthermore, Panel C and D report the results when augmenting the
FF5M with only pairwise interactions of degree 2 and 3 (Panel C) and
only the powers of degree 2 and 3 (Panel D). The results from these
two panels reveal that incorporating pairwise interactions between
FF5M is crucial. In fact, the model with interactions of degree 2
and 3 reaches an adjusted cross-sectional R-squared of 0.542, selecting
7 higher-order factors. In contrast, the model with powers of degree
2 and 3 selects 3 higher-order factors, for an adjusted cross-sectional
R-squared of 0.467 in the final step. These results highlight that
both the higher-order terms and the interactions of FF5M are important components of the SDF, especially higher-order interactions.

%\newpage
%% If your bibliography is in bibtex format, uncomment commands:
\bibliographystyle{ecta-fullname} % Style BST file
\bibliography{ref,bib_interactions}  % Bibliography file (usually '*.bib')

@article{giglio2021asset,
	author = {Giglio, Stefano and Xiu, Dacheng},
	date-added = {2025-03-15 10:15:50 +0100},
	date-modified = {2025-03-15 10:15:50 +0100},
	journal = {Journal of Political Economy},
	number = {7},
	pages = {1947--1990},
	publisher = {The University of Chicago Press Chicago, IL},
	title = {Asset pricing with omitted factors},
	volume = {129},
	year = {2021}}

@article{caldara2022measuring,
  title={Measuring geopolitical risk},
  author={Caldara, Dario and Iacoviello, Matteo},
  journal={American economic review},
  volume={112},
  number={4},
  pages={1194--1225},
  year={2022},
  publisher={American Economic Association 2014 Broadway, Suite 305, Nashville, TN 37203}
}

@article{baker2016measuring,
  title={Measuring economic policy uncertainty},
  author={Baker, Scott R and Bloom, Nicholas and Davis, Steven J},
  journal={The Quarterly Journal of Economics},
  volume={131},
  number={4},
  pages={1593--1636},
  year={2016},
  publisher={Oxford University Press}
}

@article{martin2019expected,
  title={What is the Expected Return on a Stock?},
  author={Martin, Ian WR and Wagner, Christian},
  journal={The Journal of Finance},
  volume={74},
  number={4},
  pages={1887--1929},
  year={2019},
  publisher={Wiley Online Library}
}

@article{chen2018micro,
  title={Micro (structure) before macro? The predictive power of aggregate illiquidity for stock returns and economic activity},
  author={Chen, Yong and Eaton, Gregory W and Paye, Bradley S},
  journal={Journal of Financial Economics},
  volume={130},
  number={1},
  pages={48--73},
  year={2018},
  publisher={Elsevier}
}

@article{chernozhukov2018double,
	author = {Chernozhukov, Victor and Chetverikov, Denis and Demirer, Mert and Duflo, Esther and Hansen, Christian and Newey, Whitney and Robins, James},
	journal = {Econometrics Journal},
	pages = {C1-C68},
	title = {Double/debiased machine learning for treatment and structural parameters},
	volume = {21},
	year = {2018}}

@article{fama1993common,
	author = {Fama, Eugene F. and French, Kenneth R.},
	journal = {Journal of Financial Economics},
	number = {1},
	pages = {3--56},
	publisher = {Elsevier},
	title = {Common risk factors in the returns on stocks and bonds},
	volume = {33},
	year = {1993}}

@article{newey1987simple,
	author = {Newey, Whitney K and West, Kenneth D},
	date-modified = {2024-04-07 16:11:27 +0200},
	journal = {Econometrica},
	number = {3},
	pages = {703--708},
	title = {A simple, positive semi-definite, heteroskedasticity and autocorrelation consistent covariance matrix},
	volume = {55},
	year = {1987}}

@article{fama2015five,
	author = {Fama, Eugene F and French, Kenneth R},
	journal = {Journal of Financial Economics},
	number = {1},
	pages = {1--22},
	publisher = {Elsevier},
	title = {A five-factor asset pricing model},
	volume = {116},
	year = {2015}}

@article{jegadeesh1993returns,
	author = {Jegadeesh, Narasimhan and Titman, Sheridan},
	journal = {Journal of Finance},
	number = {1},
	pages = {65--91},
	publisher = {Wiley Online Library},
	title = {Returns to buying winners and selling losers: Implications for stock market efficiency},
	volume = {48},
	year = {1993}}

@article{he2017intermediary,
	author = {He, Zhiguo and Kelly, Bryan and Manela, Asaf},
	journal = {Journal of Financial Economics},
	number = {1},
	pages = {1--35},
	publisher = {Elsevier},
	title = {Intermediary asset pricing: New evidence from many asset classes},
	volume = {126},
	year = {2017}}

@article{pastor2003liquidity,
	author = {P{\'a}stor, L'ubo{\v{s}} and Stambaugh, Robert F},
	journal = {Journal of Political Economy},
	number = {3},
	pages = {642--685},
	publisher = {The University of Chicago Press},
	title = {Liquidity risk and expected stock returns},
	volume = {111},
	year = {2003}}

@article{fama1996multifactor,
	author = {Fama, Eugene F and French, Kenneth R},
	journal = {Journal of Finance},
	number = {1},
	pages = {55--84},
	publisher = {Wiley Online Library},
	title = {Multifactor explanations of asset pricing anomalies},
	volume = {51},
	year = {1996}}

@article{jensen2023there,
	author = {Jensen, Theis Ingerslev and Kelly, Bryan and Pedersen, Lasse Heje},
	journal = {Journal of Finance},
	number = {5},
	pages = {2465--2518},
	publisher = {Wiley Online Library},
	title = {Is there a replication crisis in finance?},
	volume = {78},
	year = {2023}}

@article{feng2020taming,
	author = {Feng, Guanhao and Giglio, Stefano and Xiu, Dacheng},
	journal = {Journal of Finance},
	number = {3},
	pages = {1327--1370},
	publisher = {Wiley Online Library},
	title = {Taming the factor zoo: A test of new factors},
	volume = {75},
	year = {2020}}

@article{connor1986performance,
	author = {Connor, Gregory and Korajczyk, Robert A},
	journal = {Journal of Financial Economics},
	number = {3},
	pages = {373--394},
	publisher = {Elsevier},
	title = {Performance measurement with the arbitrage pricing theory: A new framework for analysis},
	volume = {15},
	year = {1986}}

@article{harvey2000conditional,
	author = {Harvey, Campbell R and Siddique, Akhtar},
	journal = {Journal of Finance},
	number = {3},
	pages = {1263--1295},
	publisher = {Wiley Online Library},
	title = {Conditional skewness in asset pricing tests},
	volume = {55},
	year = {2000}}

@article{lettau2020factors,
	author = {Lettau, Martin and Pelger, Markus},
	journal = {Review of Financial Studies},
	number = {5},
	pages = {2274--2325},
	publisher = {Oxford University Press},
	title = {Factors that fit the time series and cross-section of stock returns},
	volume = {33},
	year = {2020}}

@article{kozak2020shrinking,
	author = {Kozak, Serhiy and Nagel, Stefan and Santosh, Shrihari},
	journal = {Journal of Financial Economics},
	number = {2},
	pages = {271--292},
	publisher = {Elsevier},
	title = {Shrinking the cross-section},
	volume = {135},
	year = {2020}}

@article{Carhart1997,
	author = {Carhart, Mark M},
	date-modified = {2024-04-01 10:40:15 +0200},
	journal = {Journal of Finance},
	number = {1},
	pages = {57--82},
	publisher = {Wiley Online Library},
	title = {On persistence in mutual fund performance},
	volume = {52},
	year = {1997}}

@article{bansal1993no,
	author = {Bansal, Ravi and Viswanathan, Salim},
	date-modified = {2024-04-01 10:39:55 +0200},
	journal = {Journal of Finance},
	number = {4},
	pages = {1231--1262},
	publisher = {Wiley Online Library},
	title = {No arbitrage and arbitrage pricing: A new approach},
	volume = {48},
	year = {1993}}

@article{koijen2017cross,
	author = {Koijen, Ralph SJ and Lustig, Hanno and Van Nieuwerburgh, Stijn},
	journal = {Journal of Monetary Economics},
	pages = {50--69},
	publisher = {Elsevier},
	title = {The cross-section and time series of stock and bond returns},
	volume = {88},
	year = {2017}}

@article{gu2020empirical,
	author = {Gu, Shihao and Kelly, Bryan and Xiu, Dacheng},
	date-modified = {2024-04-01 10:39:37 +0200},
	journal = {Review of Financial Studies},
	number = {5},
	pages = {2223--2273},
	publisher = {Oxford University Press},
	title = {Empirical asset pricing via machine learning},
	volume = {33},
	year = {2020}}

@article{herskovic2016common,
	author = {Herskovic, Bernard and Kelly, Bryan and Lustig, Hanno and Van Nieuwerburgh, Stijn},
	journal = {Journal of Financial Economics},
	number = {2},
	pages = {249--283},
	publisher = {Elsevier},
	title = {The common factor in idiosyncratic volatility: Quantitative asset pricing implications},
	volume = {119},
	year = {2016}}

@article{chapman1997cyclical,
	author = {Chapman, David A},
	journal = {Journal of Monetary Economics},
	number = {2},
	pages = {145--172},
	publisher = {Elsevier},
	title = {The cyclical properties of consumption growth and the real term structure},
	volume = {39},
	year = {1997}}

@article{huang2015investor,
	author = {Huang, Dashan and Jiang, Fuwei and Tu, Jun and Zhou, Guofu},
	journal = {The Review of Financial Studies},
	number = {3},
	pages = {791--837},
	publisher = {Oxford University Press},
	title = {Investor sentiment aligned: A powerful predictor of stock returns},
	volume = {28},
	year = {2015}}

@article{campbell1996understanding,
	author = {Campbell, John Y},
	journal = {Journal of Political Economy},
	number = {2},
	pages = {298--345},
	publisher = {The University of Chicago Press},
	title = {Understanding risk and return},
	volume = {104},
	year = {1996}}

@article{dittmar2002nonlinear,
	author = {Dittmar, Robert F},
	journal = {Journal of Finance},
	number = {1},
	pages = {369--403},
	publisher = {Wiley Online Library},
	title = {Nonlinear pricing kernels, kurtosis preference, and evidence from the cross section of equity returns},
	volume = {57},
	year = {2002}}

@article{bali2017economic,
	author = {Bali, Turan G and Brown, Stephen J and Tang, Yi},
	journal = {Journal of Financial Economics},
	number = {3},
	pages = {471--489},
	publisher = {Elsevier},
	title = {Is economic uncertainty priced in the cross-section of stock returns?},
	volume = {126},
	year = {2017}}

@article{manela2017news,
	author = {Manela, Asaf and Moreira, Alan},
	journal = {Journal of Financial Economics},
	number = {1},
	pages = {137--162},
	publisher = {Elsevier},
	title = {News implied volatility and disaster concerns},
	volume = {123},
	year = {2017}}

@article{sweeney1986pricing,
	author = {Sweeney, Richard J and Warga, Arthur D},
	journal = {Journal of Finance},
	number = {2},
	pages = {393--410},
	publisher = {Wiley Online Library},
	title = {The pricing of interest-rate risk: evidence from the stock market},
	volume = {41},
	year = {1986}}

@article{baker2006investor,
	author = {Baker, Malcolm and Wurgler, Jeffrey},
	journal = {Journal of Finance},
	number = {4},
	pages = {1645--1680},
	publisher = {Wiley Online Library},
	title = {Investor sentiment and the cross-section of stock returns},
	volume = {61},
	year = {2006}}

@article{Borri2024factor,
	author = {Borri, Nicola and Chetverikov, Denis and Liu, Yukun and Tsyvinski, Aleh},
	date-modified = {2025-03-26 11:24:34 +0100},
	journal = {NBER Working Paper No. 32365},
	title = {One Factor to Bind the Cross-Section of Returns},
	year = {2024}}

@article{lewellen2010skeptical,
	author = {Lewellen, Jonathan and Nagel, Stefan and Shanken, Jay},
	journal = {Journal of Financial Economics},
	number = {2},
	pages = {175--194},
	publisher = {Elsevier},
	title = {A skeptical appraisal of asset pricing tests},
	volume = {96},
	year = {2010}}

@article{ChenZimmermann2021,
	author = {Chen, Andrew Y. and Zimmermann, Tom},
	journal = {Critical Finance Review},
	number = {2},
	pages = {207--264},
	title = {Open Source Cross-Sectional Asset Pricing},
	volume = {27},
	year = {2022}}

@article{cochrane2011presidential,
	author = {Cochrane, John H},
	journal = {Journal of Finance},
	number = {4},
	pages = {1047--1108},
	publisher = {Wiley Online Library},
	title = {Presidential address: Discount rates},
	volume = {66},
	year = {2011}}

@article{harvey2016and,
	author = {Harvey, Campbell R and Liu, Yan and Zhu, Heqing},
	journal = {The Review of Financial Studies},
	number = {1},
	pages = {5--68},
	publisher = {Oxford University Press},
	title = {{\ldots} and the cross-section of expected returns},
	volume = {29},
	year = {2016}}

@article{harvey2021lucky,
	author = {Harvey, Campbell R and Liu, Yan},
	journal = {Journal of Financial Economics},
	number = {2},
	pages = {413--435},
	publisher = {Elsevier},
	title = {Lucky factors},
	volume = {141},
	year = {2021}}

@article{green2017characteristics,
	author = {Green, Jeremiah and Hand, John RM and Zhang, X Frank},
	journal = {The Review of Financial Studies},
	number = {12},
	pages = {4389--4436},
	publisher = {Oxford University Press},
	title = {The characteristics that provide independent information about average US monthly stock returns},
	volume = {30},
	year = {2017}}

@article{mclean2016does,
	author = {McLean, R David and Pontiff, Jeffrey},
	journal = {Journal of Finance},
	number = {1},
	pages = {5--32},
	publisher = {Wiley Online Library},
	title = {Does academic research destroy stock return predictability?},
	volume = {71},
	year = {2016}}

@article{freyberger2020dissecting,
	author = {Freyberger, Joachim and Neuhierl, Andreas and Weber, Michael},
	journal = {The Review of Financial Studies},
	number = {5},
	pages = {2326--2377},
	publisher = {Oxford University Press},
	title = {Dissecting characteristics nonparametrically},
	volume = {33},
	year = {2020}}

@book{nagel2021machine,
	author = {Nagel, Stefan},
	publisher = {Princeton University Press},
	title = {Machine learning in asset pricing},
	year = {2021}}

@article{bryzgalova2015spurious,
  title={Spurious factors in linear asset pricing models},
  author={Bryzgalova, Svetlana},
  journal={LSE manuscript},
  volume={1},
  number={3},
  pages={6},
  year={2015}
}

@article{bryzgalova2023bayesian,
  title={Bayesian solutions for the factor zoo: We just ran two quadrillion models},
  author={Bryzgalova, Svetlana and Huang, Jiantao and Julliard, Christian},
  journal={The Journal of Finance},
  volume={78},
  number={1},
  pages={487--557},
  year={2023},
  publisher={Wiley Online Library}
}

@article{bryzgalova2025forest,
  title={Forest through the trees: Building cross-sections of stock returns},
  author={Bryzgalova, Svetlana and Pelger, Markus and Zhu, Jason},
  journal={The Journal of Finance},
  volume={80},
  number={5},
  pages={2447--2506},
  year={2025},
  publisher={Wiley Online Library}
}

@article{he2013intermediary,
  title={Intermediary asset pricing},
  author={He, Zhiguo and Krishnamurthy, Arvind},
  journal={American Economic Review},
  volume={103},
  number={2},
  pages={732--770},
  year={2013},
  publisher={American Economic Association}
}

@article{brunnermeier2014macroeconomic,
  title={A macroeconomic model with a financial sector},
  author={Brunnermeier, Markus K and Sannikov, Yuliy},
  journal={American Economic Review},
  volume={104},
  number={2},
  pages={379--421},
  year={2014},
  publisher={American Economic Association 2014 Broadway, Suite 305, Nashville, TN 37203}
}

@article{korsaye2023global,
  title={The global factor structure of exchange rates},
  author={Korsaye, Sofonias Alemu and Trojani, Fabio and Vedolin, Andrea},
  journal={Journal of Financial Economics},
  volume={148},
  number={1},
  pages={21--46},
  year={2023},
  publisher={Elsevier}
}

@article{sandulescu2021model,
  title={Model-free international stochastic discount factors},
  author={Sandulescu, Mirela and Trojani, Fabio and Vedolin, Andrea},
  journal={The Journal of Finance},
  volume={76},
  number={2},
  pages={935--976},
  year={2021},
  publisher={Wiley Online Library}
}

@article{Z09,
	author = {Zhang, Tony},
	journal = {Journal of Machine Learning Research},
	pages = {555-568},
	title = {On the consistency of feature selection using greedy least squares regression},
	volume = {10},
	year = {2009}}

@article{DK18,
	author = {Das, Abhimanyu and Kempe, David},
	journal = {Journal of Machine Learning Research},
	pages = {1-34},
	title = {Approximate submodularity and its applications: subset selection, sparse approximation, and dictionary selection},
	volume = {19},
	year = {2018}}

@article{EKDN18,
	author = {Elenberg, Ethan and Khanna, Rajiv and Dimakis, Alexandros and Negahban, Sahand},
	journal = {Annals of Statistics},
	pages = {3539-3568},
	title = {Restricted strong convexity implies weak submodularity},
	volume = {46},
	year = {2018}}

@article{K20,
	author = {Kozbur, Damian},
	journal = {Econometrica},
	pages = {2147-2173},
	title = {Analysis of testing-based forward model selection},
	volume = {88},
	year = {2020}}

@book{C05,
	author = {Cochrane, John},
	publisher = {Princeton University Press},
	title = {Asset Pricing},
	year = {2005}}

@book{HTF09,
	author = {Hastie, Trevor and Tibshirani, Robert and Friedman, Jerome},
	publisher = {Springer},
	title = {The Elements of Statistical Learning},
	year = {2009}}

@article{BC09,
	author = {Belloni, Alexander and Chernozhukov, Victor},
	journal = {Springer Lecture Notes},
	title = {High-dimensional sparse econometric models, an Introduction},
	year = {2009}}

@article{BCH14,
	author = {Belloni, Alexandre and Chernozhukov, Victor and Hansen, Christian},
	journal = {Review of Economic Studies},
	pages = {608-650},
	title = {Inference on treatment effects after selection among high-dimensional controls},
	volume = {81},
	year = {2014}}

@article{JM14,
	author = {Javanmard, Adel and Montanari, Andrea},
	journal = {Journal of Machine Learning Research},
	pages = {2869-2909},
	title = {Confidence intervals and hypothesis testing for high-dimensional regression},
	volume = {15},
	year = {2014}}

@article{ZZ14,
	author = {Zhang, Cun-Hui and Zhang, Stephanie},
	journal = {Journal of the Royal Statistical Society, Series B},
	pages = {217-242},
	title = {Confidence intervals for low dimensional parameters in high dimensional linear models},
	volume = {76},
	year = {2014}}

@article{VBRD14,
	author = {Van de Geer, Sara and B{\"u}hlmann, Peter and Ritov, Yakov and Dezeure, Ruben},
	journal = {Annals of Statistics},
	pages = {1166-1202},
	title = {On asymptotically optimal confidence regions and tests for high-dimensional models},
	volume = {42},
	year = {2014}}

@article{N94,
	author = {Newey, Whitney},
	journal = {Econometrica},
	pages = {1349-1382},
	title = {The asymptotic variance of semiparametric estimators},
	volume = {62},
	year = {1994}}

\newpage

\begin{appendix}

\setcounter{table}{0}
\renewcommand{\thetable}{AI.\arabic{table}}
\setcounter{figure}{0}
\renewcommand{\thefigure}{AI.\arabic{figure}}

\begin{center}
ONLINE APPENDICES
\end{center}

\medskip

%\section*{Online Appendices}

This paper has 6 online appendices. In Appendix \ref{sec: additional assumptions}, we describe additional assumptions for Theorem \ref{thm: asy normality}. In Appendix \ref{sec: proof of theorem 1}, we give the proof of Theorem \ref{thm: convergence rate}. In Appendix \ref{sec: proof of theorem asy normality}, we give the proof of Theorem \ref{thm: asy normality}. In Appendix \ref{sec: technical lemmas}, we provide several technical lemmas that are useful for proving the theorems. In Appendix \ref{sec:optimal_stopping}, we introduce a method to select the optimal threshold. In Appendix \ref{sec: additional figures}, we present additional tables and figures.

\section{Assumptions for Theorem \ref{thm: asy normality}}\label{sec: additional assumptions}

In this Appendix, we describe additional assumptions required for the asymptotic normality result for the debiased estimator $\widehat\psi_D$ in Theorem \ref{thm: asy normality}.

\begin{assumption}[First Stage Sparsity]\label{as: first stage sparsity}
For all $j\in[p]$, there exists a set $\S_{0,j}\subset \{j\}^c$ and a vector $\varphi_j = (\varphi_{j,1},\dots,\varphi_{j,p})^{\top}\in \R^{p}$ such that (i) $\supp(\varphi_j) = \S_{0,j}$, (ii) $|\S_{0,j}|\leq \bar s_{T}$, and (iii) the vector $R_j = (R_{j,1},\dots,\R_{j,N})^{\top} = \C_{\{j\}} - \C\varphi_{j}$ satisfies
$
\| \C_{\{j\}^c}^{\top} R_j \|_{\infty}^2 \lesssim N^2\log(Np)/T. 
$
\end{assumption}

This assumption means that for each $j\in[p]$, there exists a sparse approximate projection of the vector $\C_{\{j\}}$ on $\C_{\{j\}^c}$: most cross-sectional variation in the components of the vector $\C_{\{j\}}$ can be explained by at most $\bar s_{T}$ columns in the matrix $\C_{\{j\}^c}$ in the sense that the residual of the linear projection of $\C_{\{j\}}$ on these columns is nearly orthogonal to all columns of the matrix.

\begin{assumption}[Uniform Law of Large Numbers, II]\label{as: ulln 2}
We have
$$
\max_{i\in[2N+2]}\max_{j\in[p]}\left|\frac{1}{T}\sum_{t\in[T]} u_{i, t}\varphi_j^{\top} v_t - \E[u_{i, t} \varphi_j^{\top} v_t]\right| \lesssim_P \sqrt{\frac{\log(Np)}{T}}.
$$
\end{assumption}
This assumption complements Assumption \ref{as: ulln} by imposing the condition that a quantitative version of the law of large numbers applies to the random variables $u_{i,t}\varphi_j^{\top}v_t$ uniformly over $i\in[2N+2]$ and $j\in[p]$. 

\begin{assumption}[First Stage Estimator]\label{as: first stage estimation} For all $j\in[p]$ and some constant $c>0$, we have (i) $|\widehat{\S}_j|\leq \bar s_{T}$ \wpa, (ii) $|\widehat{\S}_j|\geq (1+c)(K/k)|\S_{0,j}|\log T$ \wpa, and (iii) $|\widehat{\S}_j|\lesssim_P |\S_{0,j}|\log T$.
\end{assumption}

This assumption mirrors Assumption \ref{as: estimator sparsity} for the estimation of a different object. In particular, it will be clear from the proofs in Appendix \ref{sec: proof of theorem asy normality} that for all $j\in[p]$, $\widehat\S_j$ serves as an estimator of $\S_{0,j}$. The first part of Assumption \ref{as: first stage estimation} means that we carry out at most $\bar s_{T}$ rounds in Algorithm \ref{alg: fs fama-macbeth} when we choose the set $\widehat\S_j$, which is thus rather natural given the existence of a sparse approximate projection with at most $\bar s_{T}$ components in Assumption \ref{as: first stage sparsity}. The second part of Assumption \ref{as: first stage estimation} requires that the number of selected factors $|\widehat\S_j|$ should be larger than the number of important factors $|\S_{0,j}|$ at least by a multiplicative constant proportional to $\log T$. The third part gives an upper bound on the number of selected factors.

\begin{assumption}[No Strong Omitted Factors]\label{as: extra lln}
For all $j\in[p]$, we have
$$
\frac{1}{N\sqrt T}\sum_{i\in[N]}\sum_{t\in[T]}R_{j,i}\varepsilon_{i,t}m_t = o_P(1).
$$
\end{assumption}

Observe that for each $i\in[N]$ and $j\in[p]$, we expect $T^{-1/2}\sum_{t\in[T]} R_{j, i}\varepsilon_{i,t}m_t = O_P(1)$ given that $\E[R_{j,i}\varepsilon_{i,t}m_t] = 0$. Thus, Assumption \ref{as: extra lln} means that by taking the average over $i\in[N]$, we can further wash out the variability from $T^{-1/2}\sum_{t\in[T]} R_{j, i}\varepsilon_{i,t}m_t$, which is only possible if the random variables $\varepsilon_{i,t}$ do not have a strong cross-sectional dependence and is thus equivalent to the condition that there are no strong omitted factors in the model.

\begin{assumption}[Factors]\label{as: factors cool}
For some constant $C>0$, we have
$$
1/C \leq \lambda_{\min}(\E[v_tv_t^{\top}]) \leq \lambda_{\max}(\E[v_tv_t^{\top}]) \leq C
\quad\text{and}\quad 
\left\|( \E[v_tv_t^{\top}] )^{-1}\right\|_{\infty,1} \leq C.
$$
\end{assumption}
The lower bound in the first part of this assumption means that there is no multicollinearity between factors in the population, and the upper bound is a technical regularity condition. To understand the second part of this assumption, for each $j\in[p]$, let $a_j$ be the $j$th column of the matrix $\E[(v_tv_t^{\top})^{-1}]$. It is then possible to show that $\|\eta_j\|_1 = \sigma_{z,j}^2\|a_j\|_1 - 1$ for $\eta_j$ appearing in \eqref{eq: eta definition}; see Lemma \ref{lem: matrix inverse and l1 norm} in Appendix \ref{sec: technical lemmas}. Thus, given that $\sigma_{z,j}^2\leq C$, which follows from the first part of the assumption, the condition $\|( \E[v_tv_t^{\top}])^{-1}\|_{\infty,1} \leq C$ means that the vectors $\eta_j$ are uniformly bounded in the $\ell_1$-norm.

Next, for all $j\in[p]$, let $\phi_j = (\phi_{j,1},\dots,\phi_{j,p})^{\top}\in\R^p$ be the vector defined by $\phi_{j,j} = 0$ and
\begin{equation}\label{eq: phi definition}
\phi_{j,\{j\}^c} = \left( \E[v_{t,\{j\}^c}v_{t,\{j\}^c}^{\top}] \right)^{-1}\E\left[v_{t,\{j\}^c}\left(\frac{1}{N}\sum_{i\in[N]} r_{i, t} R_{j,i}\right)\right],
\end{equation}
so that $\phi_{j,\{j\}^c}$ is the vector of coefficients of the time-series least-squares projection of the cross-sectional sample average $N^{-1}\sum_{i\in[N]}r_{i,t}R_{j,i}$ on $v_{t,\{j\}^c}$.

\begin{assumption}[Extra Sparsity]\label{as: additional sparsity wtf}
For all $j\in[p]$, there exists a set $\S_{0,j,\phi}\subset\{j\}^c$ such that $|\S_{0,j,\phi}|\leq \bar s_{T}$ and $\|\phi_{\S^c_{0,j,\phi}}\|_1 = o(1/\sqrt{\log(Np)})$.
\end{assumption}

This assumption means that the vectors $\phi_j$ are approximately sparse. Note, however, that the amount of sparsity imposed in this assumption is rather small: the $\ell_1$-norm of the ``remainder'' $\phi_{\S_{0,j,\phi}^c}$ is required to vanish asymptotically but the convergence rate may be very slow.

%\begin{assumption}\label{as: additional sparsity wtf}
%For all $j\in[p]$, there exists a set $\S_{0,j,\eta}\subset[p]$ such that $|\S_{0,j,\eta}|\leq \bar s_{N,T}$ and $\|\eta_{\S^c_{0,j,\eta}}\|_1 = o(1/\sqrt{\log(Np)})$.
%\end{assumption}

\begin{assumption}[Growth Condition, II]\label{as: growth conditions 2}
We have $\bar s^2_{T}\log^4(Np) = o(T)$. 
\end{assumption}
\begin{assumption}[Central Limit Theorem]\label{as: woohoo}
For all $j\in[p]$,
$$
\frac{1}{\sqrt T}\sum_{t\in[T]}\frac{z_{t,j}m_t - \E[z_{t,j}m_t]}{\sigma_{z, j}^2} \to_d N(0,\sigma_{\psi,j}^2),
$$
where $\sigma_{\psi,j}^2 = \lim_{T\to\infty} T^{-1}\sum_{t=1}^T\sum_{s=1}^T \E\left[(z_{t, j}m_t - \E[z_{t, j}m_t])(z_{s, j} m_s - \E[z_{s, j} m_s)]/\sigma_{z, j}^4 \right]$.
\end{assumption}

Assumption \ref{as: growth conditions 2} strengthens Assumption \ref{as: growth conditions} but, like Assumption \ref{as: growth conditions}, it allows the number of factors $p$ to be much larger than the number of time periods $T$. Assumption \ref{as: woohoo} is a version of the central limit theorem for time series data.

\section{Proof of Theorem \ref{thm: convergence rate}}\label{sec: proof of theorem 1}
Throughout this Appendix, denote $\Y = (\E[r_{1,t}],\dots,\E[r_{N,t}])^{\top}$ and $\widehat{\Y} = (\bar r_1,\dots,\bar r_N)^{\top}$. Also, let $\ell\colon\R^p\to\R$ be the function defined by $\ell(x) = -\|\widehat{\Y} - \widehat{\C}x\|_2^2/N$ for all $x\in\R^p$. In addition, for any vector $x = (x_1,\dots,x_p)^{\top}\in\R^p$ and any set $\S\subset[p]$, let $x[\S] = (\overline{x}_1,\dots,\overline{x}_p)^{\top}\in\R^p$ be the vector defined by $\overline{x}_j = x_j$ for all $j\in\S$ and $\overline{x}_j = 0$ for all $j\in\S^c$. Observe that the notation $x[\S]$ is different from the notation $x_{\S}$, which we use throughout the paper. Further, for all $j\in[p]$, let $\widetilde{\S}_{D,j} = \widehat{\S}_{D,j}\setminus\{j\}\subset[p]$ and let $\widehat{\varphi}_j = (\widehat{\varphi}_{j,1},\dots,\widehat{\varphi}_{j,p})^{\top}\in\R^p$ be the vector defined by
$
\widehat{\varphi}_{\widetilde{\S}_{D,j}} = (\widehat{\C}_{\widetilde{\S}_{D,j}}^{\top}\widehat{\C}_{\widetilde{\S}_{D,j}})^{\top}\widehat{\C}_{\widetilde{\S}_{D,j}}^{\top}\widehat{\C}_{\{j\}}
$
and $\widehat\varphi_{l} = 0$ for all $l\in\widetilde{\S}_{D,j}^c$.

Before proving Theorem \ref{thm: convergence rate}, we state and prove four auxiliary lemmas.

\begin{lemma}\label{lem: individual covariance bound}
Suppose that Assumptions \ref{as: ulln} and \ref{as: growth conditions} are satisfied. Then
$$
\max_{i\in[N]}\max_{j\in[p]}\Big|\widehat \C_{i,j} - \C_{i,j}\Big| \lesssim_P \sqrt{\frac{\log(Np)}{T}}.
$$
\end{lemma}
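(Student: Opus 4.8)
The strategy is to expand $\widehat{\C}_{i,j}$ using the model \eqref{eq: model}, subtract $\C_{i,j}$, and recognize the difference as a short sum of ``sample average minus expectation'' terms, each of which is controlled by Assumption \ref{as: ulln}. First I would note that by \eqref{eq: model} and the moment conditions $\E[\varepsilon_{i,t}] = 0$, $\E[\varepsilon_{i,t}f_t] = \mathbf 0_p$ we have $r_{i,t} - \E[r_{i,t}] = \beta_i^{\top}v_t + \varepsilon_{i,t}$ and $\E[v_{t,j}\varepsilon_{i,t}] = 0$, so that $\C_{i,j} = \E[v_{t,j}\,\beta_i^{\top}v_t]$. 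On the sample side, writing $f_{t,j} - \bar f_j = v_{t,j} - \bar v_j$ with $\bar v_j = T^{-1}\sum_{t\in[T]}v_{t,j}$ and using $\sum_{t\in[T]}(f_{t,j}-\bar f_j) = 0$ to also center $r_{i,t}$, the usual sample-covariance identity gives
\[
\widehat{\C}_{i,j} = \Big(\frac{1}{T}\sum_{t\in[T]} v_{t,j}\beta_i^{\top}v_t - \bar v_j\,\beta_i^{\top}\bar v\Big) + \Big(\frac{1}{T}\sum_{t\in[T]} v_{t,j}\varepsilon_{i,t} - \bar v_j\,\bar\varepsilon_i\Big),
\]
where $\bar v = T^{-1}\sum_{t\in[T]}v_t$ and $\bar\varepsilon_i = T^{-1}\sum_{t\in[T]}\varepsilon_{i,t}$. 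Subtracting $\C_{i,j}$, the difference $\widehat{\C}_{i,j}-\C_{i,j}$ decomposes into (i) $T^{-1}\sum_t v_{t,j}\beta_i^{\top}v_t - \E[v_{t,j}\beta_i^{\top}v_t]$; (ii) $T^{-1}\sum_t v_{t,j}\varepsilon_{i,t}$, which coincides with its centered version; and (iii) the two remainder products $\bar v_j\,\beta_i^{\top}\bar v$ and $\bar v_j\,\bar\varepsilon_i$.

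Next I would bound each of (i)--(iii) uniformly over $(i,j)\in[N]\times[p]$ by matching them to entries of the array in Assumption \ref{as: ulln}. Term (i) equals $T^{-1}\sum_t u_{i+2+N,t}w_{t,j+2} - \E[u_{i+2+N,t}w_{t,j+2}]$, since $u_{i+2+N,t} = \beta_i^{\top}v_t$ and $w_{t,j+2} = v_{t,j}$, and the relevant indices satisfy $i+2+N\le 2N+2$, $j+2\le p+2$, so Assumption \ref{as: ulln} gives $\lesssim_P \sqrt{\log(Np)/T}$ uniformly; similarly term (ii) is the index pair $(i+2,j+2)$ with $u_{i+2,t} = \varepsilon_{i,t}$, again $\lesssim_P \sqrt{\log(Np)/T}$. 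For the remainders, taking $u_{1,t} = 1$ yields $\max_{j\in[p]}|\bar v_j| \lesssim_P \sqrt{\log(Np)/T}$ (using $\E[v_{t,j}] = 0$), and taking $w_{t,1} = 1$ yields $\max_{i\in[N]}|\beta_i^{\top}\bar v| \lesssim_P \sqrt{\log(Np)/T}$ and $\max_{i\in[N]}|\bar\varepsilon_i| \lesssim_P \sqrt{\log(Np)/T}$ (using $\E[\beta_i^{\top}v_t] = 0$ and $\E[\varepsilon_{i,t}] = 0$). Hence $\bar v_j\,\beta_i^{\top}\bar v$ and $\bar v_j\,\bar\varepsilon_i$ are each $\lesssim_P \log(Np)/T$ uniformly over $(i,j)$.

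To conclude, Assumption \ref{as: growth conditions} together with $\bar s_T\ge 1$ forces $\log(Np) = o(T)$, so $\sqrt{\log(Np)/T}\to 0$ and therefore $\log(Np)/T \le \sqrt{\log(Np)/T}$ for all large $T$; thus the remainder products in (iii) are also $\lesssim_P \sqrt{\log(Np)/T}$, and the triangle inequality over the at most four terms delivers the claimed bound. I do not expect a genuine obstacle: the only points requiring a little care are (a) verifying that every cross-moment appearing in the expansion --- chiefly $v_{t,j}\beta_i^{\top}v_t$ and $v_{t,j}\varepsilon_{i,t}$ --- is indeed among the products $\{u_{i,t}w_{t,j}\}$ covered by Assumption \ref{as: ulln}, and (b) invoking the growth condition so that the quadratic-in-averages remainder terms do not dominate the $\sqrt{\log(Np)/T}$ rate.
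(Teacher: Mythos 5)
Your proposal is correct and follows essentially the same route as the paper's proof: expand $\widehat\C_{i,j}$ via the model into a covariance-of-factors term, a factor-residual term, and quadratic-in-sample-averages remainders, bound each uniformly by matching it to the arrays $u_{i,t}w_{t,j}$ in Assumption \ref{as: ulln}, and absorb the $\log(Np)/T$ remainders using Assumption \ref{as: growth conditions}. The only difference is cosmetic (you center via $\bar v_j$ where the paper writes $\E[f_{t,j}]-\bar f_j$, which is the same quantity up to sign), so no further comment is needed.
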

\begin{proof}
For all $(i,j)\in[N]\times[p]$, we have
\begin{align}
\widehat \C_{i,j} 
& = \frac{1}{T}\sum_{t\in[T]} \beta_i^{\top}v_t(f_{t,j} - \bar f_j) + \frac{1}{T}\sum_{t\in[T]} \varepsilon_{i,t}(f_{t,j} - \bar f_j) \nonumber\\
& =  \frac{1}{T}\sum_{t\in[T]} \beta_i^{\top}v_tv_{t,j} + \frac{1}{T}\sum_{t\in[T]} \beta_i^{\top}v_t(\E[f_{t,j}] - \bar f_j)\nonumber\\
& \quad + \frac{1}{T}\sum_{t\in[T]} \varepsilon_{i,t} v_{t,j} + \frac{1}{T}\sum_{t\in[T]} \varepsilon_{i,t} (\E[f_{t,j}] - \bar f_j).\label{eq: chat expansion}
\end{align}
Subtracting $\C_{i,j} = \beta_i^{\top} \E[v_t v_{t,j}]$ from both sides and applying Assumptions \ref{as: ulln} and \ref{as: growth conditions} yields the asserted claim.
\end{proof}

\begin{lemma}\label{lem: c-psi bound}
Suppose that Assumptions \ref{as: ulln} and \ref{as: growth conditions} are satisfied. Then 
$$
\|(\widehat{\C} - \C)\psi\|_2 \lesssim_P \sqrt{\frac{N\log(Np)}{T}}.
$$
\end{lemma}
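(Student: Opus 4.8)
The plan is to reuse the decomposition \eqref{eq: chat expansion} derived in the proof of Lemma \ref{lem: individual covariance bound}: I would multiply that identity by $\psi_j$, sum over $j\in[p]$, and then control the resulting cross-sectional average entrywise using Assumption \ref{as: ulln}. Write $A_i = \sum_{j\in[p]}(\widehat\C_{i,j} - \C_{i,j})\psi_j$ for the $i$th component of $(\widehat\C - \C)\psi$. Using $\C_{i,j} = \beta_i^\top\E[v_t v_{t,j}]$ together with the identity $\sum_{j\in[p]}\psi_j(\E[f_{t,j}] - \bar f_j) = -\psi^\top\bar v$, where $\bar v = T^{-1}\sum_{t\in[T]}v_t$, the decomposition collapses to
\begin{align*}
A_i &= \Big(\tfrac{1}{T}\sum_{t\in[T]}(\beta_i^\top v_t)(\psi^\top v_t) - \beta_i^\top\E[v_t v_t^\top]\psi\Big) + \tfrac{1}{T}\sum_{t\in[T]}\varepsilon_{i,t}(\psi^\top v_t) \\
&\quad - \Big(\tfrac{1}{T}\sum_{t\in[T]}\beta_i^\top v_t\Big)(\psi^\top\bar v) - \Big(\tfrac{1}{T}\sum_{t\in[T]}\varepsilon_{i,t}\Big)(\psi^\top\bar v).
\end{align*}

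I would then match each piece to an entry of the collection of averages bounded in Assumption \ref{as: ulln}. The first bracket is $T^{-1}\sum_t u_{i+2+N,t}w_{t,2} - \E[u_{i+2+N,t}w_{t,2}]$ (recall $u_{i+2+N,t} = \beta_i^\top v_t$ and $w_{t,2} = \psi^\top v_t$, and note $\E[u_{i+2+N,t}w_{t,2}] = \sum_{j\in[p]}\C_{i,j}\psi_j$); the second term equals $T^{-1}\sum_t u_{i+2,t}w_{t,2}$, whose mean is $\E[\varepsilon_{i,t}\psi^\top v_t] = 0$ because $\E[\varepsilon_{i,t}f_t] = \mathbf 0_p$. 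Hence both are $\lesssim_P \sqrt{\log(Np)/T}$ uniformly over $i\in[N]$. For the last two terms, Assumption \ref{as: ulln} applied with $u_{1,t} = 1$ and $w_{t,1} = 1$, together with $\E[\beta_i^\top v_t] = \E[\varepsilon_{i,t}] = \E[\psi^\top v_t] = 0$, gives $\max_{i\in[N]}|T^{-1}\sum_t\beta_i^\top v_t|$, $\max_{i\in[N]}|T^{-1}\sum_t\varepsilon_{i,t}|$, and $|\psi^\top\bar v|$ all $\lesssim_P \sqrt{\log(Np)/T}$, so that both products are $\lesssim_P \log(Np)/T$. Since Assumption \ref{as: growth conditions} forces $\log(Np)/T = o(1)$, we have $\log(Np)/T \lesssim \sqrt{\log(Np)/T}$, and combining the four bounds yields $\max_{i\in[N]}|A_i| \lesssim_P \sqrt{\log(Np)/T}$.

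Finally, I would conclude from $\|(\widehat\C - \C)\psi\|_2^2 = \sum_{i\in[N]}A_i^2 \leq N\max_{i\in[N]}A_i^2$ that $\|(\widehat\C - \C)\psi\|_2 \lesssim_P \sqrt{N\log(Np)/T}$. I do not expect a genuine obstacle here: the only care required is the bookkeeping to recognize each of the four pieces as an entry covered by Assumption \ref{as: ulln}, and to check that the two product terms are truly lower order, which is immediate from the growth condition. It is worth stressing that the factor $\sqrt N$ (rather than $N$) on the right-hand side hinges precisely on the bound $\sqrt{\log(Np)/T}$ for $|A_i|$ being uniform in $i$, so that the $\ell_2$-aggregation over the $N$ coordinates contributes only $\sqrt N$.
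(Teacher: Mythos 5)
Your proposal is correct and follows essentially the same route as the paper: the paper's proof simply says the claim follows from the expansion \eqref{eq: chat expansion} together with Assumptions \ref{as: ulln} and \ref{as: growth conditions}, and your argument is exactly the bookkeeping that makes this precise (multiplying by $\psi_j$, summing over $j$, recognizing each piece as an entry covered by Assumption \ref{as: ulln} via $w_{t,2}=\psi^{\top}v_t$, and aggregating the uniform-in-$i$ bound over the $N$ coordinates to get the $\sqrt N$ factor).
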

\begin{proof}
The claim follows from combining \eqref{eq: chat expansion} and Assumptions \ref{as: ulln} and \ref{as: growth conditions}.
\end{proof}

\begin{lemma}\label{lem: empirical sparse eigenvalues}
Suppose that Assumptions \ref{as: ulln}, \ref{as: sparse eigenvalues 2}, and \ref{as: growth conditions} are satisfied. Then
$$
\frac{k}{2}\leq \lambda_{\min}\left(\frac{\widehat{\C}_{\S}^{\top}\widehat{\C}_{\S}}{N}\right) \leq \lambda_{\max}\left(\frac{\widehat{\C}_{\S}^{\top}\widehat{\C}_{\S}}{N}\right) \leq 2K
$$
for all $\S\subset[p]$ such that $|\S|\leq 3\bar s_{T} + 1$ \wpa.
\end{lemma}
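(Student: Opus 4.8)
The plan is to write $\widehat{\C} = \C + \Delta$ with $\Delta = \widehat{\C} - \C$, expand the Gram matrix $\widehat{\C}_{\S}^{\top}\widehat{\C}_{\S}$, control the perturbation in spectral norm \emph{uniformly} over all admissible sparse index sets, and then conclude via Weyl's inequality. First I would invoke Lemma \ref{lem: individual covariance bound} to obtain a single bounded-in-probability sequence $\{C_T\}$ such that $\max_{i\in[N],j\in[p]}|\Delta_{i,j}| = C_T\sqrt{\log(Np)/T}$. For any $\S\subset[p]$ with $|\S|\leq 3\bar s_T+1$, this gives the crude Frobenius bound $\|\Delta_{\S}\|_2 \leq \|\Delta_{\S}\|_F \leq \sqrt{N|\S|}\,\max_{i,j}|\Delta_{i,j}| \leq C_T\sqrt{N(3\bar s_T+1)\log(Np)/T}$, where the right-hand side does not depend on $\S$.

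Next I would expand $\widehat{\C}_{\S}^{\top}\widehat{\C}_{\S} = \C_{\S}^{\top}\C_{\S} + \C_{\S}^{\top}\Delta_{\S} + \Delta_{\S}^{\top}\C_{\S} + \Delta_{\S}^{\top}\Delta_{\S}$ and bound the three perturbation terms. Using $\|\C_{\S}\|_2 = \lambda_{\max}(\C_{\S}^{\top}\C_{\S})^{1/2}\leq \sqrt{NK}$ from Assumption \ref{as: sparse eigenvalues 2}, the cross terms obey $\|\C_{\S}^{\top}\Delta_{\S}\|_2/N \leq \|\C_{\S}\|_2\|\Delta_{\S}\|_2/N \lesssim_P \sqrt{\bar s_T\log(Np)/T}$, while the quadratic term obeys $\|\Delta_{\S}^{\top}\Delta_{\S}\|_2/N \leq \|\Delta_{\S}\|_2^2/N \lesssim_P \bar s_T\log(Np)/T$. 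Both are $o_P(1)$ by Assumption \ref{as: growth conditions}, and — crucially — the bounds hold uniformly over all $\S$ with $|\S|\leq 3\bar s_T+1$ because they flow from the single scalar $\max_{i,j}|\Delta_{i,j}|$ rather than from a union bound over sets. Hence $\max_{\S\colon |\S|\leq 3\bar s_T+1}\big\|\widehat{\C}_{\S}^{\top}\widehat{\C}_{\S}/N - \C_{\S}^{\top}\C_{\S}/N\big\|_2 = o_P(1)$.

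Finally I would apply Weyl's inequality. For each admissible $\S$, $\lambda_{\min}(\widehat{\C}_{\S}^{\top}\widehat{\C}_{\S}/N) \geq \lambda_{\min}(\C_{\S}^{\top}\C_{\S}/N) - \|\widehat{\C}_{\S}^{\top}\widehat{\C}_{\S}/N - \C_{\S}^{\top}\C_{\S}/N\|_2 \geq k - o_P(1)$ and, symmetrically, $\lambda_{\max}(\widehat{\C}_{\S}^{\top}\widehat{\C}_{\S}/N) \leq K + o_P(1)$, uniformly over $\S$. On the event where the $o_P(1)$ term is below $\min\{k/2, K\}$ — an event of probability approaching one — we get $k/2 \leq \lambda_{\min}(\widehat{\C}_{\S}^{\top}\widehat{\C}_{\S}/N) \leq \lambda_{\max}(\widehat{\C}_{\S}^{\top}\widehat{\C}_{\S}/N) \leq 2K$ simultaneously for all $\S$ with $|\S|\leq 3\bar s_T+1$, which is the asserted claim.

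I do not anticipate a genuine obstacle; this is a standard perturbation argument. The only point deserving care is that the spectral-norm bound must be uniform over the exponentially many sparse sets $\S$, and this is what makes the entrywise formulation of Lemma \ref{lem: individual covariance bound} the right input. The Frobenius bound used for $\|\Delta_{\S}\|_2$ is admittedly wasteful (it loses a $\sqrt{\bar s_T}$ factor relative to a sharper operator-norm estimate), but Assumption \ref{as: growth conditions}, which gives $\bar s_T\log(NTp) = o(T)$, is more than sufficient to absorb this slack.
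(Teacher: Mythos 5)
Your proof is correct and follows essentially the same route as the paper: both reduce the problem to the uniform entrywise bound of Lemma \ref{lem: individual covariance bound}, pass to a Frobenius-norm bound on $\widehat{\C}_{\S}-\C_{\S}$ that is uniform over sparse sets $\S$, and conclude via Weyl's inequality together with Assumptions \ref{as: sparse eigenvalues 2} and \ref{as: growth conditions}. The only cosmetic difference is that the paper applies Weyl's inequality to the singular values of $\widehat{\C}_{\S}$ directly, via $|s_{\min}(\widehat{\C}_{\S})-s_{\min}(\C_{\S})|\le\|\widehat{\C}_{\S}-\C_{\S}\|_2$, which avoids your expansion of the Gram matrix and the cross-term bounds that use $\|\C_{\S}\|_2\le\sqrt{NK}$.
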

\begin{proof}
For any matrix $A$, let $s_{\min}(A) = \sqrt{\lambda_{\min}(A^{\top}A)}$ and $s_{\max}(A) = \sqrt{\lambda_{\max}(A^{\top}A)}$ be its smallest and largest singular values. Then uniformly over $\S\subset[p]$ such that $|\S|\leq 3\bar s_{T}$, we have
\begin{align*}
|s_{\min}(\widehat \C_{\S}) - s_{\min}(\C_{\S})|^2&
 \leq \|\widehat{\C}_{\S} - \C_{\S}\|_2^2
 \leq \sum_{i\in[N]}\sum_{j\in\S}|\widehat{\C}_{i,j} - \C_{i,j}|^2 \\
 & \lesssim_P \frac{|S|N\log(Np)}{T} \leq \frac{\bar s_{T}N \log(Np)}{T} = o(N),
\end{align*}
where the first inequality follows from Weyl's inequality, the second from the definition of $\ell_2$-norm, the third from Lemma \ref{lem: individual covariance bound}, and the last bound from Assumption \ref{as: growth conditions}. Combining this bound with Assumption \ref{as: sparse eigenvalues 2} yields the lower bound in the asserted claim. To obtain the upper bound, we proceed similarly by noting that Weyl's inequality also gives $|s_{\max}(\widehat \C_{\S}) - s_{\max}(\C_{\S})|^2 \leq \|\widehat{\C}_{\S} - \C_{\S}\|_2^2$.
\end{proof}

\begin{lemma}\label{lem: restricted eigenvalues}
Suppose that Assumptions \ref{as: ulln}, \ref{as: sparse eigenvalues 2}, and \ref{as: growth conditions} are satisfied. Then for any $c>0$,
$$
-\frac{k}{\sqrt{1+c}}\|y-x\|_2^2 \geq \ell(y) - \ell(x) - \nabla\ell(x)^{\top}(y-x)\geq - \sqrt{1+c}K\|y-x\|_2^2
$$
for all $x,y\in\R^p$ satisfying $\|y-x\|_0\leq 3\bar s_{T}+1$ \wpa.
\end{lemma}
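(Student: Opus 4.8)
The plan is to exploit the fact that $\ell$ is a quadratic function, so that its second-order Taylor expansion about any point is \emph{exact}; the claimed ``restricted strong convexity/smoothness'' inequalities then reduce to restricted eigenvalue bounds on the (constant) Hessian. Concretely, writing $\ell(x) = -\|\widehat{\Y} - \widehat{\C}x\|_2^2/N$, a direct computation gives $\nabla\ell(x) = (2/N)\widehat{\C}^{\top}(\widehat{\Y} - \widehat{\C}x)$ and the constant Hessian $\nabla^2\ell(x) = -(2/N)\widehat{\C}^{\top}\widehat{\C}$. Hence, for all $x,y\in\R^p$,
$$
\ell(y) - \ell(x) - \nabla\ell(x)^{\top}(y-x) = \frac{1}{2}(y-x)^{\top}\nabla^2\ell(x)(y-x) = -\frac{1}{N}\|\widehat{\C}(y-x)\|_2^2 .
$$

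Next I would set $h = y-x$ and $\S = \supp(h)$, so that $|\S| = \|h\|_0 \leq 3\bar s_{T}+1$, $\widehat{\C}h = \widehat{\C}_{\S}h_{\S}$, and $\|h_{\S}\|_2 = \|h\|_2$. Then
$$
\frac{1}{N}\|\widehat{\C}h\|_2^2 = h_{\S}^{\top}\left(\frac{\widehat{\C}_{\S}^{\top}\widehat{\C}_{\S}}{N}\right)h_{\S},
$$
which lies between $\lambda_{\min}(\widehat{\C}_{\S}^{\top}\widehat{\C}_{\S}/N)\,\|h\|_2^2$ and $\lambda_{\max}(\widehat{\C}_{\S}^{\top}\widehat{\C}_{\S}/N)\,\|h\|_2^2$. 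Combining with the displayed identity, the lemma follows once I establish that, for each fixed $c>0$, with probability approaching one $\lambda_{\min}(\widehat{\C}_{\S}^{\top}\widehat{\C}_{\S}/N) \geq k/\sqrt{1+c}$ and $\lambda_{\max}(\widehat{\C}_{\S}^{\top}\widehat{\C}_{\S}/N) \leq \sqrt{1+c}\,K$ simultaneously over all $\S$ with $|\S|\leq 3\bar s_{T}+1$.

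This last step is a mild sharpening of Lemma \ref{lem: empirical sparse eigenvalues}. Repeating its proof but keeping track of constants, Weyl's inequality together with Lemma \ref{lem: individual covariance bound} and Assumption \ref{as: growth conditions} give $\sup_{|\S|\leq 3\bar s_{T}+1}|s_{\min}(\widehat{\C}_{\S}) - s_{\min}(\C_{\S})| = o_P(\sqrt N)$ and the analogous bound for $s_{\max}$; since Assumption \ref{as: sparse eigenvalues 2} places $s_{\min}(\C_{\S})$ and $s_{\max}(\C_{\S})$ in $[\sqrt{kN},\sqrt{KN}]$, it follows that $\lambda_{\min}(\widehat{\C}_{\S}^{\top}\widehat{\C}_{\S}/N) \geq k - o_P(1)$ and $\lambda_{\max}(\widehat{\C}_{\S}^{\top}\widehat{\C}_{\S}/N) \leq K + o_P(1)$ uniformly over such $\S$. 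Because $k(1-1/\sqrt{1+c})$ and $K(\sqrt{1+c}-1)$ are strictly positive constants, the $o_P(1)$ terms are eventually dominated, which yields the two bounds \wpa{} and hence the lemma. I do not expect a genuine obstacle here; the only point requiring care is that the crude constants $k/2$ and $2K$ in the \emph{statement} of Lemma \ref{lem: empirical sparse eigenvalues} are not tight enough to reach $k/\sqrt{1+c}$ and $\sqrt{1+c}\,K$ for small $c$, so one must reuse the $o_P(1)$-level approximation extracted inside its proof rather than invoke the lemma as a black box.
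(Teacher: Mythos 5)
Your proposal follows essentially the same route as the paper: the exact quadratic identity $\ell(y)-\ell(x)-\nabla\ell(x)^{\top}(y-x)=-\frac{1}{N}\|\widehat{\C}(y-x)\|_2^2$ combined with sparse-eigenvalue control of $\widehat{\C}_{\S}^{\top}\widehat{\C}_{\S}/N$ over supports of size at most $3\bar s_{T}+1$. Your additional observation is well taken and correct: the stated constants $k/2$ and $2K$ in Lemma \ref{lem: empirical sparse eigenvalues} only cover $c\geq 3$, so for small $c$ one must, as you do, reuse the $o_P(1)$ deviation bound established inside its proof (via Weyl's inequality, Lemma \ref{lem: individual covariance bound}, and Assumption \ref{as: growth conditions}) rather than cite that lemma as a black box, which is exactly the implicit content of the paper's one-line conclusion.
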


\begin{proof}
Observe that for all $x\in\R^p$,
\begin{equation}\label{eq: nabla expression}
\ell(x) = -\frac{1}{N}\widehat{\Y}^{\top}\widehat{\Y} + \frac{2}{N}\widehat{\Y}^{\top}\widehat{\C}x - \frac{1}{N}x^{\top}\widehat{\C}^{\top}\widehat{\C}x\quad\text{and}\quad
\nabla\ell(x) = \frac{2}{N}\widehat{\C}^{\top}\widehat{\Y} - \frac{2}{N}\widehat{\C}^{\top}\widehat{\C}x.
\end{equation}
Thus, for all $x,y\in\R^p$,
\begin{align*}
&  \ell(y) - \ell(x) - \nabla\ell(x)^{\top}(y-x) = \frac{2}{N}\widehat{\Y}^{\top}\widehat{\C}(y-x) - \frac{1}{N}y^{\top}\widehat{\C}^{\top}\widehat{\C}y + \frac{1}{N}x^{\top}\widehat{\C}^{\top}\widehat{\C}x \\
&\quad  - \frac{2}{N}\widehat{\Y}^{\top}\widehat{\C}(y-x) + \frac{2}{N}x^{\top}\widehat{\C}^{\top}\widehat{\C}(y-x) = -\frac{1}{N}(y-x)^{\top}\widehat{\C}^{\top}\widehat{\C}(y-x).
\end{align*}
The asserted claim follows from combining this equality with Lemma \ref{lem: empirical sparse eigenvalues}.
\end{proof}

\begin{proof}[Proof of Theorem \ref{thm: convergence rate}]
For all $\S\subset[p]$, denote
$$
\mathcal F(\S) = \max_{x\in\R^p\colon\supp(x)\subset\S}\ell(x).
$$
Observe that under Assumption \ref{as: sparse eigenvalues 1}, with probability approaching one, the set $\widehat\S$ is equal to the set $\S$ produced by Algorithm \ref{alg: fs fama-macbeth} with $R^2_{FM}(\cdot)$ replaced by $\mathcal F(\cdot)$ and
$$
\widehat\psi_{\widehat\S} \in \argmax_{x\in\R^p\colon \supp(x)\subset\widehat\S}\ell(x).
$$
Thus, to prove the asserted claim, we can apply Theorem 6 in \citet{EKDN18}. To do so, we first derive three auxiliary bounds.

First, using \eqref{eq: nabla expression}, we have
$
\nabla\ell(\psi[\S_0]) = 2\widehat{\C}^{\top}(\widehat{\Y} - \widehat{\C}_{\S_0}\psi_{\S_0})/N.
$
Thus,
$$
\|\nabla\ell(\psi[\S_0])\|_{\infty}\leq \frac{2}{N}\max_{j\in[p]}\|\widehat{\C}_{\{j\}}\|_2\times\|\widehat{\Y} - \widehat{\C}_{\S_0}\psi_{\S_0}\|_2
$$
Here, $\max_{j\in[p]}\|\widehat{\C}_{\{j\}}\|_2 \lesssim_P \sqrt N $ by Lemma \ref{lem: empirical sparse eigenvalues}. Also, by \eqref{eq: mean and covariances},
$$
\|\widehat{\Y} - \widehat{\C}_{\S_0}\psi_{\S_0}\|_2
\leq \|\widehat{\Y} - \Y\|_2 + \|\C\psi - \C_{\S_0}\psi_{\S_0}\|_2 + \|(\C_{\S_0} - \widehat{\C}_{\S_0})\psi_{\S_0}\|_2.
$$
In addition, $\|\widehat{\Y} - \Y\|_2 \lesssim_P \sqrt{N\log(Np)/T}$ by Assumption \ref{as: ulln} and $\|\C\psi - \C_{\S_0}\psi_{\S_0}\|_2 \lesssim_P \sqrt{N\log(Np)/T}$ by Assumption \ref{as: sdf loadings}. Moreover,
\begin{align*}
\| (\C_{\S_0} - \widehat{\C}_{\S_0})\psi_{\S_0} \|
& \leq \|(\C - \widehat{\C})\psi\|_2 + \|(\C_{\S_0^c} - \widehat{\C}_{\S_0^c})\psi_{\S_0^c}\|_2 \\
& \leq \|(\C - \widehat{\C})\psi\|_2 + \max_{j\in[p]}\|\C_{\{j\}} - \widehat{\C}_{\{j\}}\|_2 \times \|\psi_{\S_0^c}\|_1 \lesssim_P \sqrt{\frac{N\log(Np)}{T}}
\end{align*}
by the triangle inequality, Lemmas \ref{lem: individual covariance bound} and \ref{lem: c-psi bound}, and Assumptions \ref{as: sdf loadings} and \ref{as: growth conditions}. Thus,
\begin{equation}\label{eq: nabla bound}
\|\nabla\ell(\psi[\S_0])\|_{\infty} \lesssim_P \sqrt{\frac{\log(Np)}{T}},
\end{equation}
which is the first bound we need.

Second,
\begin{equation}\label{eq: final step for elenberg}
|\ell(\psi[\S_0]) - \ell(\mathbf 0_p)| \leq \frac{1}{N}\|\widehat\Y - \widehat\C\psi[\S_0]\|_2^2 + \frac{1}{N}\|\widehat\Y\|_2^2\lesssim_P 1
\end{equation}
by the bounds above and Assumption \ref{as: returns}.

Third, given that $|\S_0| + |\widehat\S| \leq 2\bar s_{T}$ \wpas by Assumptions \ref{as: sdf loadings} and \ref{as: estimator sparsity} and $|\widehat{\S}|\geq (1+c)(K/k)|\S_0|\log T$ \wpas by Assumption \ref{as: estimator sparsity}, it follows from Lemma \ref{lem: restricted eigenvalues} above and Theorem 1 and Corollary 1 in \citet{EKDN18} that
\begin{equation}\label{eq: i am exhausted elenberg}
\ell(\widehat\psi) - \ell(\mathbf 0_p) \geq (1-T^{-1})(\ell(\psi[\S_0]) - \ell(\mathbf 0_p)
\end{equation}
\wpa. 

Now, combining \eqref{eq: nabla bound}, \eqref{eq: final step for elenberg}, and \eqref{eq: i am exhausted elenberg} above with Theorem 6 in \citet{EKDN18}, we obtain
\begin{equation}\label{eq: something we need}
\|\widehat\psi - \psi[\S_0]\|_2^2  \lesssim_P (|\S_0| + |\widehat{\S}|)\frac{\log(Np)}{T} + \frac{1}{T} \lesssim_P \frac{|\S_0|\log^2(NTp)}{T},
\end{equation}
where the second inequality follows from Assumption \ref{as: estimator sparsity}. Thus,
\begin{align*}
\|\widehat\psi - \psi\|_2^2 & \lesssim \|\widehat\psi - \psi[\S_0]\|_2^2 + \|\psi[\S_0] - \psi\|_2^2 \\
& \leq \|\widehat\psi - \psi[\S_0]\|_2^2 + \|\psi[\S_0] - \psi\|_1^2 \lesssim_P \frac{|\S_0|\log^2(NTp)}{T}
\end{align*}
by the triangle inequality and Assumption \ref{as: sdf loadings}, yielding the first asserted claim. Also,
\begin{align*}
\|\widehat\psi - \psi\|_1^2 & \lesssim \|\widehat\psi - \psi[\S_0]\|_1^2 + \|\psi[\S_0] - \psi\|_1^2 \\
& \leq (|\S_0|+|\widehat\S|)\|\widehat\psi - \psi[\S_0]\|_2^2 + \|\psi[\S_0] - \psi\|_1^2 \lesssim_P \frac{|\S_0|^2\log^3(NTp)}{T}
\end{align*}
by the Cauchy-Schwarz inequality and Assumption \ref{as: sdf loadings}, yielding the second asserted claim.
\end{proof}

\section{Proof of Theorem \ref{thm: asy normality}}\label{sec: proof of theorem asy normality}

In this Appendix, we use the same additional notation as that introduced in the previous Appendix. Also, for all $j\in[p]$, let $\ell_j\colon\R^{p-1}\to \R$ be the function defined by $\ell_j(x) = \| \widehat{\C}_{\{j\}} - \widehat\C_{\{j\}^c} x \|_2^2/N$ for all $x\in\R^{p-1}$. In addition, for all $j\in[p]$, let $\widetilde\S_j = \widehat\S_{D,j}\setminus\{j\}$ and let $\widehat\varphi_j = (\widehat\varphi_{j,1},\dots,\widehat\varphi_{j,p})^{\top}\in\R^p$ be the vector defined by $\widehat\varphi_{j,j} = 0$ and
\begin{equation}\label{eq: varphi estimator definition}
\widehat\varphi_{j,\{j\}^c} = \left(  \widehat\C_{\widetilde\S_j}^{\top}\widehat\C_{\widetilde\S_j} \right)^{-1}\widehat\C_{\widetilde\S_j}^{\top}\widehat\C_{\{j\}}.
\end{equation}
To prove Theorem \ref{thm: asy normality}, we first state and prove seven auxiliary lemmas.

\begin{lemma}\label{lem: varphi bound}
Suppose that Assumptions  \ref{as: sparse eigenvalues 2}, \ref{as: first stage sparsity}, and \ref{as: growth conditions 2} are satisfied. Then for all $j\in[p]$, we have $\|\varphi_j\|_2\lesssim 1$.
\end{lemma}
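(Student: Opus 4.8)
The plan is to back out an explicit closed form for $\varphi_j$ from the normal-equation-type identity implicit in Assumption \ref{as: first stage sparsity}, and then to control each term in that formula using the sparse-eigenvalue bounds of Assumption \ref{as: sparse eigenvalues 2} together with the near-orthogonality bound on $R_j$.

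Fix $j\in[p]$. If $\S_{0,j} = \emptyset$, then $\varphi_j = \mathbf 0_p$ and the claim is immediate, so suppose $|\S_{0,j}|\geq 1$. First I would use $\supp(\varphi_j) = \S_{0,j}$ to write $R_j = \C_{\{j\}} - \C_{\S_{0,j}}\varphi_{j,\S_{0,j}}$, left-multiply by $\C_{\S_{0,j}}^{\top}$, and rearrange to get
$$
\C_{\S_{0,j}}^{\top}\C_{\S_{0,j}}\,\varphi_{j,\S_{0,j}} = \C_{\S_{0,j}}^{\top}\C_{\{j\}} - \C_{\S_{0,j}}^{\top}R_j.
$$
Since $|\S_{0,j}|\leq\bar s_{T}\leq 3\bar s_{T}+1$, Assumption \ref{as: sparse eigenvalues 2} gives $\lambda_{\min}(\C_{\S_{0,j}}^{\top}\C_{\S_{0,j}})\geq Nk>0$, so the Gram matrix is invertible with $\|(\C_{\S_{0,j}}^{\top}\C_{\S_{0,j}})^{-1}\|_2\leq 1/(Nk)$; solving and taking norms yields
$$
\|\varphi_j\|_2 = \|\varphi_{j,\S_{0,j}}\|_2\leq\frac{1}{Nk}\Big(\|\C_{\S_{0,j}}^{\top}\C_{\{j\}}\|_2 + \|\C_{\S_{0,j}}^{\top}R_j\|_2\Big).
$$

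Next I would bound the two terms. The upper bound in Assumption \ref{as: sparse eigenvalues 2}, applied to $\{j\}$ and to $\S_{0,j}$, gives $\|\C_{\{j\}}\|_2\leq\sqrt{NK}$ and $\|\C_{\S_{0,j}}\|_2\leq\sqrt{NK}$, hence $\|\C_{\S_{0,j}}^{\top}\C_{\{j\}}\|_2\leq NK$. For the second term, the entries of $\C_{\S_{0,j}}^{\top}R_j$ form a subvector of $\C_{\{j\}^c}^{\top}R_j$ because $\S_{0,j}\subset\{j\}^c$, so $\|\C_{\S_{0,j}}^{\top}R_j\|_\infty\leq\|\C_{\{j\}^c}^{\top}R_j\|_\infty\lesssim N\sqrt{\log(Np)/T}$ by Assumption \ref{as: first stage sparsity}, whence $\|\C_{\S_{0,j}}^{\top}R_j\|_2\leq\sqrt{|\S_{0,j}|}\,\|\C_{\S_{0,j}}^{\top}R_j\|_\infty\lesssim N\sqrt{\bar s_{T}\log(Np)/T}$. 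Plugging these in gives $\|\varphi_j\|_2\lesssim K/k + k^{-1}\sqrt{\bar s_{T}\log(Np)/T}$, and the last term is $o(1)$ because Assumption \ref{as: growth conditions 2} forces $\bar s_{T}\log(Np)=o(T)$; since all the bounds are uniform in $j$, this establishes $\|\varphi_j\|_2\lesssim 1$.

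The only point that needs a little care — and the reason the argument isolates $\varphi_{j,\S_{0,j}}$ through the identity above rather than trying to bound $R_j$ directly — is that $\C_{\{j\}}$ is \emph{not} among the columns of $\C_{\{j\}^c}$, so the near-orthogonality condition of Assumption \ref{as: first stage sparsity} controls $\C_{\S_{0,j}}^{\top}R_j$ but says nothing about $\C_{\{j\}}^{\top}R_j$. Everything else is a routine application of the stated eigenvalue and growth conditions.
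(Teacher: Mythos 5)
Your proposal is correct and follows essentially the same route as the paper's proof: solve the normal equation $\varphi_{j,\S_{0,j}} = (\C_{\S_{0,j}}^{\top}\C_{\S_{0,j}})^{-1}(\C_{\S_{0,j}}^{\top}\C_{\{j\}} - \C_{\S_{0,j}}^{\top}R_j)$, bound the Gram matrix from below via Assumption \ref{as: sparse eigenvalues 2}, bound $\C_{\S_{0,j}}^{\top}R_j$ via the $\ell_{\infty}$ near-orthogonality in Assumption \ref{as: first stage sparsity}, and absorb the remainder using Assumption \ref{as: growth conditions 2}. The extra details you supply (the empty-set case and the explicit constants $K/k$) are harmless refinements of the same argument.
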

\begin{proof}
Fix $j\in[p]$. It follows from the equality $R_j = \C_{\{j\}} - \C_{\S_{0,j}}\varphi_{j,\S_{0,j}}$ that
$$
\varphi_{j,\S_{0,j}} = (\C_{\S_{0,j}}^{\top}\C_{\S_{0,j}})^{-1} (\C_{\S_{0,j}}^{\top}\C_{\{j\}} - \C_{\S_{0,j}}^{\top}R_j),
$$
where the matrix $\C_{\S_{0,j}}^{\top}\C_{\S_{0,j}}$ is non-singular by Assumptions \ref{as: sparse eigenvalues 2} and \ref{as: first stage sparsity}. Thus,
$$
\| \varphi_{j,\S_{0,j}}\|_2 \leq \frac{\| \C_{\S_{0,j}}^{\top}\C_{\{j\}} \|_2 + \| \C_{\S_{0,j}}^{\top}R_j \|_2}{\lambda_{\min}(\C_{\S_{0,j}}^{\top}\C_{\S_{0,j}})}\lesssim 1 + \sqrt{\frac{|\S_{0,j}|\log(Np)}{T}} \lesssim 1
$$
by Assumptions \ref{as: sparse eigenvalues 2}, \ref{as: first stage sparsity}, and \ref{as: growth conditions 2}. Since $\varphi_{j,l} = 0$ for all $l\in \S_{0,j}^c$, the asserted claim follows. 
\end{proof}

\begin{lemma}\label{lem: r upper bound}
Suppose that Assumptions \ref{as: sparse eigenvalues 2},  \ref{as: first stage sparsity}, and \ref{as: growth conditions 2} are satisfied. Then for all $j\in[p]$, we have $N\lesssim \|R_j\|_2^2 \lesssim N$.
\end{lemma}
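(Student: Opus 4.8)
\textbf{Proof of Lemma \ref{lem: r upper bound} (proposal).} The plan is to bound $\|R_j\|_2^2$ from both sides by writing $R_j$ as a linear combination of a controlled number of columns of $\C$ and then invoking the sparse eigenvalue condition of Assumption \ref{as: sparse eigenvalues 2} together with the $\ell_2$-bound on $\varphi_j$ from Lemma \ref{lem: varphi bound}. Fix $j\in[p]$. Since $\supp(\varphi_j) = \S_{0,j}\subset\{j\}^c$ with $|\S_{0,j}|\leq\bar s_T$, we have $R_j = \C_{\{j\}} - \C_{\S_{0,j}}\varphi_{j,\S_{0,j}}$, and hence $R_j = \C_{\S} u$, where $\S = \{j\}\cup\S_{0,j}$ satisfies $|\S| = |\S_{0,j}| + 1 \leq \bar s_T + 1 \leq 3\bar s_T + 1$ (using $j\notin\S_{0,j}$), and $u\in\R^{|\S|}$ is the vector whose coordinate indexed by $j$ equals $1$ and whose remaining coordinates equal $-\varphi_{j,l}$, $l\in\S_{0,j}$. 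In particular $\|u\|_2^2 = 1 + \|\varphi_{j,\S_{0,j}}\|_2^2$, so that $1\leq\|u\|_2^2\lesssim 1$, where the upper bound follows from Lemma \ref{lem: varphi bound}.

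For the lower bound I would use $\|R_j\|_2^2 = u^{\top}\C_{\S}^{\top}\C_{\S}u \geq \lambda_{\min}\!\big(\C_{\S}^{\top}\C_{\S}/N\big)\, N\,\|u\|_2^2 \geq kN$, where the last inequality combines the lower bound in Assumption \ref{as: sparse eigenvalues 2} (applicable since $|\S|\leq 3\bar s_T + 1$) with $\|u\|_2^2\geq 1$. Symmetrically, for the upper bound, $\|R_j\|_2^2 \leq \lambda_{\max}\!\big(\C_{\S}^{\top}\C_{\S}/N\big)\, N\,\|u\|_2^2 \leq K N\|u\|_2^2 \lesssim N$, now using the upper bound in Assumption \ref{as: sparse eigenvalues 2} and $\|u\|_2^2\lesssim 1$. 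Combining the two displays yields $N\lesssim\|R_j\|_2^2\lesssim N$, and since $k$, $K$ and the implicit constant in Lemma \ref{lem: varphi bound} do not depend on $j$, the bound holds uniformly in $j\in[p]$.

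There is no genuinely hard step here: the whole argument reduces to the observation that $R_j$ lies in the column span of at most $3\bar s_T + 1$ columns of $\C$ with a coefficient vector bounded above and below in $\ell_2$-norm. The only points that require care are verifying the cardinality bound $|\{j\}\cup\S_{0,j}|\leq 3\bar s_T + 1$ so that Assumption \ref{as: sparse eigenvalues 2} applies, and citing Lemma \ref{lem: varphi bound}, whose own proof relies on Assumptions \ref{as: first stage sparsity} and \ref{as: growth conditions 2}; both of these are already available in the stated hypotheses.
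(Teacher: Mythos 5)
Your proposal is correct, and the hypotheses you invoke (Assumption \ref{as: sparse eigenvalues 2} for sets of size at most $3\bar s_{T}+1$, plus Lemma \ref{lem: varphi bound}, whose assumptions are exactly those of the present lemma and whose proof does not rely on this lemma, so there is no circularity) are all available. Your lower-bound argument is the same as the paper's: write $R_j = \C_{\S}u$ with $\S=\{j\}\cup\S_{0,j}$, $|\S|\leq\bar s_{T}+1$, and coefficient $1$ on the $j$th column, and apply the $\lambda_{\min}$ part of Assumption \ref{as: sparse eigenvalues 2} together with $\|u\|_2\geq 1$. For the upper bound, however, you take a genuinely different and more direct route: you bound the quadratic form $u^{\top}\C_{\S}^{\top}\C_{\S}u$ by $NK\|u\|_2^2$ using the $\lambda_{\max}$ part of Assumption \ref{as: sparse eigenvalues 2} on the same enlarged set, with $\|u\|_2^2 = 1+\|\varphi_j\|_2^2\lesssim 1$ from Lemma \ref{lem: varphi bound}. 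The paper instead writes $\|R_j\|_2^2 = R_j^{\top}(\C_{\{j\}}-\C\varphi_j)$, bounds it by $\|R_j\|_2\|\C_{\{j\}}\|_2 + \|\C_{\{j\}^c}^{\top}R_j\|_{\infty}\|\varphi_j\|_1$, and uses the near-orthogonality condition in Assumption \ref{as: first stage sparsity} together with Assumption \ref{as: growth conditions 2} to obtain the self-bounding inequality $\|R_j\|_2^2\lesssim \sqrt N\|R_j\|_2 + N$, which it then solves. Your version avoids both the $\ell_1/\ell_2$ conversion and the explicit appeal to the growth condition at this step (it enters only indirectly through Lemma \ref{lem: varphi bound}), so it is somewhat cleaner; the paper's version uses only the singleton-column bound $\|\C_{\{j\}}\|_2\lesssim\sqrt N$ from Assumption \ref{as: sparse eigenvalues 2} and otherwise leans on the residual orthogonality, an argument pattern it reuses elsewhere (e.g., in the proof of Lemma \ref{lem: varphi estimation}). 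Both deliver the same two-sided bound with constants uniform in $j$.
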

\begin{proof}
Fix $j\in[p]$. Let $\nu = (\nu_1,\dots,\nu_p)^\top\in\R^p$ be the vector defined by $\nu_j = 1$, $\nu_l = - \varphi_{j,l}$ for all $l\in\S_{0,j}$, and $\nu_l = 0$ for all $l\in(\S_{0,j}\cup\{j\})^c$. Then $R_j = \C\nu$ and $\|\nu\|_0\leq \bar s_{T}+1$ by Assumption \ref{as: first stage sparsity}. Therefore,
$$
\|R_j\|_2^2 = R_j^{\top}R_j = (\C\nu)^{\top}(\C\nu) = \nu^{\top}(\C^{\top}\C)\nu \geq N k \|\nu\|_2^2 \geq kN,
$$
yielding the lower bound in the asserted claim. Further,
\begin{align}
\|R_j\|_2^2
& = R_j^{\top}(\C_{\{j\}} - \C\varphi_j) 
\leq \|R_j\|_2\times \|\C_{\{j\}}\|_2 + \|\C_{\{j\}^c}^{\top}R_j\|_{\infty}\times \|\varphi_j\|_1\nonumber \\
& \lesssim \sqrt N \|R_j\|_2 + N\sqrt{\log(Np)/T}\sqrt{|\S_{0,j}|}\|\varphi_j\|_2 \lesssim \sqrt N\|R_j\|_2 + N,\label{eq: simple upper bound for r}
\end{align}
where the second inequality follows from Assumption \ref{as: sparse eigenvalues 2} and \ref{as: first stage sparsity} and the third from Assumptions \ref{as: first stage sparsity} and \ref{as: growth conditions 2} and Lemma \ref{lem: varphi bound}. In turn, \eqref{eq: simple upper bound for r} implies the upper bound in the asserted claim.
\end{proof}

%\begin{lemma}\label{lem: beta bound}
%Suppose that Assumptions \ref{as: returns} and \ref{as: factors cool} are satisfied. Then for all $j\in[p]$, we have $\sum_{i\in[N]}\beta_{i,j}^2\lesssim N$.
%\end{lemma}
%\begin{proof}
%Note that $|\E[(f_{t,j} - \E[f_{t,j}])r_{i,t}]|\leq (\E[(f_{t,j} - \E[f_{t,j}])^2] \E[r_{i,t}^2]])^{1/2} \lesssim 1$ uniformly over $(i,j)\in[N]\times[p]$ by Assumptions \ref{as: returns} and \ref{as: factors cool}. Thus, given that $\beta_i = (\E[(f_t - \E[f_t])(f_t - \E[f_t])^{\top}])^{-1}\E[(f_{t,j} - \E[f_{t,j}])r_{i,t}]$, it follows that $|\beta_{i,j}|\lesssim 1$ uniformly over $(i,j)\in[N]\times[p]$ by Assumption \ref{as: factors cool}. The latter in turn trivially implies the asserted claim.
%\end{proof}

\begin{lemma}\label{lem: c-varphi bound}
Suppose that Assumptions \ref{as: ulln}, \ref{as: ulln 2}, and \ref{as: growth conditions 2} are satisfied. Then for all $j\in[p]$, we have
$$
\|(\widehat{\C} - \C)\varphi_j\|_2 \lesssim_P \sqrt{\frac{N\log(Np)}{T}}.
$$
\end{lemma}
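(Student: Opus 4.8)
The plan is to follow the proof of Lemma~\ref{lem: c-psi bound} almost verbatim, with the vector $\varphi_j$ in the role of $\psi$ and with Assumption~\ref{as: ulln 2} supplying the uniform law of large numbers for the bundled quantity $\varphi_j^\top v_t$ in place of $\psi^\top v_t = u_{2,t}$, which Assumption~\ref{as: ulln} covers. Fix $j\in[p]$. First I would expand the $i$-th coordinate $[(\widehat\C-\C)\varphi_j]_i = \sum_{k\in[p]}(\widehat\C_{i,k}-\C_{i,k})\varphi_{j,k}$ using \eqref{eq: chat expansion} for $\widehat\C_{i,k}$ together with the identity $\C_{i,k}=\beta_i^\top\E[v_tv_{t,k}]$. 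Using $\sum_k\varphi_{j,k}v_{t,k}=\varphi_j^\top v_t$, $\sum_k\varphi_{j,k}\E[v_tv_{t,k}]=\E[v_t(\varphi_j^\top v_t)]$, and the fact that $\E[f_{t,k}]-\bar f_k$ does not depend on $t$ by stationarity, this collapses the coordinate into $A_i+B_i+C_i+D_i$ with
\begin{align*}
A_i&=\frac1T\sum_{t\in[T]}(\beta_i^\top v_t)(\varphi_j^\top v_t)-\E[(\beta_i^\top v_t)(\varphi_j^\top v_t)], & B_i&=\Big(\frac1T\sum_{t\in[T]}\beta_i^\top v_t\Big)\big(\varphi_j^\top(\E[f_t]-\bar f)\big),\\
C_i&=\frac1T\sum_{t\in[T]}\varepsilon_{i,t}(\varphi_j^\top v_t), & D_i&=\Big(\frac1T\sum_{t\in[T]}\varepsilon_{i,t}\Big)\big(\varphi_j^\top(\E[f_t]-\bar f)\big).
\end{align*}

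Next I would bound each piece uniformly over $i\in[N]$. Since $\beta_i^\top v_t=u_{i+2+N,t}$, $\varepsilon_{i,t}=u_{i+2,t}$, and $\E[\varepsilon_{i,t}v_t]=\mathbf 0_p$ (hence $\E[\varepsilon_{i,t}\varphi_j^\top v_t]=0$), Assumption~\ref{as: ulln 2} yields $\max_{i\in[N]}|A_i|\lesssim_P\sqrt{\log(Np)/T}$ and $\max_{i\in[N]}|C_i|\lesssim_P\sqrt{\log(Np)/T}$. For $B_i$ and $D_i$, note $\varphi_j^\top(\E[f_t]-\bar f)=-T^{-1}\sum_{s\in[T]}\varphi_j^\top v_s$, which is $\lesssim_P\sqrt{\log(Np)/T}$ by Assumption~\ref{as: ulln 2} with $u_{1,t}=1$, while $T^{-1}\sum_t\beta_i^\top v_t$ and $T^{-1}\sum_t\varepsilon_{i,t}$ are each $\lesssim_P\sqrt{\log(Np)/T}$ uniformly over $i$ by Assumption~\ref{as: ulln} with $w_{t,1}=1$ (the relevant expectations being zero); hence $\max_{i\in[N]}|B_i|$ and $\max_{i\in[N]}|D_i|$ are $\lesssim_P\log(Np)/T$. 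Combining the four bounds and using $\log(Np)\lesssim T$ (which follows from Assumption~\ref{as: growth conditions 2}) gives $\max_{i\in[N]}|[(\widehat\C-\C)\varphi_j]_i|\lesssim_P\sqrt{\log(Np)/T}$, so that
$$
\|(\widehat\C-\C)\varphi_j\|_2^2\le N\max_{i\in[N]}\big|[(\widehat\C-\C)\varphi_j]_i\big|^2\lesssim_P\frac{N\log(Np)}{T};
$$
taking square roots gives the claim, and since all the above bounds are in fact uniform in $j$ as well, the conclusion holds for every $j\in[p]$.

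The main obstacle here is not substantive: it is the bookkeeping that turns the sample-mean corrections $\E[f_{t,k}]-\bar f_k$ into the single quantity $\varphi_j^\top(\E[f_t]-\bar f)=-T^{-1}\sum_s\varphi_j^\top v_s$ and recognizes $B_i,D_i$ as products of two quantities each of order $\sqrt{\log(Np)/T}$, hence negligible relative to the target rate. Once the coordinate is written as $A_i+B_i+C_i+D_i$, everything reduces to direct invocations of Assumptions~\ref{as: ulln} and~\ref{as: ulln 2}, exactly as in the proof of Lemma~\ref{lem: c-psi bound}.
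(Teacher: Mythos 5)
Your proposal is correct and follows essentially the same route as the paper: the paper's proof of this lemma simply invokes the argument of Lemma \ref{lem: c-psi bound} (the decomposition in \eqref{eq: chat expansion}) with Assumption \ref{as: ulln 2} supplying the uniform law of large numbers for $\varphi_j^{\top}v_t$ and Assumption \ref{as: growth conditions 2} in place of Assumption \ref{as: growth conditions}. You have merely written out the bookkeeping (the four terms $A_i,B_i,C_i,D_i$ and the product structure of the demeaning terms) that the paper leaves implicit, and each invocation of the assumptions is accurate.
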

\begin{proof}
The claim follows from the same argument as that in the proof of Lemma \ref{lem: c-psi bound}, where we now use Assumption \ref{as: ulln 2} and replace Assumption \ref{as: growth conditions} by Assumption \ref{as: growth conditions 2}.
\end{proof}

\begin{lemma}\label{lem: restricted eigenvalues j}
Suppose that Assumptions \ref{as: ulln}, \ref{as: sparse eigenvalues 2}, and \ref{as: growth conditions 2} are satisfied. Then for any $c>0$,
$$
-\frac{k}{\sqrt{1+c}}\|y-x\|_2^2 \geq \ell_j(y) - \ell_j(x) - \nabla\ell_j(x)^{\top}(y-x)\geq - \sqrt{1+c}K\|y-x\|_2^2
$$
for all $x, y\in\R^{p-1}$ satisfying $\|y-x\|_0\leq 3\bar s_{T}+1$ \wpa.
\end{lemma}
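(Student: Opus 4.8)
The plan is to follow the proof of Lemma~\ref{lem: restricted eigenvalues} essentially verbatim, with $\ell_j$ in place of $\ell$ and the submatrix $\widehat{\C}_{\{j\}^c}$ in place of $\widehat{\C}$ (reading $\ell_j$ with the leading minus sign that makes it the exact analog of $\ell$ and renders the two-sided bound in the statement sign-consistent). First I would record the explicit quadratic structure of $\ell_j$: expanding the square gives
\[
\ell_j(x) = -\frac{1}{N}\widehat{\C}_{\{j\}}^{\top}\widehat{\C}_{\{j\}} + \frac{2}{N}\widehat{\C}_{\{j\}}^{\top}\widehat{\C}_{\{j\}^c}x - \frac{1}{N}x^{\top}\widehat{\C}_{\{j\}^c}^{\top}\widehat{\C}_{\{j\}^c}x,
\]
and hence $\nabla\ell_j(x) = \frac{2}{N}\widehat{\C}_{\{j\}^c}^{\top}\widehat{\C}_{\{j\}} - \frac{2}{N}\widehat{\C}_{\{j\}^c}^{\top}\widehat{\C}_{\{j\}^c}x$ for all $x\in\R^{p-1}$.

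Second, exactly as in the proof of Lemma~\ref{lem: restricted eigenvalues}, the zeroth- and first-order terms cancel in the Bregman-type remainder, leaving the exact identity
\[
\ell_j(y) - \ell_j(x) - \nabla\ell_j(x)^{\top}(y-x) = -\frac{1}{N}(y-x)^{\top}\widehat{\C}_{\{j\}^c}^{\top}\widehat{\C}_{\{j\}^c}(y-x)
\]
for all $x,y\in\R^{p-1}$. It then suffices to bound the quadratic form $\frac{1}{N}(y-x)^{\top}\widehat{\C}_{\{j\}^c}^{\top}\widehat{\C}_{\{j\}^c}(y-x)$ from above and below, uniformly over increments with $\|y-x\|_0\leq 3\bar s_{T}+1$.

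Third, fixing such an increment and letting $\S\subset\{j\}^c\subset[p]$ be its support (so $|\S|\leq 3\bar s_{T}+1$), I would write $\|\widehat{\C}_{\{j\}^c}(y-x)\|_2^2 = (y-x)_{\S}^{\top}\widehat{\C}_{\S}^{\top}\widehat{\C}_{\S}(y-x)_{\S}$ and apply the sparse-eigenvalue bound for $\widehat{\C}_{\S}^{\top}\widehat{\C}_{\S}/N$. Here I would use Lemma~\ref{lem: empirical sparse eigenvalues} in the sharper form in which the constants $k/2$ and $2K$ are replaced by $k/\sqrt{1+c}$ and $\sqrt{1+c}K$ — this is precisely the form already used in the proof of Lemma~\ref{lem: restricted eigenvalues}, and it is legitimate because the perturbation bound in the proof of Lemma~\ref{lem: empirical sparse eigenvalues}, namely $\|\widehat{\C}_{\S}-\C_{\S}\|_2^2\lesssim_P \bar s_{T}N\log(Np)/T = o(N)$, holds with an arbitrary constant factor, so the empirical singular values can be forced arbitrarily close to those of $\C_{\S}/\sqrt N$. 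I would also check that the hypotheses of Lemma~\ref{lem: empirical sparse eigenvalues} are in force: Assumptions~\ref{as: ulln} and~\ref{as: sparse eigenvalues 2} are assumed directly, and Assumption~\ref{as: growth conditions 2} implies Assumption~\ref{as: growth conditions}, since $\bar s_{T}^2\log^4(Np)=o(T)$ entails $\bar s_{T}\log(NTp)=o(T)$. Substituting the resulting two-sided bound into the identity above gives the claim \wpa.

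\textbf{Main obstacle.} I do not expect a genuine obstacle: the argument is routine once the template of Lemma~\ref{lem: restricted eigenvalues} is available. The only points needing care are (i) fixing the sign convention for $\ell_j$ so that the stated two-sided inequality is the correct one, and (ii) the bookkeeping to reduce the quadratic form in $\widehat{\C}_{\{j\}^c}$ to one in $\widehat{\C}_{\S}$ over a small index set and to invoke Lemma~\ref{lem: empirical sparse eigenvalues} with constants matched to the free parameter $c$ rather than with its stated constants $k/2$ and $2K$.
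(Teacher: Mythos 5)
Your proposal is correct and takes essentially the same route as the paper, whose proof of this lemma simply repeats the argument of Lemma \ref{lem: restricted eigenvalues} with Assumption \ref{as: growth conditions} replaced by Assumption \ref{as: growth conditions 2}. Your additional care about the sign convention in the definition of $\ell_j$ and about invoking Lemma \ref{lem: empirical sparse eigenvalues} with constants matched to the free parameter $c$ (via its $o(N)$ perturbation bound, rather than the stated $k/2$ and $2K$) is exactly the intended reading.
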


\begin{proof}
The claim follows from the same argument as that in the proof of Lemma \ref{lem: restricted eigenvalues}, where we replace Assumption \ref{as: growth conditions} by Assumption \ref{as: growth conditions 2}.
\end{proof}

\begin{lemma}\label{lem: varphi estimation}
Suppose that Assumptions \ref{as: ulln}, \ref{as: sparse eigenvalues 2}, \ref{as: first stage sparsity}, \ref{as: ulln 2}, \ref{as: first stage estimation}, and \ref{as: growth conditions 2} are satisfied. Then for all $j\in[p]$,
$$
\|\widehat\varphi_j - \varphi_j\|_2 \lesssim_P \sqrt{\frac{\bar s_{T}\log(Np)}{T}}.
$$
\end{lemma}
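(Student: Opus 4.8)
The plan is to transport the argument behind Theorem \ref{thm: convergence rate} to the first-stage cross-sectional regression, under the correspondence $\widehat{\Y}\leftrightarrow\widehat{\C}_{\{j\}}$, $\psi\leftrightarrow\varphi_j$, $\S_0\leftrightarrow\S_{0,j}$, $\widehat{\S}\leftrightarrow\widetilde{\S}_j$, and $\ell\leftrightarrow-\ell_j$. First I would fix $j\in[p]$ and assemble the objects feeding Theorem 6 in \cite{EKDN18}. For any $\S$, the no-intercept OLS $R^2$ of the cross-sectional regression of $\widehat{\C}_{i,j}$ on $\widehat{\C}_{i,\S}$ equals $1-N\min_{x\colon\supp(x)\subset\S}\ell_j(x)/\|\widehat{\C}_{\{j\}}\|_2^2$, and since the denominator does not depend on $\S$, maximizing it is equivalent to maximizing $\S\mapsto\max_{x\colon\supp(x)\subset\S}(-\ell_j(x))$. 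Hence, by Lemma \ref{lem: empirical sparse eigenvalues} (which makes the relevant Gram matrices non-singular \wpa), $\widehat{\S}_j$ is \wpas the greedy forward-selection set for this criterion, and $\widehat\varphi_{j,\{j\}^c}$ is the OLS solution on $\widetilde{\S}_j=\widehat{\S}_{D,j}\setminus\{j\}$, which contains $\widehat{\S}_j$. Consequently $-\ell_j(\widehat\varphi_{j,\{j\}^c})\geq\max_{x\colon\supp(x)\subset\widehat{\S}_j}(-\ell_j(x))$, and $|\widetilde{\S}_j|\leq|\widehat{\S}|+|\widehat{\S}_j|\leq 2\bar s_{T}$ by Assumptions \ref{as: estimator sparsity} and \ref{as: first stage estimation}.

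Next I would establish the two quantitative inputs. For the gradient bound at $\varphi_{j,\{j\}^c}$, note that $\nabla\ell_j(\varphi_{j,\{j\}^c})$ is proportional to $N^{-1}\widehat{\C}_{\{j\}^c}^{\top}(\widehat{\C}_{\{j\}}-\widehat{\C}\varphi_j)$, and I would bound
$$
\left\|\widehat{\C}_{\{j\}^c}^{\top}(\widehat{\C}_{\{j\}}-\widehat{\C}\varphi_j)\right\|_{\infty}\leq\|\C_{\{j\}^c}^{\top}R_j\|_{\infty}+\text{(terms involving }\widehat{\C}-\C\text{)},
$$
where the leading term is $\lesssim N\sqrt{\log(Np)/T}$ by Assumption \ref{as: first stage sparsity} and the remaining terms are $\lesssim_P N\sqrt{\log(Np)/T}$ by Lemmas \ref{lem: individual covariance bound}, \ref{lem: empirical sparse eigenvalues}, and \ref{lem: c-varphi bound}, together with $\|\varphi_j\|_1\leq\sqrt{|\S_{0,j}|}\,\|\varphi_j\|_2\lesssim\sqrt{\bar s_{T}}$ (Lemma \ref{lem: varphi bound}) and Assumption \ref{as: growth conditions 2}; dividing by $N$ gives $\|\nabla\ell_j(\varphi_{j,\{j\}^c})\|_{\infty}\lesssim_P\sqrt{\log(Np)/T}$. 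For the level bound, $|\ell_j(\varphi_{j,\{j\}^c})-\ell_j(\mathbf 0_{p-1})|\leq N^{-1}\|\widehat{\C}_{\{j\}}-\widehat{\C}\varphi_j\|_2^2+N^{-1}\|\widehat{\C}_{\{j\}}\|_2^2\lesssim_P 1$, using $\|R_j\|_2^2\lesssim N$ (Lemma \ref{lem: r upper bound}), the same $\widehat{\C}-\C$ bounds, and Lemma \ref{lem: empirical sparse eigenvalues}.

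Finally I would combine these bounds with the restricted-eigenvalue statement for $\ell_j$ in Lemma \ref{lem: restricted eigenvalues j} and with Theorem 1, Corollary 1, and Theorem 6 in \cite{EKDN18}, exactly as in the proof of Theorem \ref{thm: convergence rate}: Assumption \ref{as: first stage estimation} supplies $|\widehat{\S}_j|\geq(1+c)(K/k)|\S_{0,j}|\log T$, which yields the greedy-approximation factor $1-T^{-1}$, and $\|\widehat\varphi_j-\varphi_j\|_0\leq|\widetilde{\S}_j|+|\S_{0,j}|\leq 3\bar s_{T}+1$ places the relevant direction within the scope of Lemma \ref{lem: restricted eigenvalues j}. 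This gives
$$
\|\widehat\varphi_j-\varphi_j\|_2^2\lesssim_P\big(|\S_{0,j}|+|\widetilde{\S}_j|\big)\frac{\log(Np)}{T}+\frac1T\lesssim_P\frac{\bar s_{T}\log(Np)}{T},
$$
and taking square roots yields the claim. The main obstacle is purely the bookkeeping in the second paragraph — showing that replacing $\widehat{\C}$ by $\C$ everywhere costs only $O_P(N\sqrt{\log(Np)/T})$ in the $\ell_\infty$ norm, correctly carrying the $\sqrt{\bar s_{T}}$ factor coming from $\|\varphi_j\|_1$, and verifying that Assumption \ref{as: growth conditions 2} (rather than merely Assumption \ref{as: growth conditions}) is what is needed; once these bounds are in hand, the forward-selection machinery of \cite{EKDN18} transfers essentially verbatim from the proof of Theorem \ref{thm: convergence rate}.
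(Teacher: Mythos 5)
Your proposal follows essentially the same route as the paper's proof: the same gradient bound $\|\nabla\ell_j(\varphi_{j,\{j\}^c})\|_{\infty}\lesssim_P\sqrt{\log(Np)/T}$, the same level bound, the greedy guarantee from Theorem 1 and Corollary 1 of \cite{EKDN18} under Assumption \ref{as: first stage estimation}, and the restricted-eigenvalue bound of Lemma \ref{lem: restricted eigenvalues j} applied to $\widehat\varphi_j-\varphi_j$ with the sparsity count $\leq 3\bar s_{T}+1$. The only phrase to tighten is ``Theorem 6 transfers essentially verbatim'': as the paper notes, that theorem cannot be invoked directly because it concerns the OLS fit on the greedy set $\widehat{\S}_j$ rather than on $\widetilde{\S}_j=\widehat{\S}_{D,j}\setminus\{j\}$, but the monotonicity of the fit under enlarging the support from $\widehat{\S}_j$ to $\widetilde{\S}_j$, which you state explicitly in your first paragraph, is exactly the observation the paper uses to replicate that theorem's proof, so your argument is complete.
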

\begin{proof}
Fix $j\in[p]$. As in the proof of Theorem \ref{thm: convergence rate}, we start with three preliminary bounds. First,
\begin{align*}
\nabla\ell_j(\varphi_{j,\{j\}^c})
& = \frac{2}{N}\widehat{\C}_{\{j\}^c}^{\top}(\widehat{\C}_{\{j\}} - \widehat{\C}_{\S_{0,j}}\varphi_{\S_{0,j}})\\
& = \frac{2}{N}(\widehat{\C}_{\{j\}^c} - \C_{\{j\}^c})^{\top}(\widehat{\C}_{\{j\}} - \widehat{\C}_{\S_{0,j}}\varphi_{\S_{0,j}}) + \frac{2}{N}\C_{\{j\}^c}^{\top}(\widehat{\C}_{\{j\}} - \widehat{\C}_{\S_{0,j}}\varphi_{\S_{0,j}}).
\end{align*}
Here,
\begin{align}
& \|\widehat{\C}_{\{j\}} - \widehat{\C}_{\S_{0,j}}\varphi_{\S_{0,j}}\|_2
\leq \|\widehat{\C}_{\{j\}} - \C_{\{j\}}\|_2 \nonumber\\
& \qquad + \|\C_{\{j\}} - \C_{\S_{0,j}}\varphi_{\S_{0,j}}\|_2 + \|(\C_{\S_{0,j}} - \widehat{\C}_{\S_{0,j}})\varphi_{\S_{0,j}}\|_2 \lesssim_P \sqrt N\label{eq: we will need this bound below}
\end{align}
by Lemmas \ref{lem: individual covariance bound}, \ref{lem: r upper bound}, and \ref{lem: c-varphi bound}. Thus,
$$
\Big\| (\widehat{\C}_{\{j\}^c} - \C_{\{j\}^c})^{\top}(\widehat{\C}_{\{j\}} - \widehat{\C}_{\S_{0,j}}\varphi_{\S_{0,j}}) \Big\|_{\infty} \lesssim_P N\sqrt{\frac{\log(Np)}{T}}
$$
by the Cauchy-Schwarz inequality and Lemma \ref{lem: individual covariance bound}. Also,
\begin{align*}
& \|C_{\{j\}^c}^{\top}(\widehat{\C}_{\{j\}} - \widehat{\C}_{\S_{0,j}}\varphi_{\S_{0,j}})\|_\infty
\leq \|C_{\{j\}^c}^{\top}(\widehat{\C}_{\{j\}} - \C_{\{j\}})\|_\infty \\
& \qquad  + \|C_{\{j\}^c}^{\top}(\C_{\{j\}} - \C_{\S_{0,j}}\varphi_{\S_{0,j}})\|_\infty + \|C_{\{j\}^c}^{\top}(\C_{\S_{0,j}} - \widehat{\C}_{\S_{0,j}})\varphi_{\S_{0,j}}\|_\infty \lesssim_P N\sqrt{\frac{\log(Np)}{T}}
\end{align*}
by Assumptions \ref{as: sparse eigenvalues 2} and \ref{as: first stage sparsity} and Lemmas \ref{lem: individual covariance bound} and \ref{lem: c-varphi bound}. Thus,
\begin{equation}\label{eq: nabla bound j}
\|\nabla\ell_j(\varphi_{j,\{j\}^c})\|_{\infty} \lesssim_P \sqrt{\frac{\log(Np)}{T}},
\end{equation}
which is the first bound we need.

Second,
\begin{equation}\label{eq: final step for elenberg j}
|\ell_j(\varphi_{j,\{j\}^c}) - \ell_j(\mathbf 0_{p-1})| \leq \frac{1}{N}\|\widehat\C_{\{j\}} - \widehat\C_{\S_{0,j}}\varphi_{\S_{0,j}}\|_2^2 + \frac{1}{N}\|\widehat\C_{\{j\}}\|_2^2\lesssim_P 1
\end{equation}
by the bound in \eqref{eq: we will need this bound below} and Lemma \ref{lem: empirical sparse eigenvalues}, which is applicable because Assumption \ref{as: growth conditions 2} implies Assumption \ref{as: growth conditions}.

Third, given that $|\S_{0,j}| + |\widehat\S_j| \leq 2\bar s_{T}$ \wpas by Assumptions \ref{as: first stage sparsity} and \ref{as: first stage estimation} and $|\widehat{\S}_j|\geq (1+c)(K/k)|\S_{0,j}|\log T$ \wpas by Assumption \ref{as: first stage estimation}, it follows from Lemma \ref{lem: restricted eigenvalues j} above and Theorem 1 and Corollary 1 in \citet{EKDN18} that
\begin{equation}\label{eq: i am exhausted elenberg j}
\ell_j(\widetilde\varphi_{j}) - \ell_j(\mathbf 0_p) \geq (1-T^{-1})(\ell_j(\varphi_{j,\{j\}^c}) - \ell(\mathbf 0_{p-1}))
\end{equation}
\wpas, where we denoted
$
\widetilde\varphi_{j} = (  \widehat\C_{\widehat\S_j}^{\top}\widehat\C_{\widehat\S_j})^{-1}\widehat\C_{\widehat\S_j}^{\top}\widehat\C_{\{j\}};
$
note the difference between $\widetilde\varphi_j$ and $\widehat\varphi_{j,\{j\}^c}$ in \eqref{eq: varphi estimator definition}.

Now, we proceed as in the proof of Theorem 6 in \citet{EKDN18}. (We cannot directly apply their theorem as it would be a result about $\widetilde\varphi_j$, and we need a result about $\widehat\varphi_j$.) We have
\begin{equation}\label{eq: proof replication 1}
\ell_j(\widehat\varphi_{j,\{j\}^c}) - \ell_j(\varphi_{j,\{j\}^c}) \geq \ell_j(\widetilde\varphi_j) - \ell_j(\varphi_{j,\{j\}^c}) \geq T^{-1}(\ell_j(\mathbf 0_{p-1}) - \ell_j(\varphi_{j,\{j\}^c}))
\end{equation}
\wpa, where the first inequality follows from the fact that $\widehat\S_j\subset\widetilde\S_j$ and the second from rearranging the terms in \eqref{eq: i am exhausted elenberg j}. Also, denoting $\Delta = \widehat\varphi_{j,\{j\}^c} - \varphi_{j,\{j\}^c}$, we have
\begin{equation}\label{eq: proof replication 2}
\ell_j(\widehat\varphi_{j,\{j\}^c}) - \ell_j(\varphi_{j,\{j\}^c}) - \nabla\ell_j(\varphi_{j,\{j\}^c})^{\top}\Delta \leq - k\|\Delta\|_2^2/2
\end{equation}
\wpas by Lemma \ref{lem: restricted eigenvalues j} since $\|\widehat\varphi_{j,\{j\}^c} - \varphi_{j,\{j\}^c}\|_0\leq 3\bar s_{T}$ \wpas by Assumptions \ref{as: estimator sparsity}, \ref{as: first stage sparsity}, and \ref{as: first stage estimation}. Combining \eqref{eq: proof replication 1} and \eqref{eq: proof replication 2}, we have
\begin{align*}
k\|\Delta\|_2^2 / 2 
& \lesssim_P \nabla\ell_j(\varphi_{j,\{j\}^c})^{\top}\Delta + T^{-1}(\ell_j(\varphi_{j,\{j\}^c}) - \ell_j(\mathbf 0_{p-1})) \\
& \lesssim_P \|\nabla\ell_j(\varphi_{j,\{j\}^c})\|_{\infty} \|\Delta\|_1 + \frac{1}{T}
 \lesssim_P \|\Delta\|_2 \sqrt{\frac{\bar s_{T}\log(Np)}{T}} + \frac{1}{T},
\end{align*}
where the second inequality follows from \eqref{eq: final step for elenberg j} and the third from \eqref{eq: nabla bound j}. The last bound in turn implies the asserted claim.
\end{proof}

\begin{lemma}\label{eq: phi definition equivalent}
Suppose that Assumption \ref{as: factors cool} is satisfied. Then for all $j\in[p]$,
$$
\phi_{j,\{j\}^c} = \frac{1}{N}\sum_{i\in[N]}(\beta_{i,j}\eta_{j,\{j\}^c} + \beta_{i,\{j\}^c}) R_{j,i}.
$$
\end{lemma}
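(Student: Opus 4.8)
The plan is to evaluate the expectation in the definition \eqref{eq: phi definition} of $\phi_{j,\{j\}^c}$ directly, using the linear model for returns. Fix $j\in[p]$. First I would recall that \eqref{eq: model}, together with the substitution $v_t = f_t - \E[f_t]$, gives $r_{i,t} = \beta_i^{\top}\gamma + \beta_i^{\top}v_t + \varepsilon_{i,t}$ for all $i\in[N]$ and $t\in[T]$, where $\E[\varepsilon_{i,t}] = 0$ and $\E[\varepsilon_{i,t}v_t] = \mathbf 0_p$ (the latter because $\E[\varepsilon_{i,t}f_t] = \mathbf 0_p$ and $\E[\varepsilon_{i,t}] = 0$). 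Since also $\E[v_{t,\{j\}^c}] = \mathbf 0_{p-1}$, multiplying by $v_{t,\{j\}^c}$ and taking expectations annihilates both the constant term $\beta_i^{\top}\gamma$ and the residual term, leaving
$$
\E[v_{t,\{j\}^c} r_{i,t}] = \E[v_{t,\{j\}^c} v_t^{\top}]\beta_i = \E[v_{t,\{j\}^c} v_{t,j}]\beta_{i,j} + \E[v_{t,\{j\}^c} v_{t,\{j\}^c}^{\top}]\beta_{i,\{j\}^c},
$$
where the second equality splits $v_t^{\top}\beta_i$ into its $j$th coordinate and the remaining coordinates.

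Next I would substitute this identity into \eqref{eq: phi definition}. Because the scalars $R_{j,i}$ are non-random, the expectation commutes with the cross-sectional sum, so
$$
\E\left[v_{t,\{j\}^c}\Big(\tfrac1N\textstyle\sum_{i\in[N]}r_{i,t}R_{j,i}\Big)\right] = \frac1N\sum_{i\in[N]}R_{j,i}\Big(\E[v_{t,\{j\}^c} v_{t,j}]\beta_{i,j} + \E[v_{t,\{j\}^c} v_{t,\{j\}^c}^{\top}]\beta_{i,\{j\}^c}\Big).
$$
Multiplying on the left by $(\E[v_{t,\{j\}^c}v_{t,\{j\}^c}^{\top}])^{-1}$ and recognizing that $(\E[v_{t,\{j\}^c}v_{t,\{j\}^c}^{\top}])^{-1}\E[v_{t,\{j\}^c}v_{t,j}] = \eta_{j,\{j\}^c}$ by \eqref{eq: eta definition} then yields exactly $\phi_{j,\{j\}^c} = N^{-1}\sum_{i\in[N]}(\beta_{i,j}\eta_{j,\{j\}^c} + \beta_{i,\{j\}^c})R_{j,i}$, which is the asserted claim.

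The only step requiring Assumption \ref{as: factors cool} is the well-posedness of the inverse matrices appearing in \eqref{eq: phi definition} and \eqref{eq: eta definition}: the matrix $\E[v_{t,\{j\}^c}v_{t,\{j\}^c}^{\top}]$ is a principal submatrix of $\E[v_tv_t^{\top}]$, and the bound $\lambda_{\min}(\E[v_tv_t^{\top}])\geq 1/C>0$ forces every principal submatrix to be positive definite, hence invertible. Beyond that, the argument is a short and routine computation, so I do not anticipate any genuine obstacle; the content of the lemma is merely the observation that the population projection coefficient of the $R_{j,i}$-weighted cross-sectional average of returns onto $v_{t,\{j\}^c}$ is the $R_{j,i}$-weighted average of the individual-asset projection coefficients, each of which decomposes through the factor exposures $\beta_{i,j}$, $\beta_{i,\{j\}^c}$ and the auxiliary coefficient $\eta_{j,\{j\}^c}$.
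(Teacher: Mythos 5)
Your proof is correct and follows essentially the same route as the paper's: substitute the return model \eqref{eq: model} into the definition \eqref{eq: phi definition}, use $\E[v_{t,\{j\}^c}]=\mathbf 0_{p-1}$ and $\E[\varepsilon_{i,t}v_t]=\mathbf 0_p$ to drop the constant and residual terms, and cancel $\E[v_{t,\{j\}^c}v_{t,\{j\}^c}^{\top}]$ against its inverse via the definition of $\eta_{j,\{j\}^c}$ (the paper phrases this last step through the equivalent decomposition $v_{t,j} = \eta_{j,\{j\}^c}^{\top}v_{t,\{j\}^c} + z_{t,j}$, which is the same content). Your remark that Assumption \ref{as: factors cool} is only needed for invertibility of the principal submatrix $\E[v_{t,\{j\}^c}v_{t,\{j\}^c}^{\top}]$ is also exactly how the paper uses it.
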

\begin{proof}
The asserted claim follows from substituting \eqref{eq: model} and $v_{t,j} = \eta_{j,\{j\}^c}v_{t,\{j\}^c} + z_{t,j}$ into \eqref{eq: phi definition} and cancelling the matrices, which is feasible under Assumption \ref{as: factors cool}.
\end{proof}

\begin{lemma}\label{lem: cool sparsity bound}
Suppose that Assumptions \ref{as: first stage sparsity} and \ref{as: factors cool} are satisfied. Then for all $j\in[p]$, we have $\|\phi_j\|_{\infty} \lesssim \sqrt{\log(Np)/T}$.
%$$ 
%\left\|\frac{1}{N}\sum_{i\in[N]}(\beta_{i,j}\eta_j + \beta_{i,\{j\}^c}) R_{j,i} \right\|_{\infty} \lesssim \sqrt{\frac{\log(Np)}{T}}.
%$$
\end{lemma}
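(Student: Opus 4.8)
The plan is to start from the closed form for $\phi_{j,\{j\}^c}$ provided by Lemma~\ref{eq: phi definition equivalent}, rewrite the cross-sectional averages $N^{-1}\sum_{i\in[N]}R_{j,i}\beta_{i,\cdot}$ in terms of the $p$-vector $\C^{\top}R_j$, and then bring to bear the near-orthogonality of $R_j$ to the columns of $\C_{\{j\}^c}$ supplied by Assumption~\ref{as: first stage sparsity}.

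First I would record the population identity $\C = B\Sigma$, where $B = (\beta_{i,k})_{i\in[N],k\in[p]}$ is the matrix of factor exposures and $\Sigma = \E[v_tv_t^{\top}]$; this is immediate from \eqref{eq: model}, the orthogonality $\E[\varepsilon_{i,t}v_t]=\mathbf 0_p$, and the identity $\C_{i,k}=\beta_i^{\top}\E[v_tv_{t,k}]$. Since $\Sigma$ is invertible under Assumption~\ref{as: factors cool}, $B=\C\Sigma^{-1}$, so for every $m\in[p]$ the vector $(\beta_{i,m})_{i\in[N]}$ equals $\C(\Sigma^{-1})_{\{m\}}$, hence $\sum_{i\in[N]}R_{j,i}\beta_{i,m}=(\C^{\top}R_j)^{\top}(\Sigma^{-1})_{\{m\}}$. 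Substituting this into Lemma~\ref{eq: phi definition equivalent} yields, for each $l\in\{j\}^c$,
\[
\phi_{j,l}=\frac{1}{N}(\C^{\top}R_j)^{\top}\Big((\Sigma^{-1})_{\{l\}}+\eta_{j,l}(\Sigma^{-1})_{\{j\}}\Big).
\]

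Next I would split the vector $\C^{\top}R_j$ into its $j$-th coordinate $(\C^{\top}R_j)_j$ and the remaining coordinates $\C_{\{j\}^c}^{\top}R_j$. For the latter, Assumption~\ref{as: first stage sparsity} gives $\|\C_{\{j\}^c}^{\top}R_j\|_{\infty}\lesssim N\sqrt{\log(Np)/T}$, while the $\ell_1$-norms of the $l$-th and $j$-th columns of $\Sigma^{-1}$ are each at most $\|\Sigma^{-1}\|_{\infty,1}\le C$ (column sums equal row sums by symmetry) and $|\eta_{j,l}|\le\|\eta_j\|_1\lesssim 1$ under Assumption~\ref{as: factors cool}; so the contribution of the $\{j\}^c$-coordinates to $\phi_{j,l}$ is $O(\sqrt{\log(Np)/T})$, uniformly in $l$. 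The crux is to show that the coefficient multiplying $(\C^{\top}R_j)_j$, namely $(\Sigma^{-1})_{j,l}+\eta_{j,l}(\Sigma^{-1})_{j,j}$, is \emph{exactly} zero. I would obtain this from the block-inverse (Schur complement) formula for $\Sigma$ partitioned according to $\{j\}$ versus $\{j\}^c$: the relevant Schur complement equals $\sigma_{z,j}^2$, since $\sigma_{z,j}^2=\E[z_{t,j}v_{t,j}]=\E[v_{t,j}^2]-\E[v_{t,j}v_{t,\{j\}^c}^{\top}]\big(\E[v_{t,\{j\}^c}v_{t,\{j\}^c}^{\top}]\big)^{-1}\E[v_{t,\{j\}^c}v_{t,j}]$, and the formula then gives $(\Sigma^{-1})_{j,j}=1/\sigma_{z,j}^2$ and $(\Sigma^{-1})_{j,\{j\}^c}=-\eta_{j,\{j\}^c}^{\top}/\sigma_{z,j}^2$ in view of the definition \eqref{eq: eta definition} of $\eta_{j,\{j\}^c}$. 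Hence $(\Sigma^{-1})_{j,l}+\eta_{j,l}(\Sigma^{-1})_{j,j}=-\eta_{j,l}/\sigma_{z,j}^2+\eta_{j,l}/\sigma_{z,j}^2=0$. (The same identity, together with $\sigma_{z,j}^2\le\lambda_{\max}(\Sigma)\le C$, also gives $\|\eta_j\|_1=\sigma_{z,j}^2\|(\Sigma^{-1})_{j,\{j\}^c}\|_1\le C\|\Sigma^{-1}\|_{\infty,1}\le C^2$, justifying the bound on $|\eta_{j,l}|$ used above.) Combining, $|\phi_{j,l}|\lesssim\sqrt{\log(Np)/T}$ for all $l\in\{j\}^c$, and since $\phi_{j,j}=0$ this proves the claim.

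The only real obstacle is securing this exact cancellation: the $j$-th coordinate $(\C^{\top}R_j)_j=\|R_j\|_2^2+R_j^{\top}\C\varphi_j$ need not be small---it is generically of order $N$, not $o(N)$---so any crude bound on it would swamp the target rate $\sqrt{\log(Np)/T}$. What rescues the argument is precisely that its multiplier vanishes identically, which in turn hinges on correctly identifying $\eta_{j,\{j\}^c}$ with the rescaled off-diagonal block of the $j$-th row of $\Sigma^{-1}$; the rest is routine bookkeeping with $\ell_1$-$\ell_\infty$ duality and Assumptions~\ref{as: first stage sparsity} and \ref{as: factors cool}.
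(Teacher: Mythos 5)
Your argument is correct, and it reaches the bound by a somewhat different route than the paper. The paper's proof also starts from Lemma \ref{eq: phi definition equivalent}, but then premultiplies the vector $N^{-1}\sum_{i\in[N]}(\beta_{i,j}\eta_{j,\{j\}^c}+\beta_{i,\{j\}^c})R_{j,i}$ by the \emph{reduced} covariance matrix $\E[v_{t,\{j\}^c}v_{t,\{j\}^c}^{\top}]$, identifies the product with $N^{-1}\C_{\{j\}^c}^{\top}R_j$ (small in sup-norm by Assumption \ref{as: first stage sparsity}), and then inverts, controlling $\|(\E[v_{t,\{j\}^c}v_{t,\{j\}^c}^{\top}])^{-1}\|_{\infty,1}$ by the principal-submatrix bound of Lemma \ref{lem:principal-submatrix-inverse-bound}, which is where Assumption \ref{as: factors cool} enters. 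You instead work with the full inverse $\Sigma^{-1}$ through $B=\C\Sigma^{-1}$, which forces you to confront the $j$-th coordinate $(\C^{\top}R_j)_j$ of order $N$; you correctly dispose of it via the exact cancellation $(\Sigma^{-1})_{j,l}+\eta_{j,l}(\Sigma^{-1})_{j,j}=0$ coming from the block-inverse/Schur-complement identity $(\Sigma^{-1})_{j,j}=1/\sigma_{z,j}^2$, $(\Sigma^{-1})_{j,\{j\}^c}=-\eta_{j,\{j\}^c}^{\top}/\sigma_{z,j}^2$, and handle the remaining coordinates with Assumption \ref{as: first stage sparsity}, $\|\Sigma^{-1}\|_{\infty,1}\leq C$, and $\|\eta_j\|_1\lesssim 1$. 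In effect you re-derive inline the content that the paper outsources to Lemmas \ref{lem:principal-submatrix-inverse-bound} and \ref{lem: matrix inverse and l1 norm}: your version uses only the bounds stated in Assumption \ref{as: factors cool} and needs no separate result about sub-matrix inverses, at the cost of the extra cancellation step, while the paper's factorization through the reduced covariance is shorter once Lemma \ref{lem:principal-submatrix-inverse-bound} is in hand and makes transparent that only the $\{j\}^c$-block of the covariance matters. Both arguments use Lemma \ref{eq: phi definition equivalent}, the symmetry of $\Sigma^{-1}$ (row versus column $\ell_1$-sums), and Assumption \ref{as: first stage sparsity} in the same way, and both conclude with $\phi_{j,j}=0$.
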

\begin{proof}
Fix $j\in[p]$ and recall that for all $i\in[N]$ and $l\in[p]$, we have 
$$
\C_{i,l} = \beta_{i,j}\E[v_{t,j}v_{t,l}] + \beta_{i,\{j\}^c}^{\top}\E[v_{t,\{j\}^c}v_{t,l}].
$$
Combining this equality with \eqref{eq: eta definition}, it follows that
$$
(\C_{i, l})_{l\in\{j\}^c}^{\top} = \E[v_{t,\{j\}^c}v_{t,\{j\}^c}^{\top}](\beta_{i, j}\eta_{j, \{j\}^c} + \beta_{i, \{j\}^c}).
$$
Therefore,
$$
\left\| \frac{1}{N}\sum_{i\in[N]} \E[v_{t,\{j\}^c}v_{t,\{j\}^c}^{\top}](\beta_{i,j}\eta_{j,\{j\}^c} + \beta_{i,\{j\}^c}) R_{j,i} \right\|_{\infty} \lesssim \sqrt{\frac{\log(Np)}{T}}
$$
by Assumption \ref{as: first stage sparsity}. Hence,
$$
\|\phi_{j,\{j\}^c}\|_{\infty} = \left\|\frac{1}{N}\sum_{i\in[N]}(\beta_{i,j}\eta_{j,\{j\}^c} + \beta_{i,\{j\}^c}) R_{j,i} \right\|_{\infty} \lesssim \sqrt{\frac{\log(Np)}{T}},
$$
where the equality follows from Lemma \ref{eq: phi definition equivalent} and the inequality from Lemma \ref{lem:principal-submatrix-inverse-bound} and Assumption \ref{as: factors cool}. Combining this bound with the fact that $\phi_{j,j} = 0$, yields the asserted claim.
\end{proof}

\begin{lemma}\label{lem: hopefully final}
Suppose that Assumptions \ref{as: first stage sparsity}, \ref{as: factors cool}, and \ref{as: growth conditions 2} are satisfied. Then for all $j\in[p]$,
$$
R_j^{\top}R_j = \sum_{i\in[N]}R_{j,i}\beta_{i,j}\sigma_{z,j}^2 + o_P(N).
$$
\end{lemma}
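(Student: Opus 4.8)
The strategy is to first derive an \emph{exact} algebraic identity of the asserted form with an explicit remainder, and then show the remainder is $o(N)$ (everything here is nonrandom, so $o(N)$ gives $o_P(N)$). Fix $j\in[p]$ and abbreviate $\xi_t = v_{t,j} - \varphi_{j,\{j\}^c}^{\top}v_{t,\{j\}^c}$; note $\varphi_{j,j}=0$ by Assumption~\ref{as: first stage sparsity}. Using $\C_{i,l} = \beta_i^{\top}\E[v_t v_{t,l}]$ (which follows from \eqref{eq: model} and the moment conditions $\E[\varepsilon_{i,t}]=0$, $\E[\varepsilon_{i,t}f_t]=\mathbf 0_p$, exactly as in the proof of Lemma~\ref{lem: individual covariance bound}), the definition $R_j = \C_{\{j\}} - \C\varphi_j$ gives $R_{j,i} = \beta_i^{\top}\E[v_t\xi_t]$. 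Writing $b = \sum_{i\in[N]}R_{j,i}\beta_i\in\R^p$ and $b_j = \sum_{i\in[N]}R_{j,i}\beta_{i,j}$, this yields $R_j^{\top}R_j = \sum_{i\in[N]} R_{j,i}\,\beta_i^{\top}\E[v_t\xi_t] = \E[(b^{\top}v_t)\xi_t]$.

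Next I would invoke Lemma~\ref{eq: phi definition equivalent}, which gives $N\phi_{j,\{j\}^c} = b_j\eta_{j,\{j\}^c} + b_{\{j\}^c}$, hence $b^{\top}v_t = b_j v_{t,j} + b_{\{j\}^c}^{\top}v_{t,\{j\}^c} = b_j(v_{t,j} - \eta_{j,\{j\}^c}^{\top}v_{t,\{j\}^c}) + N\phi_{j,\{j\}^c}^{\top}v_{t,\{j\}^c} = b_j z_{t,j} + N\phi_j^{\top}v_t$ (using $\phi_{j,j}=0$). Since $z_{t,j}$ is orthogonal to $v_{t,\{j\}^c}$ and $\E[z_{t,j}v_{t,j}] = \E[z_{t,j}^2] = \sigma_{z,j}^2$, we get $\E[z_{t,j}\xi_t] = \sigma_{z,j}^2$. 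Combining, I obtain the exact identity
\[
R_j^{\top}R_j = \sigma_{z,j}^2\sum_{i\in[N]}R_{j,i}\beta_{i,j} + N\,\E[(\phi_j^{\top}v_t)\xi_t],
\]
so it remains only to show that $\E[(\phi_j^{\top}v_t)\xi_t] = o(1)$.

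For the cross term, a crude Cauchy--Schwarz bound would pay a factor $\|\phi_j\|_2$ and thus require $p\log(Np)=o(T)$, which is too strong; instead I would rewrite it as a bilinear form in which $\phi_j$ enters only through a quantity controlled by Assumption~\ref{as: first stage sparsity}. Since $\phi_{j,j}=0$ and, by \eqref{eq: eta definition}, $\E[v_{t,\{j\}^c}\xi_t] = \E[v_{t,\{j\}^c}v_{t,j}] - \E[v_{t,\{j\}^c}v_{t,\{j\}^c}^{\top}]\varphi_{j,\{j\}^c} = \E[v_{t,\{j\}^c}v_{t,\{j\}^c}^{\top}](\eta_{j,\{j\}^c} - \varphi_{j,\{j\}^c})$, we have $\E[(\phi_j^{\top}v_t)\xi_t] = (\E[v_{t,\{j\}^c}v_{t,\{j\}^c}^{\top}]\phi_{j,\{j\}^c})^{\top}(\eta_{j,\{j\}^c} - \varphi_{j,\{j\}^c})$, and I would bound this by H\"older's inequality. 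For the first factor, Lemma~\ref{eq: phi definition equivalent} combined with the identity $\C_{i,\{j\}^c} = \E[v_{t,\{j\}^c}v_{t,\{j\}^c}^{\top}](\beta_{i,j}\eta_{j,\{j\}^c} + \beta_{i,\{j\}^c})$ (established in the proof of Lemma~\ref{lem: cool sparsity bound}) gives $\E[v_{t,\{j\}^c}v_{t,\{j\}^c}^{\top}]\phi_{j,\{j\}^c} = N^{-1}\C_{\{j\}^c}^{\top}R_j$, whose $\ell_\infty$-norm is $\lesssim\sqrt{\log(Np)/T}$ by the near-orthogonality condition in Assumption~\ref{as: first stage sparsity}. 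For the second factor, $\|\eta_{j,\{j\}^c}\|_1\lesssim 1$ by Assumption~\ref{as: factors cool} and Lemma~\ref{lem: matrix inverse and l1 norm}, while $\|\varphi_{j,\{j\}^c}\|_1 = \|\varphi_j\|_1 \leq \sqrt{|\S_{0,j}|}\,\|\varphi_j\|_2 \lesssim \sqrt{\bar s_T}$ by the sparsity in Assumption~\ref{as: first stage sparsity} and Lemma~\ref{lem: varphi bound}. Hence $|\E[(\phi_j^{\top}v_t)\xi_t]| \lesssim \sqrt{\bar s_T\log(Np)/T} = o(1)$ by Assumption~\ref{as: growth conditions 2}, so $N\E[(\phi_j^{\top}v_t)\xi_t] = o(N) = o_P(N)$, which completes the proof.

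\textbf{Main obstacle.} The delicate step is bounding the cross term $N\E[(\phi_j^{\top}v_t)\xi_t]$: it couples the non-sparse, merely $\ell_\infty$-small vector $\phi_j$ with $\xi_t$, and any bound that passes through $\|\phi_j\|_2$ is too lossy. The resolution is to recognize that $\phi_{j,\{j\}^c}$ enters only through $\E[v_{t,\{j\}^c}v_{t,\{j\}^c}^{\top}]\phi_{j,\{j\}^c} = N^{-1}\C_{\{j\}^c}^{\top}R_j$ — which is precisely the object the approximate-orthogonality Assumption~\ref{as: first stage sparsity} controls in $\ell_\infty$ — and to pair this with the $\ell_1$-boundedness of $\eta_j$ and the sparsity of $\varphi_j$ on the other side; the growth condition in Assumption~\ref{as: growth conditions 2} then absorbs the residual $\sqrt{\bar s_T\log(Np)/T}$.
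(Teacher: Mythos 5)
Your proof is correct and follows essentially the same route as the paper: after your detour through $\phi_j$ and Lemma \ref{eq: phi definition equivalent}, your exact remainder $N\E[(\phi_j^{\top}v_t)\xi_t]$ is literally the paper's remainder $R_j^{\top}\C_{\{j\}^c}(\eta_{j,\{j\}^c}-\varphi_{j,\{j\}^c})$ (which the paper obtains directly by writing $\C_{i,j}=\beta_{i,j}\sigma_{z,j}^2+\sum_{l\in\{j\}^c}\C_{i,l}\eta_{j,l}$ and substituting into $R_j^{\top}R_j=\sum_{i}R_{j,i}(\C_{i,j}-\sum_{l}\C_{i,l}\varphi_{j,l})$), and you bound it by the identical H\"older argument — $\|\C_{\{j\}^c}^{\top}R_j\|_{\infty}$ controlled by Assumption \ref{as: first stage sparsity} against $\|\eta_{j,\{j\}^c}\|_1+\|\varphi_{j,\{j\}^c}\|_1\lesssim 1+\sqrt{\bar s_{T}}$ via Lemmas \ref{lem: matrix inverse and l1 norm} and \ref{lem: varphi bound}, finishing with Assumption \ref{as: growth conditions 2}. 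The only difference is cosmetic: the paper never needs $\phi_j$ or Lemma \ref{eq: phi definition equivalent} for this particular lemma.
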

\begin{proof}
Fix $j\in[p]$. Note that for all $i\in[N]$,
\begin{align*}
\C_{i,j} 
& = \beta_i^\top \E[v_tv_{t,j}] = \beta_{i,j}\E[v_{t,j}^2] + \beta_{i,\{j\}^c}^\top \E[v_{t,\{j\}^c}v_{t,j}] \\
& = \beta_{i,j}\sigma_{z,j}^2 + \beta_{i,j}\eta_{j,\{j\}^c}^\top\E[v_{t,\{j\}^c}v_{t,\{j\}^c}^\top]\eta_{t,\{j\}^c} + \beta_{i,\{j\}^c}^\top \E[v_{t,\{j\}^c}v_{t,\{j\}^c}^\top]\eta_{t,\{j\}^c},
\end{align*}
where the last equality follows from $v_{t,j} = \eta_{t,\{j\}^c}^\top v_{t,\{j\}^c} + z_{t,j}$ and \eqref{eq: eta definition}. Also, for all $i\in[N]$ and $l\in\{j\}^c$,
\begin{align*}
\C_{i,l}
& = \beta_i^\top \E[v_tv_{t,l}] = \beta_{i,j}\E[v_{t,j}v_{t,l}] + \beta_{i,\{j\}^c}^\top \E[v_{t,\{j\}^c}v_{t,l}] \\
& = \beta_{i,j}\eta_{j,\{j\}^c}^\top\E[v_{t,\{j\}^c}v_{t,l}] + \beta_{i,\{j\}^c}^\top \E[v_{t,\{j\}^c}v_{t,l}]
 \end{align*}
 by the same argument. Thus,
$
\C_{i,j} = \beta_{i,j}\sigma_{z,j}^2 + \sum_{l\in\{j\}^c}\C_{i,l}\eta_{j,l}, 
$
and so
\begin{align*}
R_j^\top R_j 
& = \sum_{i\in[N]}R_{j,i}\left(\C_{i,j} - \sum_{l\in\{j\}^c}\C_{i,l}\varphi_{j,l}\right)
& = \sum_{i\in[N]}R_{j,i}\left(\beta_{i,j}\sigma_{z,j}^2 + \sum_{l\in\{j\}^c}\C_{i,l}(\eta_{j,l} - \varphi_{j,l})\right).
\end{align*}
Therefore,
\begin{align*}
\left| R_j^\top R_j - \sum_{i\in[N]}R_{j,i}\beta_{i,j}\sigma_{z,j}^2 \right|
& \leq \Big|R_j^\top\C_{\{j\}^c}(\eta_{j,\{j\}^c} - \varphi_{j,\{j\}^c})\Big| \\
& \leq \Big( \|\eta_{j,\{j\}^c}\|_1 + \|\varphi_{j,\{j\}^c}\|_1 \Big)\| \C_{\{j\}^c}^\top R_j \|_{\infty} \\
& \lesssim (1+\sqrt{\bar s_{T}})N\sqrt{\log(Np)/T},
\end{align*}
where the last inequality follows from Assumptions \ref{as: first stage sparsity} and \ref{as: factors cool} and Lemmas \ref{lem: varphi bound} and \ref{lem: matrix inverse and l1 norm}. Combining this bound with Assumption \ref{as: growth conditions 2} yields the asserted claim.
\end{proof}

\begin{proof}[Proof of Theorem \ref{thm: asy normality}]
Fix $j\in[p]$ and, for brevity, write $\widetilde{\S}$, $\widehat\varphi$, and $\varphi$ instead of $\widetilde\S_j = \widehat{\S}_{D,j}\setminus\{j\}$, $\widehat\varphi_j$, and $\varphi_j$, respectively. Then
\begin{align}
\sqrt T(\widehat\psi_{D,j} - \psi_j) & = \frac{\sqrt T(\widehat{\C}_{\{j\}} - \widehat{\C}_{\widetilde{\S}}\widehat{\varphi}_{\widetilde{\S}})^{\top}(\widehat{\Y} - (\widehat{\C}_{\{j\}} - \widehat{\C}_{\widetilde{\S}}\widehat{\varphi}_{\widetilde{\S}})\psi_j )}{(\widehat{\C}_{\{j\}} - \widehat{\C}_{\widetilde{\S}}\widehat{\varphi}_{\widetilde{\S}})^{\top}(\widehat{\C}_{\{j\}} - \widehat{\C}_{\widetilde{\S}}\widehat{\varphi}_{\widetilde{\S}})} \nonumber\\
&  = \frac{\sqrt T(\widehat{\C}_{\{j\}} - \widehat{\C}_{\widetilde{\S}}\widehat{\varphi}_{\widetilde{\S}})^{\top}(\widehat{\Y} - \widehat{\C}_{\{j\}}\psi_j - \widehat{\C}_{\widetilde{\S}}\psi_{\widetilde{\S}})}{(\widehat{\C}_{\{j\}} - \widehat{\C}_{\widetilde{\S}}\widehat{\varphi}_{\widetilde{\S}})^{\top}(\widehat{\C}_{\{j\}} - \widehat{\C}_{\widetilde{\S}}\widehat{\varphi}_{\widetilde{\S}})}\label{eq: fwl equality}
\end{align}
where the first line follows from the Frisch-Waugh-Lowell theorem, and the second from observing that $(\widehat{\C}_{\{j\}} - \widehat{\C}_{\widetilde{\S}}\widehat{\varphi}_{\widetilde{\S}})^{\top}\widehat\C_{\widetilde{\S}} = \mathbf 0_{|\widetilde{\S}|}$. To derive the asymptotic distribution of the quantity on the right-hand side of \eqref{eq: fwl equality}, we start with a few preliminary bounds. First,
\begin{equation}\label{eq: prelim bound 1}
\| \widehat{\C}\widehat{\varphi} - \widehat{\C}\varphi \|_2^2 \lesssim_P N\| \widehat{\varphi} - \varphi \|_2^2 \lesssim_P N\bar s_{T}\log(Np)/T,
\end{equation}
where the first inequality follows from Lemma \ref{lem: empirical sparse eigenvalues} since $\|\widehat{\varphi} - \varphi\|_0 \leq |\widehat{\S}| + |\widehat{\S}_j| + |\S_{0,j}| \leq 3\bar s_{T}$ \wpas  by Assumptions \ref{as: estimator sparsity}, \ref{as: first stage sparsity}, and \ref{as: first stage estimation} and the second from Lemma \ref{lem: varphi estimation}. Second,
\begin{align}
&\| \psi[\widehat{\S}_{D, j}] - \psi[\S_0] \|_2^2 
\lesssim \|\psi[\widehat{\S}_{D, j}] - \psi \|_2^2 + \| \psi - \psi[\S_0] \|_2^2
 \lesssim \| \psi[\widehat{\S}] - \psi \|_2^2 + \| \psi - \psi[\S_0] \|_2^2 \nonumber\\
&\quad \lesssim \| \widehat{\psi} - \psi \|_2^2 + \| \psi - \psi[\S_0]  \|_2^2 
\lesssim_P \| \widehat{\psi} - \psi[\S_0] \|_2^2 + \| \psi - \psi[\S_0]  \|_2^2  
\lesssim_P \bar s_{T} \log(Np) / T, \label{eq: prelim bound 2}
\end{align}
where the first inequality follows from the triangle inequality, the second from the fact that $\widehat{\S} \subset \widehat{\S}_{D,j}$, the third from the fact that $\supp(\widehat\psi) = \widehat{\S}$, the fourth from the triangle inequality,  and the fifth from the intermediate bound in \eqref{eq: something we need} in the proof of Theorem \ref{thm: convergence rate} and Assumption \ref{as: sdf loadings}. Third,
\begin{equation}\label{eq: prelim bound 3}
\| \widehat{\C}_{\widehat{\S}_{D,j}} \psi_{\widehat{\S}_{D,j}} - \widehat{\C}_{\S_0}\psi_{\S_0} \|_2^2
\lesssim_P N\| \psi[\widehat{\S}_{D,j}] - \psi[\S_0] \|_2^2 \lesssim_P N\bar s_{T} \log(N p) / T,
\end{equation}
where the first inequality follows from Lemma \ref{lem: empirical sparse eigenvalues} since $\|\psi[\widehat{\S}_{D,j}] - \psi[\S_0]\|_0 \leq |\widehat{\S}| + |\widehat{\S}_j| + 1 + |\S_{0}| \leq 3\bar s_{T} + 1$ \wpas  by Assumptions \ref{as: sdf loadings}, \ref{as: estimator sparsity}, and \ref{as: first stage estimation}, and the second from \eqref{eq: prelim bound 2}. Fourth,
\begin{align}
\| \widehat{\C}_{\S_0}\psi_{\S_0} - \widehat{\C}\psi \|_2
& = \left\| \sum_{j\in\S_0^c} \widehat{\C}_{\{j\}}\psi_j \right\|_2
\leq \sum_{j\in\S_0^c} \| \widehat{\C}_{\{j\}}\psi_j \|_2 \nonumber\\
& \leq \max_{j\in\S_0^c}\|\widehat{\C}_{\{j\}}\|_2\sum_{j\in\S_0^c}|\psi_j|
\lesssim_P \sqrt{N\bar s_{T}\log(N p)/T},\label{eq: prelim bound 4}
\end{align}
where the first inequality follows from the triangle inequality and the third from Lemma \ref{lem: empirical sparse eigenvalues} and Assumption \ref{as: sdf loadings}. 

Thus, getting back to \eqref{eq: fwl equality}, we have
$\widehat{\C}_{\{j\}} - \widehat{\C}_{\widetilde{\S}}\widehat{\varphi}_{\widetilde{\S}} = \widehat{\C}_{\{j\}} - \C_{\{j\}} + \C\varphi + R_j - \widehat{\C}_{\widetilde S}\widehat\varphi_{\widetilde{\S}}$, where $\| \widehat{\C}_{\{j\}} - \C_{\{j\}}  \|_2^2\lesssim_P N\log(Np)/T$ by Lemma \ref{lem: individual covariance bound} and
\begin{align*}
\|\widehat{\C}_{\widetilde S}\widehat\varphi_{\widetilde{\S}} - \C\varphi\|_2^2
& = \|\widehat{\C}\widehat{\varphi} - \C\varphi\|_2^2
\lesssim \| \widehat{\C}\widehat{\varphi} - \widehat{\C}\varphi \|_2^2 + \|\widehat{\C}\varphi - \C\varphi \|_2^2 \lesssim_P N \bar s_{T}\log(Np)/T
\end{align*}
by the triangle inequality, \eqref{eq: prelim bound 1}, Assumption \ref{as: first stage sparsity}, and Lemma \ref{lem: c-varphi bound}. In addition, $\widehat{\Y} - \widehat{\C}_{\{j\}}\psi_j - \widehat{\C}_{\widetilde{\S}}\psi_{\widetilde\S} = \widehat{\Y} - \Y + \C\psi - \widehat{\C}_{\widehat{\S}_{D,j}}\psi_{\widehat{\S}_{D,j}}$, where $\|\widehat{\Y} - \Y\|_2^2 \lesssim_P N\log(Np)T$ by Assumption \ref{as: ulln} and
\begin{align*}
\| \widehat{\C}_{\widehat{\S}_{D,j}}\psi_{\widehat{\S}_{D,j}} - \C\psi \|_2^2
& \lesssim \| \widehat{\C}_{\widehat{\S}_{D,j}} \psi_{\widehat{\S}_{D,j}} - \widehat{\C}_{\S_0}\psi_{\S_0} \|_2^2 + \| \widehat{\C}_{\S_0}\psi_{\S_0} - \widehat{\C}\psi\|_2^2 + \|\widehat{\C}\psi -  \C\psi \|_2^2 \\
& \lesssim_P N\bar s_{T}\log(Np)/T
\end{align*}
by the triangle inequality, \eqref{eq: prelim bound 3}, \eqref{eq: prelim bound 4}, and Lemma \ref{lem: c-psi bound}. Moreover,
$$
|\sqrt T R_j^{\top}\widehat{\C}_{\widehat{\S}_{D,j}^c}\psi_{\widehat{\S}_{D,j}^c}| \lesssim N\sqrt{\log(Np)}\|\psi_{\widehat{\S}_{D,j}^c}\|_1 \leq N\sqrt{\log(Np)}\|\widehat\psi - \psi\|_1 = o_P(N),
$$
where the first inequality follows from Assumption \ref{as: first stage sparsity}, the second from the fact that $\widehat{\S}\subset \widehat{\S}_{D,j}$, and the third form Theorem \ref{thm: convergence rate} and Assumption \ref{as: growth conditions 2}.
 Combining these bounds and using Lemma \ref{lem: r upper bound} and Assumption \ref{as: growth conditions 2}, it follows that
$$
\sqrt T(\widehat{\psi}_{D,j} - \psi_j) = \frac{\sqrt T R_j^{\top}(\widehat {\Y} - \widehat{\C}\psi) + o_P(N)}{R_j^{\top}R_j + o_P(N)}.
$$
Next, observe that $R_j^{\top}(\widehat{\Y} - \widehat{\C}\psi) = \mathrm I_1 + \mathrm I_2 +  \mathrm I_3 + \mathrm I_4 $, where
\begin{align*}
\mathrm I_1 &= \sum_{i\in[N]} R_{j,i} \beta_i^{\top}\left( \E[v_tv_t^{\top}] - \frac{1}{T}\sum_{t=1}^T f_t(f_t - \bar f)^{\top} \right)\psi,\\
\mathrm I_2 & = \frac{1}{T}\sum_{i\in[N]}\sum_{t\in[T]}R_{j,i}\beta_i^{\top}(f_t - \E[f_t]),\quad \mathrm I_3 = \frac{1}{T}\sum_{i\in[N]}\sum_{t\in[T]}R_{j,i}\varepsilon_{i,t}, \\
\mathrm I_4 & = -\frac{1}{T}\sum_{i\in[N]}\sum_{t\in[T]}R_{j,i}\varepsilon_{i,t}(f_t - \bar f)^{\top}\psi.
\end{align*}
Here,
\begin{align*}
\mathrm I_4 &  = -\frac{1}{T}\sum_{i\in[N]}\sum_{t\in[T]}R_{j,i}\varepsilon_{i,t}v_t^{\top}\psi + o_P(N/\sqrt T)
\end{align*}
by Lemma \ref{lem: r upper bound} and Assumptions \ref{as: ulln} and \ref{as: growth conditions 2}. Thus,
$$
\mathrm I_3 + \mathrm I_4 = \frac{1}{T}\sum_{i\in[N]}\sum_{t\in[T]} R_{j,i}\varepsilon_{i,t}m_t = o_P(N/\sqrt T)
$$
by Assumption \ref{as: extra lln}. Similarly,
$$
\mathrm I_1 + \mathrm I_2 = \frac{1}{T}\sum_{i\in[N]}\sum_{t\in[T]}R_{j,i}\beta_i^{\top}\Big(v_t m_t - \E[v_t m_t] \Big)+ o_P(N/\sqrt T),
$$
again by Lemma \ref{lem: r upper bound} and Assumptions \ref{as: ulln} and \ref{as: growth conditions 2}. Thus, substituting $v_{t,j} = \eta_{j,\{j\}^c}v_{t,\{j\}^c} + z_{t,j}$ and using Lemma \ref{eq: phi definition equivalent}
,\begin{align*}
\mathrm I_1 + \mathrm I_2 
& = \frac{1}{T}\sum_{i\in[N]}\sum_{t\in[T]}R_{j,i}\beta_{i,j} (z_{t,j}m_t - \E[z_{t,j}m_t]) \\
& \quad + \frac{N}{T}\sum_{t\in[T]} \phi_{j,\{j\}^c}^{\top} (v_{t,\{j\}^c} m_t - \E[v_{t,\{j\}^c} m_t]) + o_P(N/\sqrt T) \\
& = \frac{1}{T}\sum_{i\in[N]}\sum_{t\in[T]}R_{j,i}\beta_{i,j} (z_{t,j}m_t - \E[z_{t,j}m_t]) + o_P(N/\sqrt T),
\end{align*}
where the second equality follows from Lemma \ref{lem: cool sparsity bound} and Assumptions \ref{as: ulln}, \ref{as: additional sparsity wtf}, and \ref{as: growth conditions 2}. Moreover, $R_j^{\top}R_j = \sum_{i\in[N]}R_{j,i}\beta_{i,j}\sigma_{z,j}^2 + o_P(N)$ by Lemma \ref{lem: hopefully final}.

Combining presented bounds and using Lemma \ref{lem: r upper bound}, it follows that
$$
\sqrt T(\widehat\psi_{D,j} - \psi_j) = \frac{1}{\sqrt T}\sum_{t\in[T]}\frac{z_{t,j}m_t - \E[z_{t,j}m_t]}{\sigma_{z, j}^2} + o_P(1),
$$
which implies the asserted claim by invoking Assumption \ref{as: woohoo}.
\end{proof}

\section{Technical Lemmas}\label{sec: technical lemmas}

\begin{lemma}
\label{lem:principal-submatrix-inverse-bound}
Let $A = (A_{i,j})_{i,j\in[n]}\in \R^{n\times n}$ be a symmetric matrix such that $\lambda_{\max}(A) \leq C_1$, $\lambda_{\min}(A) \geq 1/C_1$, and $ \|A^{-1}\|_{\infty,1} \le C_1$ for some constant $C_1\geq 1$.  For any $k\in[n]$, let $B = (A_{i, j})_{i, j\in[n]\setminus{k}}$ be a sub-matrix of $A$. Then $B$ is invertible and satisfies 
$
\|B^{-1}\|_{\infty,1} \le C,
$
where $C$ is a constant depending only on $C_1$.
\end{lemma}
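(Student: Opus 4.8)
The plan is to express $B^{-1}$ as a principal submatrix of $A^{-1}$ plus a rank-one correction, and to bound each piece in the $\|\cdot\|_{\infty,1}$-norm using the three hypotheses on $A$.

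First I would permute the indices so that $k = n$; since a symmetric permutation $A\mapsto PAP^\top$ preserves symmetry, the spectrum, and the $\|\cdot\|_{\infty,1}$-norm, this changes none of the quantities in the statement and makes $B$ the leading $(n-1)\times(n-1)$ block of $A$, with last column consisting of $c = (A_{i,n})_{i\in[n-1]}$ on top of $d = A_{n,n}$. By Cauchy's interlacing theorem the eigenvalues of the principal submatrix $B$ interlace those of $A$, so $\lambda_{\min}(B)\ge\lambda_{\min}(A)\ge 1/C_1>0$ and $B$ is invertible. Since $A$ is invertible and $\det A = \det(B)\, s$ with $s := d - c^\top B^{-1}c$ the Schur complement, $s\neq 0$, and the standard block-inversion formula applies: the leading $(n-1)\times(n-1)$ block of $A^{-1}$ equals $M := B^{-1} + s^{-1}(B^{-1}c)(B^{-1}c)^\top$, the restriction of the $n$-th column of $A^{-1}$ to $[n-1]$ equals $-g$ where $g := s^{-1}B^{-1}c$, and $(A^{-1})_{n,n} = s^{-1}$. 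Noting $s^{-1}(B^{-1}c)(B^{-1}c)^\top = s\, g g^\top$, this rearranges to $B^{-1} = M - s\, g g^\top$.

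Next I would record three elementary bounds. Since $(A^{-1})_{n,n} = s^{-1}$ and $(A^{-1})_{n,n}\ge\lambda_{\min}(A^{-1}) = 1/\lambda_{\max}(A)\ge 1/C_1$, we get $0<s\le C_1$. Since $M$ is a principal submatrix of $A^{-1}$, clearly $\|M\|_{\infty,1}\le\|A^{-1}\|_{\infty,1}\le C_1$. Finally, $g$ is, up to sign, a subvector of the $n$-th column of $A^{-1}$, so using symmetry of $A^{-1}$ we get $\|g\|_1\le\sum_{i\in[n]}|(A^{-1})_{n,i}|\le\|A^{-1}\|_{\infty,1}\le C_1$, and using $\|A^{-1}\|_2 = 1/\lambda_{\min}(A)\le C_1$ we get $\|g\|_\infty\le\|g\|_2\le\|A^{-1}\|_2\le C_1$.

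To conclude, I would invoke the identity $\|g g^\top\|_{\infty,1} = \max_i\sum_j|g_i||g_j| = \|g\|_\infty\|g\|_1$ valid for any vector $g$, so that the triangle inequality yields $\|B^{-1}\|_{\infty,1}\le\|M\|_{\infty,1} + s\,\|g\|_\infty\|g\|_1\le C_1 + C_1^3$, establishing the claim with $C = C_1 + C_1^3$. The only points needing any care are writing out the exact rank-one term coming from the Schur-complement formula and observing that the $\|\cdot\|_{\infty,1}$-norm of a rank-one matrix factors through the $\ell_\infty$- and $\ell_1$-norms of its generating vector; beyond that the proof is just eigenvalue interlacing together with the three hypotheses, and I anticipate no substantive obstacle.
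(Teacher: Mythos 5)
Your proof is correct, and it takes a genuinely different route from the paper's. You invert the block form explicitly via the Schur complement $s = d - c^{\top}B^{-1}c$ and rearrange to the exact identity $B^{-1} = M - s\,gg^{\top}$, where $M$ is the leading principal block of $A^{-1}$ and $g$ is (up to sign) the last column of $A^{-1}$ restricted to $[n-1]$; then each piece is controlled directly by the hypotheses ($s\leq C_1$ from $(A^{-1})_{n,n}\geq 1/\lambda_{\max}(A)$, $\|M\|_{\infty,1}\leq\|A^{-1}\|_{\infty,1}$, $\|g\|_1\leq\|A^{-1}\|_{\infty,1}$ by symmetry, $\|g\|_{\infty}\leq\|A^{-1}\|_2$), together with the rank-one norm identity $\|gg^{\top}\|_{\infty,1}=\|g\|_{\infty}\|g\|_1$. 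The paper never writes the block-inverse formula: it bounds $\|B^{-1}y\|_{\infty}$ for arbitrary $y$ with $\|y\|_{\infty}=1$ by solving the augmented system $A(x^{\top},z)^{\top}=(y^{\top},0)^{\top}$, extracting $B^{-1}y = x + B^{-1}bz$, and then bounding $\|B^{-1}b\|_{\infty}$ through a second augmented system $A(\,(B^{-1}b)^{\top},0)^{\top}=(b^{\top}, b^{\top}B^{-1}b)^{\top}$, using $\|A^{-1}\|_{\infty,1}$ at each step. The two arguments use the same three hypotheses and the same underlying block structure (your $g$ is proportional to the paper's $B^{-1}b$), but yours identifies $B^{-1}$ explicitly in terms of entries of $A^{-1}$, which makes the bookkeeping shorter and yields the slightly sharper explicit constant $C = C_1 + C_1^3$, whereas the paper's system-solving argument avoids the Schur-complement formula at the cost of an extra bounding step and a larger polynomial constant; all the intermediate facts you invoke (interlacing for invertibility, positivity of $s$ via $(A^{-1})_{n,n}=s^{-1}$, the submatrix row-sum bound) are correctly justified.
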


\begin{proof}
Without loss of generality, we will assume that $k=n$, so that
\begin{equation}\label{eq: block form}
  A \;=\;
  \begin{pmatrix}
    B & b \\
    b^T & a
  \end{pmatrix},
\end{equation}
where $B = (A_{i,j})_{i,j\in[n-1]}$, $b = (A_{i,n})_{i\in[n-1]}$, and $a = A_{n,n}$. Since $\lambda_{\min}(B)\geq\lambda_{\min}(A)>0$, the matrix $B$ is invertible, yielding the first asserted claim.

To prove the second asserted claim, note that
$$
\|B^{-1}\|_{\infty,1} = \sup_{y\in\R^{n-1}\colon \|y\|_{\infty}= 1}\|B^{-1}y\|_{\infty}.
$$
Further, fix any $y\in\R^{n-1}$ such that $\|y\|_{\infty} = 1$ and let $x\in\R^{n-1}$ and $z\in\R$ be such that
\begin{equation}\label{eq: original matrix form}
  A 
  \begin{pmatrix}
    x \\
    z
  \end{pmatrix}
  \;=\;
  \begin{pmatrix}
    y \\
    0
  \end{pmatrix}.
\end{equation}
Note that such $x$ and $z$ exist because $A$ is invertible. Also,
\[
  \left\|\begin{pmatrix} x \\ z \end{pmatrix}\right\|_\infty
  \;\le\; \|A^{-1}\|_{\infty,1}\;\left\|\begin{pmatrix}y\\0\end{pmatrix}\right\|_{\infty}
  \;\le\; C_1 \,\|y\|_{\infty}
\]
by assumptions of the lemma, and so $\|x\|_\infty \le C_1 \|y\|_\infty$ and $|z| \le C_1 \|y\|_\infty$. In addition, substituting \eqref{eq: block form} into \eqref{eq: original matrix form}, it follows that $Bx + b z =y$, and so $B^{-1}y = x + B^{-1}bz$. Thus, by the triangle inequality,
\begin{equation}\label{eq: key chain matrix bound}
\|B^{-1}y\|_{\infty} \leq \|x\|_{\infty} + \|B^{-1}b\|_{\infty}\times |z| \leq C_1(1 + \|B^{-1}b\|_{\infty})\|y\|_{\infty}.
\end{equation}
It thus remains to bound $\|B^{-1}b\|_{\infty}$. To do so, observe that it follows from \eqref{eq: block form} that
$$
A
  \begin{pmatrix}
    B^{-1}b \\
    0
  \end{pmatrix}
  =
  \begin{pmatrix}
    b \\
    b^{\top}B^{-1}b
  \end{pmatrix}.
$$
Thus,
\begin{align*}
\|B^{-1}b\|_{\infty} &\leq \|A^{-1}\|_{\infty,1} \Big(\|b\|_{\infty} + |b^{\top}B^{-1}b|\Big) \\
& \leq  \|A^{-1}\|_{\infty,1} \left( \|b\|_2 + \frac{\|b\|_2^2}{\lambda_{\min}(A)} \right)
\leq \|A^{-1}\|_{\infty,1}\left(\lambda_{\max}(A) + \frac{\lambda_{\max}(A)^2}{\lambda_{\min}(A)}\right),
\end{align*}
where the first inequality follows from the definition of the $\ell_{\infty,1}$-norm, the second from noting that $\|b\|_{\infty}\leq \|b\|_2$ and $\|B^{-1}\|_2 = 1/\lambda_{\min}(B)\leq 1/\lambda_{\min}(A)$, and the third from noting that $\|b\|_2 \leq \|A\|_2$ by the definition of the $\ell_2$-norm. Combining this bound with \eqref{eq: key chain matrix bound} and using $\|y\|_{\infty}=1$ yields the second asserted claim and completes the proof of the lemma.
\end{proof}

\begin{lemma}\label{lem: matrix inverse and l1 norm}
Let $x = (x_1,\dots,x_p)^\top\in R^p$ be a random vector such that the matrix $\E[xx^\top]$ is non-singular and denote $\Omega=(\E[xx^\top])^{-1}$. For any $j\in[p]$, consider the least-squares projection
$$
x_j = x_{\{j\}^c}^\top\gamma_j + \epsilon_j,\quad \E[x_{\{j\}^c}\epsilon_j] = \mathbf 0_{p-1}.
$$
Then $\|\gamma_j\|_1 = \E[\epsilon_j^2] \sum_{l\in[p]}|\Omega_{lj}| - 1$.
\end{lemma}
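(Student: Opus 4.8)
The plan is to express both $\gamma_j$ and $\E[\epsilon_j^2]$ in terms of the entries of $\Omega$, using nothing more than the normal equations for the projection together with the identity $\Sigma\Omega = \mathrm{I}$, where $\Sigma = \E[xx^\top]$ and $\mathrm{I}$ is the $p\times p$ identity matrix. Fix $j\in[p]$. The first-order condition $\E[x_{\{j\}^c}(x_j - x_{\{j\}^c}^\top\gamma_j)] = \mathbf 0_{p-1}$ gives $\Sigma_{\{j\}^c,\{j\}^c}\gamma_j = \Sigma_{\{j\}^c,j}$, and the residual variance is $\E[\epsilon_j^2] = \Sigma_{jj} - \Sigma_{j,\{j\}^c}\gamma_j$. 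First I would read off the $j$-th column of $\Sigma\Omega = \mathrm I$: the rows indexed by $\{j\}^c$ give $\Sigma_{\{j\}^c,\{j\}^c}\,\Omega_{\{j\}^c,j} = -\Omega_{jj}\,\Sigma_{\{j\}^c,j}$, hence $\Omega_{\{j\}^c,j} = -\Omega_{jj}\gamma_j$, i.e. $\gamma_{j,l} = -\Omega_{lj}/\Omega_{jj}$ for every $l\in\{j\}^c$; here $\Omega_{jj}>0$ since $\Sigma$, and therefore $\Omega$, is symmetric positive definite. The row indexed by $j$ gives $\Sigma_{jj}\Omega_{jj} + \Sigma_{j,\{j\}^c}\Omega_{\{j\}^c,j} = 1$; substituting $\Omega_{\{j\}^c,j} = -\Omega_{jj}\gamma_j$ yields $\Omega_{jj}(\Sigma_{jj} - \Sigma_{j,\{j\}^c}\gamma_j) = 1$, that is, $\E[\epsilon_j^2] = 1/\Omega_{jj}$.

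Then I would combine the two identities. Using $\Omega_{jj}>0$,
\begin{equation*}
\E[\epsilon_j^2]\sum_{l\in[p]}|\Omega_{lj}| = \frac{1}{\Omega_{jj}}\Big(\Omega_{jj} + \sum_{l\in\{j\}^c}|\Omega_{lj}|\Big) = 1 + \sum_{l\in\{j\}^c}\Big|\frac{\Omega_{lj}}{\Omega_{jj}}\Big| = 1 + \sum_{l\in\{j\}^c}|\gamma_{j,l}| = 1 + \|\gamma_j\|_1,
\end{equation*}
and rearranging gives $\|\gamma_j\|_1 = \E[\epsilon_j^2]\sum_{l\in[p]}|\Omega_{lj}| - 1$, which is the asserted claim.

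There is essentially no obstacle: the argument is elementary linear algebra, relying only on the normal equations and $\Sigma\Omega = \mathrm I$ (equivalently, one could invoke the standard Schur-complement block-inversion formula, which is the same computation organized differently). The two points worth stating carefully are that $\Omega_{jj}>0$, so that dividing by it and pulling it out of the absolute value is legitimate, and that $\gamma_j\in\R^{p-1}$ is indexed by $\{j\}^c$, so that $\|\gamma_j\|_1 = \sum_{l\in\{j\}^c}|\gamma_{j,l}|$ with no $l=j$ contribution; both are immediate from positive definiteness of $\E[xx^\top]$ and from the definition of the projection.
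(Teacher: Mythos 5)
Your proof is correct and follows essentially the same route as the paper: the paper takes $j=p$ without loss of generality and writes out the Schur-complement block inverse of $\E[xx^\top]$ to identify the last column of $\Omega$ as $(-\gamma_p/\E[\epsilon_p^2],\,1/\E[\epsilon_p^2])$, which is exactly the pair of identities $\gamma_{j,l}=-\Omega_{lj}/\Omega_{jj}$ and $\E[\epsilon_j^2]=1/\Omega_{jj}$ that you extract column-wise from $\Sigma\Omega=\mathrm I$. Your version is, if anything, slightly more explicit about why $\Omega_{jj}>0$ (positive definiteness of $\E[xx^\top]$ given non-singularity), a point the paper leaves implicit when dropping absolute values.
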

\begin{proof}
Without loss of generality, we will assume that $j = p$. Then
$$
\E[xx^\top] = \E\left[\left(\begin{array}{c}
x_{\{p\}^{c}}\\
x_{p}
\end{array}\right)\left(\begin{array}{c}
x_{\{p\}^{c}}\\
x_{p}
\end{array}\right)^{\top}\right]
 = \left(\begin{array}{cc}
A & B\\
C & D
\end{array}\right),
$$
where $A = \E[x_{\{p\}^c}x_{\{p\}^c}^\top]$, $B = \E[x_{\{p\}^c}x_p]$, $C = \E[x_px_{\{p\}^c}^\top]$, and $D = \E[x_p^2]$. Then
$$
\Omega = \left(\begin{array}{cc}
A^{-1}+A^{-1}BCA^{-1}/E & -A^{-1}B/E\\
-CA^{-1}/E & 1/E
\end{array}\right),
$$
where $E = D - CA^{-1} B$. Therefore, $\sum_{l\in[p]}|\Omega_{lp}| = (\|A^{-1}B\|_1 + 1)/E$. However, $A^{-1}B = \gamma_p$ and $E = \E[\epsilon_p^2]$. The asserted claim follows.
\end{proof}

% Appendix section: Optimal stopping threshold selection (asset space)
% For inclusion in paper appendix
% March 2026

\section{Optimal Stopping Threshold Selection}\label{sec:optimal_stopping}

In the baseline FS-FMB procedure with cross-validation (Table~\ref{tab:Forward-Regression-Method-CV}), the stopping threshold is set exogenously to 1 percent. Here, we select the stopping threshold out-of-sample via nested cross-validation, ensuring that the threshold choice does not exploit information from the assets used for model evaluation.

We consider a grid of candidate thresholds $\tau \in \{0.01\%, 0.25\%, 0.5\%, 1\%, 1.5\%, 2\%\}$. For each candidate, we employ a nested cross-validation design with $K_{\text{out}} = 5$ outer folds and $K_{\text{in}} = 5$ inner folds. For each outer fold $j$, the inner cross-validation runs the full FS-FMB procedure on the outer training assets---selecting higher-order factors and determining when to stop---using threshold $\tau$. After the inner procedure terminates, we estimate the risk prices from a cross-sectional regression on all outer training assets and evaluate the model on the held-out outer test assets, computing the adjusted R-squared. The optimal threshold is the one that maximizes the average outer-fold adjusted R-squared across the $K_{\text{out}}$ folds:
\begin{equation}\label{eq:optimal_tau}
    \tau^* = \arg\max_{\tau} \; \frac{1}{K_{\text{out}}} \sum_{j=1}^{K_{\text{out}}} \bar{R}^2_{\text{out},j}(\tau).
\end{equation}
Crucially, the outer test assets are never used during factor selection or threshold comparison, so the selection of $\tau^*$ is entirely out-of-sample.

Table~\ref{tab:threshold_selection} reports the average outer-fold adjusted R-squared for each candidate threshold. The nested cross-validation selects $\tau^* = 0.5\%$, a more permissive threshold than the baseline value of 1\%.

\begin{table}[H]
\caption{Stopping Threshold Selection\label{tab:threshold_selection}}

\medskip{}

\begin{minipage}{\textwidth}
\small\singlespacing
This table reports the average outer-fold adjusted R-squared from the nested cross-validation for each candidate stopping threshold $\tau$. The threshold that maximizes the out-of-sample adjusted R-squared is selected as optimal.
\end{minipage}

\medskip{}

\centering{}%
\begin{tabular}{lc}
\hline
$\tau$ & Avg. outer-fold Adj. R-squared\tabularnewline
\hline
0.01\% & 0.414\tabularnewline
0.25\% & 0.414\tabularnewline
\textbf{0.5\%} & \textbf{0.422}\tabularnewline
1\% & 0.39\tabularnewline
1.5\% & 0.364\tabularnewline
2\% & 0.357\tabularnewline
\hline
\end{tabular}
\end{table}

Table~\ref{tab:Forward-Regression-Method-CV-OptimalStop} reports the results from the FS-FMB procedure with 5-fold cross-validation using the optimally selected threshold $\tau^* = 0.5\%$. Although $\tau^*$ is more permissive than the baseline value of 1\%, the procedure selects the same 7 higher-order factors and produces identical results. This is because, in the baseline specification reported in Table~\ref{tab:Forward-Regression-Method-CV}, the stopping rule is also evaluated on the cross-validated adjusted R-squared rather than on the in-sample R-squared. Since the cross-validated marginal improvements are smaller than their in-sample counterparts---reflecting the well-known shrinkage of out-of-sample fit relative to in-sample fit---the 8th factor (HML2{*}Mkt-RF), which adds 1.4 pp in-sample, contributes less than 1 pp in cross-validated R-squared and is therefore excluded under both thresholds. This confirms that the baseline model selection is robust and not an artifact of the particular choice of stopping rule: even when the threshold is chosen optimally and entirely out-of-sample, the selected higher-order factor model is largely unchanged.

\begin{table}[H]
\caption{FS-FMB Method with Cross-Validation and Optimal Stopping Threshold\label{tab:Forward-Regression-Method-CV-OptimalStop}}

\medskip{}

\begin{minipage}{\textwidth}
\small\singlespacing
This table reports the results from the second step of the FMB regression procedure for the FS-FMB procedure with 5-fold cross-validation. The stopping threshold $\tau^* = 0.5\%$ is selected out-of-sample via nested cross-validation (Table~\ref{tab:threshold_selection}). At each step, we include one additional higher-order factor which maximizes the adjusted cross-sectional R-squared computed with cross-validation using 5-fold cross-validation. The procedure stops when the improvement in the cross-validated R-squared in a given step is smaller than $\tau^*$. For each step, we report the selected factor ($h_{j}$), the adjusted cross-sectional R-squared, the adjusted cross-validated (CV) cross-sectional R-squared, the estimate for the intercept ($\alpha$), and the associated t-statistic. The t-statistic is based on Newey-West corrected standard errors.
\end{minipage}

\medskip{}

\centering{}%
\begin{tabular}{lcccccc}
\hline
Step & $h_{j}$ & Adj. R-squared & Adj. CV R-squared & & $\alpha$ & t-stat ($\alpha$)\tabularnewline
\hline
%& & & & & & \tabularnewline
1 & SMB2 & 0.41 & 0.328 & & 0.003 & 1.821\tabularnewline
2 & SMB2{*}Mom & 0.467 & 0.382 & & 0.004 & 2.02\tabularnewline
3 & Mom2{*}RMW & 0.495 & 0.4 & & 0.003 & 1.324\tabularnewline
4 & Mkt-RF2 & 0.529 & 0.426 & & 0.003 & 1.769\tabularnewline
5 & Mkt-RF2{*}RMW & 0.552 & 0.446 & & 0.003 & 1.636\tabularnewline
6 & Mkt-RF{*}SMB & 0.572 & 0.458 & & 0.004 & 2.161\tabularnewline
7 & Mom2{*}HML & 0.582 & 0.471 & & 0.004 & 2.145\tabularnewline
%& & & & & & \tabularnewline
\hline
\end{tabular}
\end{table}

\section{Additional Figures \& Tables}\label{sec: additional figures}

\begin{figure}[H]
\caption{Distribution of Higher-Order Factors\label{fig:Appendix_Distribution-of-Higher-Order-Factors}}

\medskip{}

\begin{minipage}{\textwidth}
\small\singlespacing

This figure plots the distribution of the higher-order factors grouped
in pairwise interactions of degree 2 (Panel A), pairwise interactions
of degree 3 (Panel B), powers of degree 2 (Panel C) and powers of
degree 3 (Panel D). The distribution is clipped at 4 standard deviations
from the mean, so that any value below the lower bound is replaced
by the lower bound and any value above the upper bound is replaced
by the upper bound.

\end{minipage}

\medskip{}

\begin{centering}
\subfloat[]{
\centering{}\includegraphics[scale=0.3]{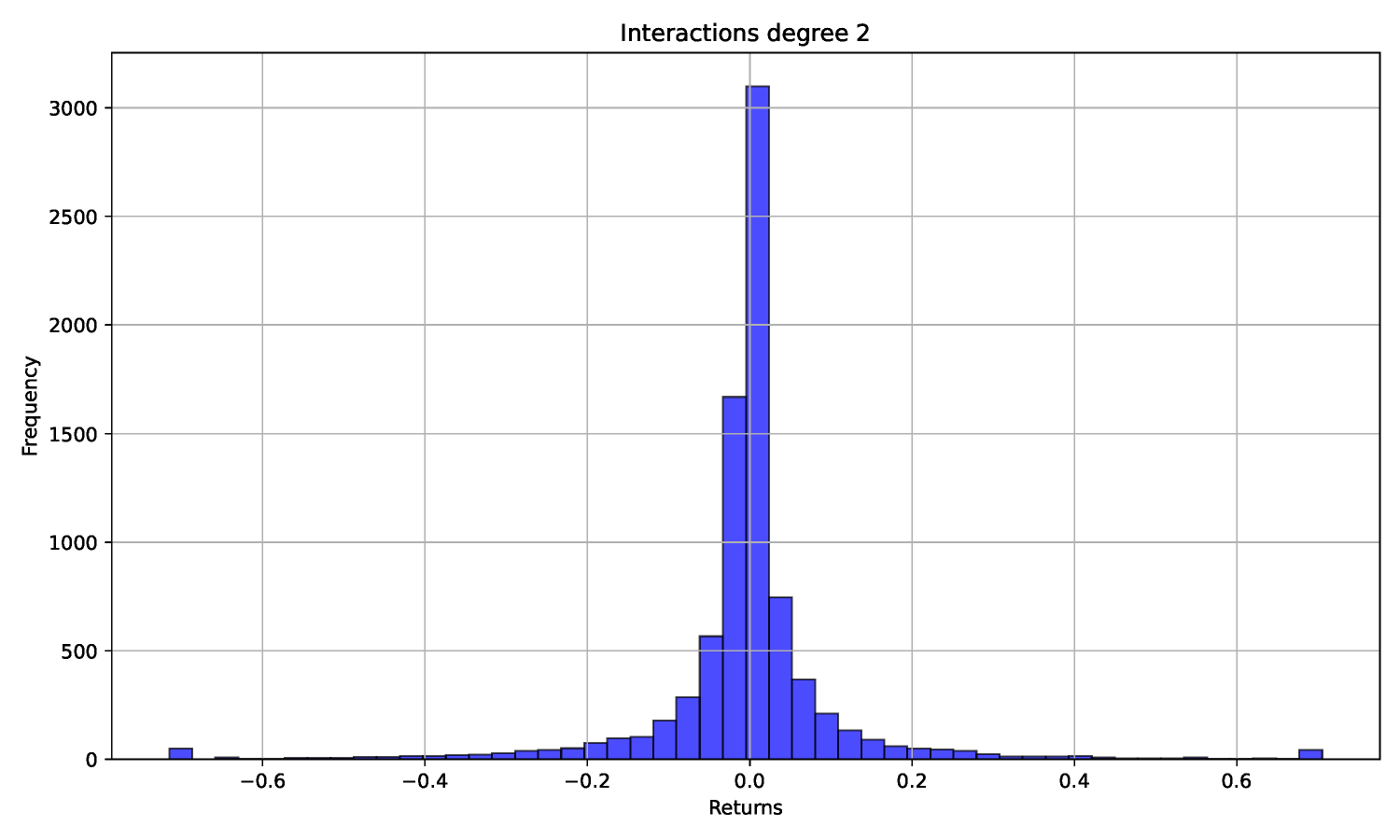}}\subfloat[]{
\centering{}\includegraphics[scale=0.3]{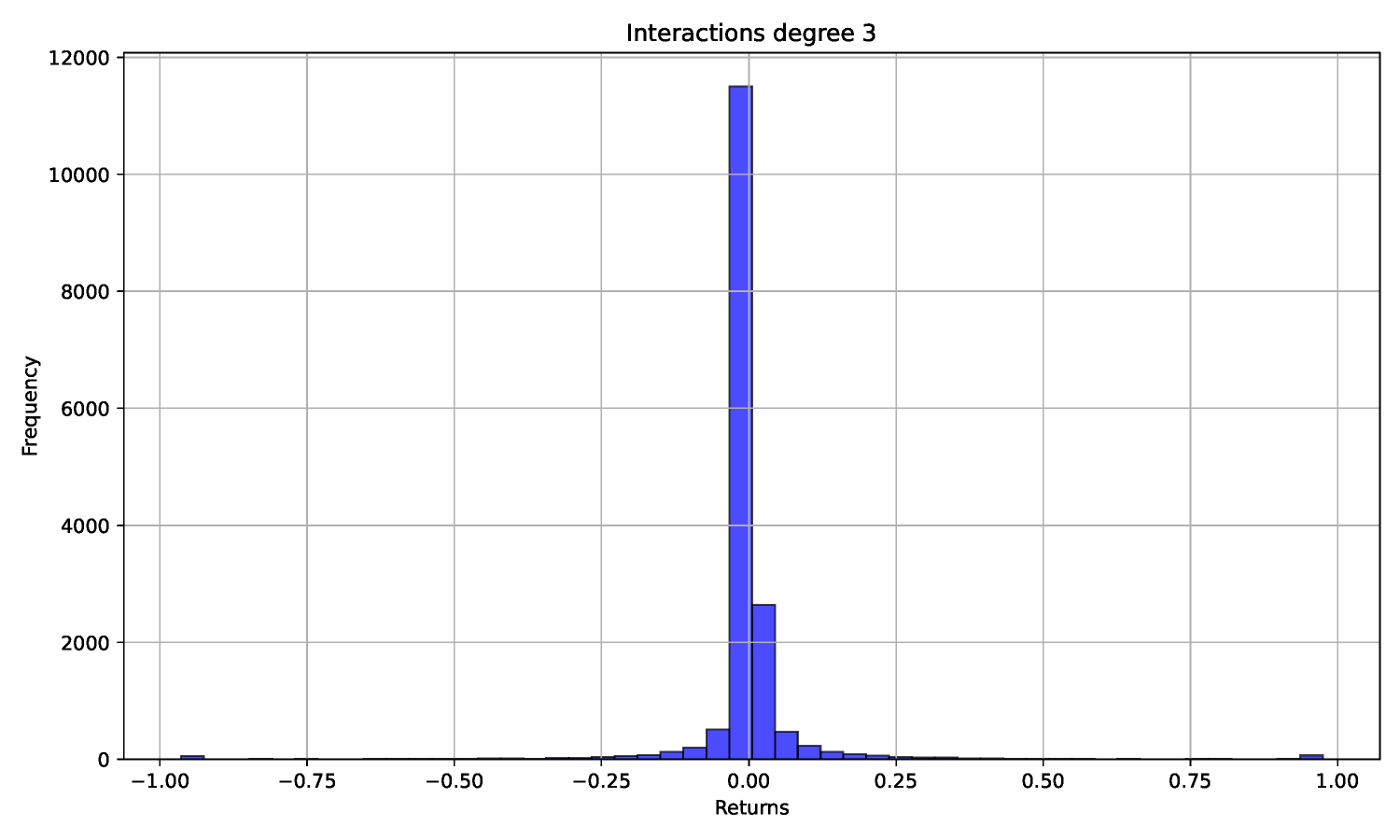}}
\par\end{centering}
\centering{}\subfloat[]{
\centering{}\includegraphics[scale=0.3]{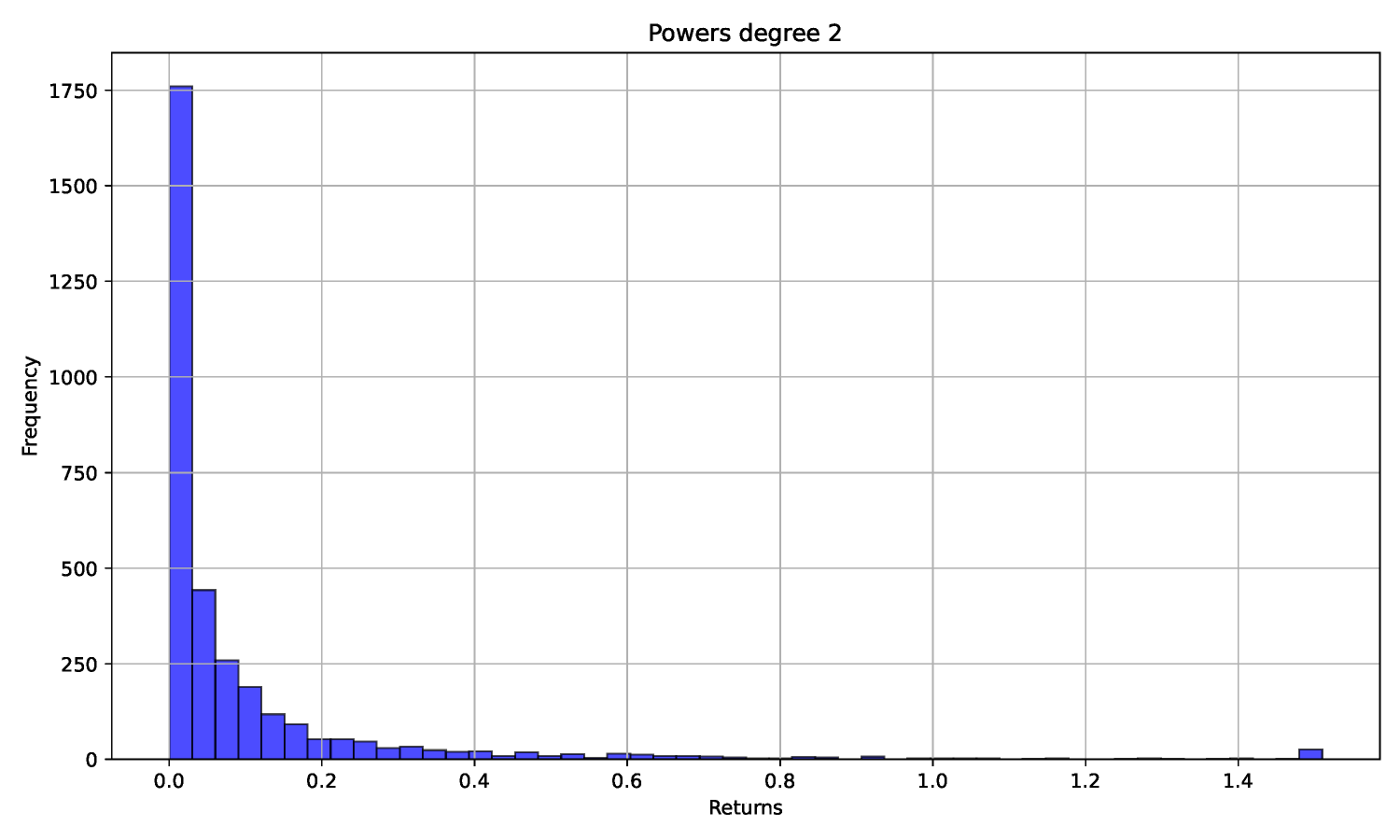}}\subfloat[]{
\centering{}\includegraphics[scale=0.3]{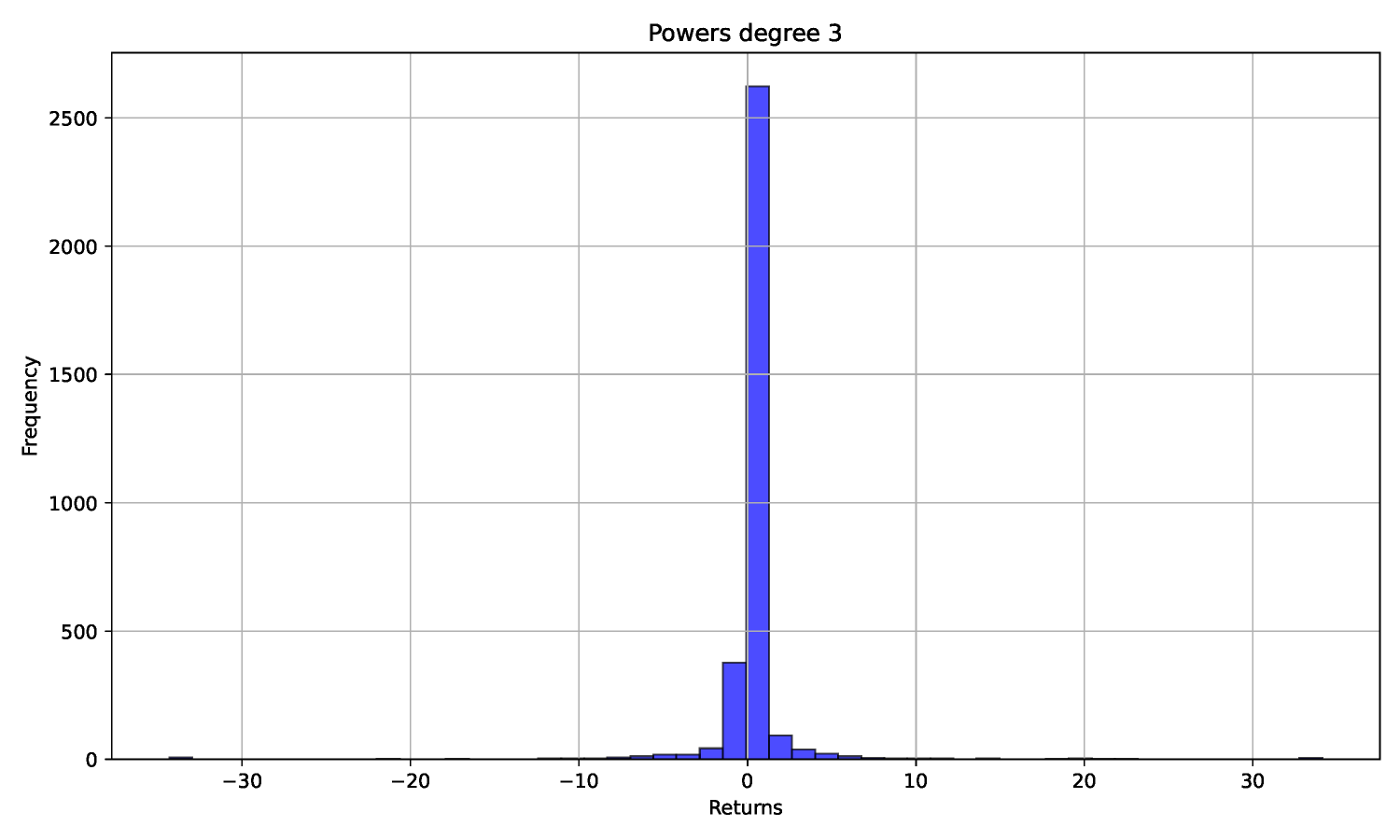}}
\end{figure}

%\pagebreak{}

\begin{figure}[H]

\caption{$R^{2}$ of Factor Mimicking Regressions\label{fig:FactorMimicking_R2}}

\medskip{}
\begin{minipage}{\textwidth}
\small\singlespacing
This figure reports, for each of the higher-order factors selected by
the FS-FMB procedure, the adjusted R-squared of factor mimicking regressions.
The factor mimicking regressions are time-series OLS regressions of
higher-order factor returns on the zoo factors from \citet{jensen2023there}.
\end{minipage}
\medskip{}

\centering{}\includegraphics[scale=0.5]{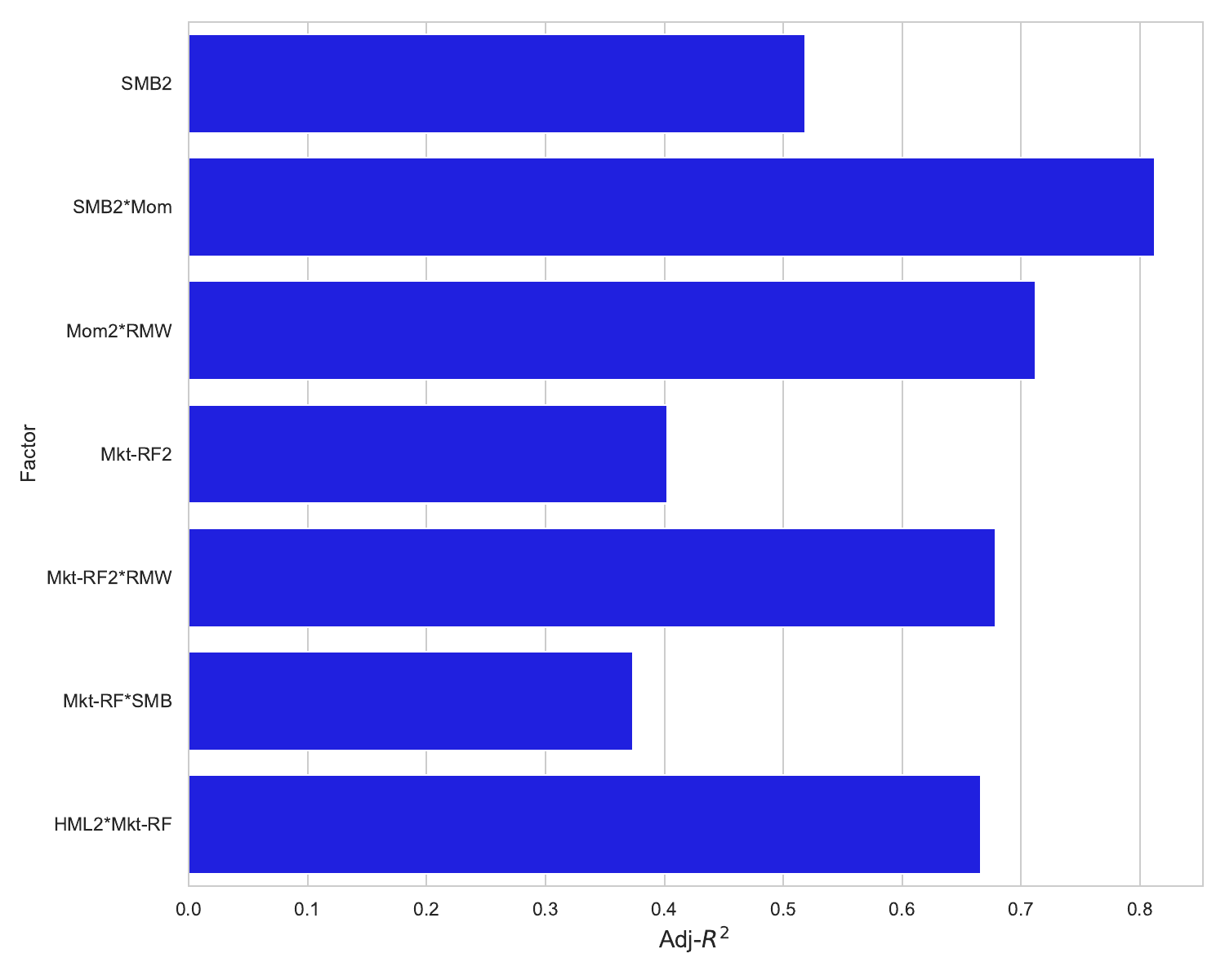}
\end{figure}

%\pagebreak{}

\begin{table}[H]
\caption{Cross-Sectional Performance Restricted Models\label{tab:Cross-sectional-performance-Restricted-Models}}

\medskip{}

\begin{minipage}{\textwidth}
\small\singlespacing

This table reports the results from the second step of the FMB procedure,
imposing the restriction that the risk prices of the tradable factors
must equal their sample means. The tradable factors are the FF5M
factors. The higher-order factors are those selected by the forward
selection FMB procedure (see Table \ref{tab:Cross-sectional-performance}).
The table reports the adjusted cross-sectional R-squared, the estimate
for the intercept ($\alpha$) and associated t-statistic. The t-statistic
is based on Newey-West corrected standard errors.
\end{minipage}

\medskip{}

\centering{}%
\begin{tabular}{llccc}
\hline
\# & Model & Adj. R-squared & $\alpha$ & t-stat ($\alpha$)\tabularnewline
\hline
 %&  &  &  & \tabularnewline
1 & CAPM & -0.439 & 0.001 & 0.28\tabularnewline
2 & FF3 & -0.371 & 0.001 & 0.334\tabularnewline
3 & FF5 & -0.111 & 0.004 & 2.263\tabularnewline
4 & FF5M & 0.022 & 0.006 & 2.839\tabularnewline
5 & Higher-Order & 0.405 & 0.004 & 2.663\tabularnewline
% &  &  &  & \tabularnewline
\hline 
\end{tabular}
\end{table}

%\pagebreak{}

\begin{table}[H]
\caption{Comparison of R-squareds with Random Train and Test Sample\label{tab:Comparison-of-R-squareds-Robust}}
\medskip{}
\begin{minipage}{\textwidth}
\small\singlespacing

This table reports the in-sample and out-of-sample R-squareds in the
cross-sectional regression step of the Fama-MacBeth method for different
models. We randomly generate training and test samples of equal length
by randomly drawing monthly returns without replacement, and estimate
for each asset the covariance with respect to the factors and the
SDF loadings on the training sample, and compare actual mean returns
with model predictions in the test sample. For the test sample, we
report R-squareds after re-centering the predicted values from
the model. For the FF5M + Higher-Order model, we consider the higher-order
factors selected by the forward selection FMB procedure (see Table \ref{tab:Cross-sectional-performance}).
\end{minipage}
\medskip{}

\centering{}%
\begin{tabular}{c>{\centering}p{2cm}>{\centering}p{2cm}>{\centering}p{2cm}>{\centering}p{2cm}>{\centering}p{2cm}}
\hline 
 & (1) & (2) & (3) & (4) & (5)\tabularnewline
 & CAPM & FF3 & FF5 & FF5M & Higher-Order\tabularnewline
\hline 
%&  &  &  &  & \tabularnewline
$R_{train}^{2}$ & 0.06 & 0.167 & 0.274 & 0.316 & 0.498\tabularnewline
$R_{test}^{2}$ & 0.063 & 0.04 & 0.076 & 0.085 & 0.14\tabularnewline
 %&  &  &  &  & \tabularnewline
\hline 
\end{tabular}
\end{table}
%\pagebreak{}

\begin{table}[H]
\caption{Loadings of Alternative Zoo Factors\label{tab:Significant-Loadings-Chen-Zimmermann}}
\medskip{}
\begin{minipage}{\textwidth}
\small\singlespacing
This table reports the fraction of zoo factors with loading significantly
different from zero at the 5\% confidence level on each of the FF5M
factors and higher-order factors selected by the FS-FMB procedure. Significance is evaluated by t-statistics based
on Newey-West corrected standard errors. The zoo factors are 159 factors
constructed in \citet{ChenZimmermann2021}.
\end{minipage}
\medskip{}

\centering{}%
\begin{tabular}{lcc}
\hline 
Factor & Frac Sig 5\% & Frac Sig 5\%\tabularnewline
\hline 
 %&  & \tabularnewline
Mkt-RF & 0.478 & 0.509\tabularnewline
SMB & 0.572 & 0.579\tabularnewline
HML & 0.459 & 0.421\tabularnewline
RMW & 0.572 & 0.553\tabularnewline
CMA & 0.428 & 0.491\tabularnewline
Mom & 0.535 & 0.421\tabularnewline
SMB2 &  & 0.365\tabularnewline
SMB2{*}Mom &  & 0.478\tabularnewline
Mom2{*}RMW &  & 0.358\tabularnewline
Mkt-RF2 &  & 0.302\tabularnewline
Mkt-RF2{*}RMW &  & 0.208\tabularnewline
Mkt-RF{*}SMB &  & 0.302\tabularnewline
HML2{*}Mkt-RF &  & 0.302\tabularnewline
\hline 
 %&  & \tabularnewline
\# zoo factors & 159 & 159\tabularnewline
 %&  & \tabularnewline
\hline 
\end{tabular}
\end{table}

%\pagebreak{}

\begin{table}[H]
\caption{Forward Selection FMB procedure with Alternative
Higher-Order Factors\label{tab:Forward-Regression-Method_Robust}}

\medskip{}

\begin{minipage}{\textwidth}
\scriptsize

This table reports the results of the estimation of the FS-FMB procedure
for alternative sets of higher-order factors. Panel A considers the
FF5M factors and all the higher-order factors of degree 2 (i.e., $f_{i}\times f_{j}$
and $f_{i}^{2}$). Panel B considers the FF5M factors and all the
higher-order factors up to degree 4 (i.e., $f_{i}\times f_{j}$, $f_{i}^{2}\times f_{j}$,
$f_{i}^{2}\times f_{j}^{2}$, $f_{i}^{3}\times f_{j}$, $f_{i}^{2}$,
$f_{i}^{3}$, and $f_{i}^{4}$). Panel C considers the FF5M factors
and all the pairwise interactions of degree 2 and 3 (i.e., $f_{i}\times f_{j}$
and $f_{i}^{2}\times f_{j}$). Panel D considers the FF5M factors and
all the powers of degree 2 and 3 (i.e., $f_{i}^{2}$ and $f_{i}^{3}$).
For each step, we report the selected factor ($h_{j}$), the adjusted cross-sectional
R-squared, the estimate for the intercept ($\alpha$) and associated
t-statistic. The t-statistic is based on Newey-West corrected standard
errors.

\end{minipage}

\medskip{}

\centering{}%
\begin{tabular}{lcccc}
\hline 
Step & $h_{j}$ & Adj. R-squared & $\alpha$ & t-stat ($\alpha$)\tabularnewline
\hline 
\multicolumn{5}{c}{\textbf{Panel A: higher-order degree 2}}\tabularnewline
\hline 
 %&  &  &  & \tabularnewline
1 & SMB2 & 0.41 & 0.003 & 1.821\tabularnewline
2 & HML{*}Mom & 0.464 & 0.004 & 2.033\tabularnewline
3 & Mkt-RF2 & 0.503 & 0.005 & 2.738\tabularnewline
4 & SMB{*}Mom & 0.517 & 0.005 & 2.692\tabularnewline
 %&  &  &  & \tabularnewline
\hline 
\multicolumn{5}{c}{\textbf{Panel B: higher-order degree up to 4}}\tabularnewline
\hline 
 %&  &  &  & \tabularnewline
1 & SMB2{*}CMA2 & 0.423 & 0.005 & 2.581\tabularnewline
2 & Mkt-RF2{*}SMB2 & 0.464 & 0.004 & 2.329\tabularnewline
3 & Mom3{*}CMA & 0.485 & 0.004 & 2.062\tabularnewline
4 & Mkt-RF2{*}RMW & 0.498 & 0.004 & 1.931\tabularnewline
5 & HML2 & 0.528 & 0.003 & 1.589\tabularnewline
6 & Mkt-RF2 & 0.547 & 0.004 & 2.271\tabularnewline
7 & CMA2{*}Mom & 0.57 & 0.004 & 2.13\tabularnewline
8 & CMA2{*}SMB & 0.587 & 0.004 & 2.074\tabularnewline
9 & Mkt-RF2{*}Mom & 0.6 & 0.004 & 2.132\tabularnewline
 %&  &  &  & \tabularnewline
\hline 
\multicolumn{5}{c}{\textbf{Panel C: interactions degree 2 and 3}}\tabularnewline
\hline 
 %&  &  &  & \tabularnewline
1 & SMB2{*}Mkt-RF & 0.394 & 0.003 & 1.456\tabularnewline
2 & CMA2{*}SMB & 0.437 & 0.003 & 1.429\tabularnewline
3 & Mkt-RF2{*}SMB & 0.466 & 0.004 & 2.023\tabularnewline
4 & CMA2{*}RMW & 0.487 & 0.004 & 2.084\tabularnewline
5 & HML{*}CMA & 0.509 & 0.004 & 2.321\tabularnewline
6 & CMA2{*}Mkt-RF & 0.527 & 0.004 & 2.248\tabularnewline
7 & CMA2{*}Mom & 0.542 & 0.004 & 2.044\tabularnewline
 %&  &  &  & \tabularnewline
\hline 
\multicolumn{5}{c}{\textbf{Panel D: powers degree 2 and 3}}\tabularnewline
\hline 
 %&  &  &  & \tabularnewline
1 & SMB2 & 0.41 & 0.003 & 1.821\tabularnewline
2 & Mkt-RF2 & 0.444 & 0.005 & 2.478\tabularnewline
3 & RMW3 & 0.467 & 0.005 & 2.807\tabularnewline
 %&  &  &  & \tabularnewline
\hline 
\end{tabular}
\end{table}

%\pagebreak{}

\begin{table}[H]
\caption{Forward Selection FMB Procedure Starting
from the CAPM\label{tab:Forward-Regression-Method_Robust-1}}

\medskip{}

\begin{minipage}{\textwidth}
\small\singlespacing

This table reports the results from the second step of the FMB method
with selection by the FS-FMB procedure starting from the CAPM. At each
step, we augment the CAPM by one additional factor selected from the
remaining FF5M factors and the higher-order factor ($h_{j}$) which
maximizes the adjusted cross-sectional R-squared. The method stops
when the improvement in the R-squared in a given step is smaller than
1 pp. The table reports the adjusted cross-sectional R-squared, the estimate for the
intercept ($\alpha$), and the associated t-statistic. The t-statistic
is based on Newey-West corrected standard errors.

\end{minipage}

\medskip{}

\centering{}%
\begin{tabular}{lcccc}
\hline 
Step & $h_{j}$ & Adj. R-squared & $\alpha$ & t-stat ($\alpha$)\tabularnewline
\hline 
 %&  &  &  & \tabularnewline
1 & HML2 & 0.285 & 0.007 & 2.916\tabularnewline
2 & Mom2{*}Mkt-RF & 0.323 & 0.005 & 2.153\tabularnewline
3 & SMB2{*}Mkt-RF & 0.377 & 0.006 & 2.48\tabularnewline
4 & Mkt-RF2 & 0.406 & 0.005 & 2.405\tabularnewline
5 & HML{*}Mom & 0.429 & 0.004 & 2.337\tabularnewline
6 & Mom & 0.485 & 0.005 & 2.682\tabularnewline
7 & CMA2{*}Mom & 0.499 & 0.006 & 3.087\tabularnewline
8 & SMB2{*}HML & 0.509 & 0.005 & 2.777\tabularnewline
9 & Mom2{*}SMB & 0.537 & 0.005 & 2.976\tabularnewline
10 & Mkt-RF3 & 0.552 & 0.005 & 2.722\tabularnewline
 %&  &  &  & \tabularnewline
\hline 
\end{tabular}
\end{table}

\end{appendix}

%%%%%%%%%%%%%%%%%%%%%%%%%%%%%%%%%%%%%%%%%%%%%%
%% Bibliography:                            %%
%%%%%%%%%%%%%%%%%%%%%%%%%%%%%%%%%%%%%%%%%%%%%%

%\newpage
%% If your bibliography is in bibtex format, uncomment commands:
%\bibliographystyle{ecta-fullname} % Style BST file
%\bibliography{ref,bib_interactions}  % Bibliography file (usually '*.bib')

\end{document}